\newcommand{\be}{\begin{equation}}
\newcommand{\ee}{\end{equation}}
\newcommand{\LL}{\Lambda}
\renewcommand{\aa}{\mathbf{a}}
\newcommand{\as}{{a}}
\newcommand{\z}{\mathbf{z}}
\newcommand{\fI}{i}
\newcommand{\fId}{i_I}
\newcommand{\fR}{r}
\newcommand{\fS}{s}
\newcommand{\fE}{e}
\newcommand{\fP}{i_P}
\newcommand{\fA}{i_A}
\newcommand{\h}{h}
\newcommand{\pI}{p_i}
\newcommand{\pR}{p_r}
\newcommand{\pS}{p_s}
\begin{document}

\title{Control with uncertain data of socially structured compartmental epidemic models
}

\titlerunning{Control with uncertain data of epidemic models}        

\author{Giacomo Albi         \and
        Lorenzo Pareschi \and
        Mattia Zanella
}


\institute{G. Albi \at
              Department of Computer Science \\
              University of Verona, Italy \\
              \email{giacomo.albi@univr.it}           
           \and
           L. Pareschi \at
              Department of Mathematics and Computer Science \\
              University of Ferrara, Italy \\
               \email{ lorenzo.pareschi@unife.it}
               \and
               M. Zanella \at
               Department of Mathematics "F. Casorati" \\
               University of Pavia, Italy \\
               \email{mattia.zanella@unipv.it}         
}

\date{Received: date / Accepted: date}

\maketitle

\begin{abstract}
The adoption of containment measures to reduce the amplitude of the epidemic peak is a key aspect in tackling the rapid spread of an epidemic. Classical compartmental models must be modified and studied to correctly describe the effects of forced external actions to reduce the impact of the disease. {The importance of social structure, such as the age dependence that proved essential in the recent COVID-19 pandemic, must be considered, and in addition, the available data are often incomplete and heterogeneous, so a high degree of uncertainty must be incorporated into the model from the beginning. In this work we address these aspects, through an  optimal control formulation of a socially structured epidemic model in presence of uncertain data.} After the introduction of the optimal control problem, we formulate an instantaneous approximation of the control that allows us to derive new feedback controlled compartmental models capable of describing the epidemic peak reduction. 
The need for long-term interventions shows that alternative actions based on the social structure of the system can be as effective as the more expensive global strategy. The timing and intensity of interventions, however, is particularly relevant in the case of uncertain parameters on the actual number of infected people. Simulations related to data {from the first wave } of the recent COVID-19 outbreak in Italy are presented and discussed.\keywords{Epidemic modelling \and Uncertainty quantification \and Social structure \and Optimal control \and Non-pharmaceutical interventions \and COVID-19}
\end{abstract}

\section{Introduction}
From the digital tracking systems of the Koreans to UK's initial choice of not wanting to do anything to counter the spread of the virus, passing through the militarized quarantines of the Chinese and those less authoritarian and more involved of the Italians, the reaction of different countries to the COVID-19 outbreak has shown a series of very different approaches that can also be explained considering the different cultural and political attitudes of the countries concerned.

In all cases, after an initial phase, even those governments that were less restrictive in the face of the pandemic's inexorable progress had to take strong containment measures.
There's a graph that has become the symbol of the COVID-19 pandemic most of all. It shows in a simple and intuitive way the importance of slowing down the spread of an epidemic as much as possible ("flattening the curve"), so that the healthcare system can take care of all the sick without collapsing. Its success has helped to save many lives, raising awareness of good practices to slow down an epidemic: stay as much as possible at home, reduce social interactions and wash your hands often and well.

These ''non-pharmaceutical'' intervention measures, however, entail significant social and economic costs and thus policy makers may not be able to maintain them for more than a short period of time. Therefore, a modeling approach based on a limited time horizon that takes into account the social structure of the population is necessary in order to optimize containment strategies. Most current research, however, has focused on control procedures aimed at optimizing the use of vaccinations and medical treatments \cite{BGO,BBSG,SCCC10,CG} and only recently the problem has been tackled from the perspective of non-pharmaceutical interventions \cite{LML10,MRPL}. In addition, data collected by governments are often incomplete and heterogeneous, so a high degree of uncertainty needs to be incorporated into predictive models \cite{Capaldi_etal,CCVH,Chowell,JRGL,MKZC,Roberts}. This is the case of the spreading of COVID-19 worldwide, which have been often mistakenly underestimated due to a combination of factors, including deficiencies in surveillance and diagnostic capacity, and the large number of infectious but asymptomatic individuals \cite{JRGL,MKZC,Zhang_etal}.

For almost a hundred years, mathematical models have been used to describe the spread of epidemics \cite{KMK}. The models currently used largely originate from the model proposed by Kermack and McKendrick at the beginning of last century. Even if the model contains  strong simplification assumptions, the concepts introduced through this model are essential to provide a first intuition on the dynamics of epidemics, an intuition that remains confirmed in more complex models, albeit with numerous modifications (see for example \cite{H00,CS78}). The model provides for the division of the population into compartments, the susceptible,  healthy individuals who may be infected, the infectious, who have already contracted the disease and can transmit it, and the removed, compartment that includes those who are healed and immune.

The hypothesis made by Kermack and McKendrick is that of the homogeneous "mixing"; that is, it is assumed that each individual has the same probability of contacting any other individual in the population. One understands how this hypothesis is unrealistic: we are often in contact with people from our family, our workplace, school class, group of friends and very rarely with those who live in a different place, have different ages and professions. In recent years, therefore, computational models have been developed that try to take into account additional social characteristics of individuals in order to arrive at more accurate predictions by keeping, however, the simplicity of compartmental models \cite{CHALL89,GFMDC12,LGC12,H96,IMP,FP}.

In this paper starting from a general compartmental model with social structure, {typically the age dependence}, we consider the external action of a policy maker that aims at reducing the spread of the epidemics by applying non pharmaceutical intervention measures, such as social distancing and quarantine. The mathematical problem is formulated as an optimal control problem characterized by a functional cost whose objective is to minimize the number of infectious people in a given time horizon. Through an instantaneous control strategy we compute an explicit feedback control that allows us to derive new SIR-type compartmental models capable of describing the epidemic peak reduction. Previously, this type of approach has been used successfully in the case of social models of consensus \cite{Albi1,Albi2,Albi3,Albi4,BFK,CFPT,DPT}.  

The feedback controlled models are subsequently extended to take into account the presence of uncertain infection parameters and data. In fact, to have reliable forecasts it is of paramount importance to consider the presence of uncertain quantities as a structural feature of the epidemic dynamics. {This aspect is of paramount importance in the case of pandemic COVID-19, in which undetected infectious individuals play a key role in the spread of the disease. In this regard, it is worth noting that our methodology can be easily extended to the case of more complex compartmental models. The decision to limit ourselves to a simple SIR-type compartmentalization was related on the one hand to the increased complexity given by the dependence on the social structure, which proved to be crucial in the case of the COVID-19 pandemic, and on the other hand to the introduction of a systematic uncertainty in the number of infected to avoid a complex sub-compartmentalization of the infectious population and the consequent difficulties due to parameter identification and the inability to follow a data-driven approach \cite{Giordano,IC}.} 
From a mathematical point of view, we can rely on the methods of uncertainty quantification (UQ) to obtain efficient and accurate solutions based on stochastic orthogonal polynomials for the differential model with random inputs \cite{X}. 

Few results are actually available regarding methods of UQ in epidemic systems, we mention in this direction \cite{CCVH,Chowell,Capaldi_etal,Roberts}. The main idea is to increase the dimensionality of the problem adding the possible sources of uncertainty from the very beginning of the modeling. Hence, we extrapolate statistics by looking at the so-called quantities of interest, i.e. statistical quantities that can be obtained from the solution and that give some global information with respect to the input parameter like expected solution of the problem or higher order moments. Several techniques can be adopted for the approximation of the quantities of interest, in this paper we adopt stochastic Galerkin methods that allow to reduce the problem to a set of deterministic equations for the numerical evaluation of the solution in presence of uncertainties. 
{Compared to conventional Monte Carlo methods, based on stochastic sampling, these methods guarantee an exponential convergence in the case of smooth uncertainty distributions and allow a much more accurate and efficient estimation of random parameters.} We refer the interested reader to recent surveys and monographs on the topic \cite{DPZ,JP,PareschiUQ,X}. 

In particular, we consider the case in which the policy maker applies his control based on several possible estimators on the actual number of infected people. The need for long-term interventions shows that alternative actions based on the social structure of the system can be as effective as the more expensive optimal strategy. The importance of the timing and intensity of interventions is particularly relevant in the case of uncertain parameters on the actual number of infected people.

The rest of the manuscript is organized as follows. In Section 2 we introduce the structured social SIR model and formulate the mathematical approach for containment measures to reduce the spread of the disease. Next, a feedback controlled model used in the subsequent analysis is derived within a short time horizon approximation. In Section 3 we generalize the feedback controlled model to take into account the presence of uncertainties. Section 4 is dedicated to the presentation of some numerical examples including {applications to the first wave of the COVID-19 epidemic in Italy.} In separate Appendices we provide details {on the generalizations of the present approach to more realistic epidemic models for COVID-19 including additional compartmentalizations,} on the stochastic Galerkin method employed to efficiently address the uncertainties, and on the social interaction matrices characterizing the contact rates.

\section{Control of epidemic dynamics}
The starting model in our discussion is a SIR-type compartmental model with a social structure. The presence of a social structure is in fact essential in deriving appropriate sustainable control techniques from the population for a protracted period, as in the case of the recent COVID-19 epidemic. {We will discuss in Section 3 how to modify the model through the introduction of a stochastic parameter that takes into account the dependence on uncertain data, and thus implicitly introduce the role of undetected infectious in the dynamics.}
 
\subsection{Compartmental models with social structure}
The heterogeneity of the social structure, which impacts the diffusion of the infective disease, is characterized by the vector $\aa\in \LL \subseteq \mathbb{R}^{d_{\as}}$ characterizing its social state and whose components summarize, for example, the age of the individual, its number of social connections or its economic status \cite{H96,H00}.
We denote by $\fS(\aa,t)$, $\fI(\aa,t)$ and $\fR(\aa,t)$, the  distributions at time $t > 0$ of susceptible, infectious and recovered individuals, respectively in relation to specific social characteristics. We assume that the rapid spread of the disease and the low mortality rate allows to ignore changes in the social structure, such as the aging process, births and deaths.

Consequently, for a given population of total number $N$, we have that
\begin{align*}
&\fS(\aa,t)+\fI(\aa,t)+\fR(\aa,t) = f (\aa), \qquad \int_{\LL} f(\aa)d\aa  = N,
\end{align*}
where $f (\aa)$ is the total distribution of the social features defined by the vector $\aa$. Hence, we recover the total fraction of the population which belong to the susceptible, infected and recovered as follows 
\begin{equation}
S(t)=\int_{\LL}\fS (\aa,t)\,d\aa,\quad I(t)=\int_{\LL}\fI (\aa,t)\,d\aa,\quad R(t)=\int_{\LL}\fR (\aa,t)\,d\aa. 
\end{equation}

In a situation where changes in the social features act on a slower scale with respect to the spread of the disease, the socially structured compartmental model follows the dynamics \begin{equation}\label{eq:SIR}
\begin{split}
\frac{d}{dt} \fS(\aa,t)&= - \fS(\aa,t)\frac1{N}\int_{\LL} \beta(\aa,\aa_*){\fI(\aa_*,t)}\ d\aa_* \\
\frac{d}{dt} \fI(\aa,t) &=  \fS(\aa,t)\frac1{N}\int_{\LL} \beta(\aa,\aa_*){\fI(\aa_*,t)}\ d\aa_* - \gamma (\aa) \fI(\aa,t) \\
\frac{d}{dt} \fR(\aa,t) &= \gamma(\aa) \fI(\aa,t),
\end{split}
\end{equation}
where the function $\beta(\aa,\aa_*) \geq 0$ represents the uncertain interaction rate among individuals with different social features and $\gamma(\aa) \geq 0$ the recovery rate which may depend on the social feature. 

Often, in socially structured models the interaction rate between people is assumed to be separable, and proportionate to the activity level of the social feature \cite{H96,H00}, as follows
\begin{equation}
\beta (\aa,\aa_*) = \dfrac{b (\aa)b (\aa_*)}{\int_0^{+\infty} b (\aa)f (\aa)\ d \aa} 
\end{equation}
with $b (\aa)$ the average number of people contacted by a person with social feature $\aa$ per unit time. Alternative approaches are based on preferential mixing \cite{GFMDC12,CHALL89}. Specific examples of age-dependent social interaction matrices are reported in Appendix \ref{app:data}.

We introduce the usual normalization scaling 
\[
\frac{\fS(\aa,t)}{N}\to \fS(\aa,t),\quad\frac{\fI(\aa,t)}{N}\to \fI(\aa,t),\quad\frac{\fR(\aa,t)}{N}\to \fR(\aa,t),\quad 
\int_\Lambda f(\aa) d\aa= 1, 
\]
and observe that the quantities $S(t)$, $I(t)$ and $R(t)$ satisfy the conventional SIR dynamics
\begin{equation}\label{eq:SIRm}
\begin{split}
\frac{d}{dt} S(t)&= - \int_\Lambda\fS(\aa,t)\int_{\LL} \beta(\aa,\aa_*){\fI(\aa_*,t)}\ d\aa_*\,d\aa \\
\frac{d}{dt} I(t) &=  \int_\Lambda\fS(\aa,t)\int_{\LL} \beta(\aa,\aa_*){\fI(\aa_*,t)}\ d\aa_*\,d\aa - \int_\Lambda\gamma (\aa) \fI(\aa,t)\,d\aa,
\end{split}
\end{equation}
where the fraction of recovered is obtained from $R(t)=1-S(t)-I(t)$. We refer to \cite{H96,H00} for analytical results concerning model \eqref{eq:SIR} and \eqref{eq:SIRm}. In the following we will adopt the simple compartmental model \eqref{eq:SIR} to derive our feedback controlled formulation in presence of uncertainty. {The extension to more realistic compartmental models in epidemiology, designed specifically for the COVID-19 pandemic, can be carried out in a similar way \cite{GattoPNAS}}.

In order to simplify the description, we will consider the {one-dimensional} case $d_a=1$ and set the social dependence as the age $a$ of the individual because of  its importance in epidemic dynamics. It is clear, however, that similar containment procedures {can impact also on other social features, like the wealth of the individuals \cite{DPTZ}}.  We will first formulate the feedback controlled SIR model in the deterministic case and subsequently extend our approach to the presence of uncertain parameters.

\subsection{Optimal control of structured compartmental model}
In order to define the action of a policy maker introducing a control over the system based on social distancing and other containment measures linked to the social structure we consider an optimal control framework.  The choice of an appropriate functional is problem dependent \cite{SCCC10}. 

In our setting, we account the minimization of the total number of the infected population $I(t)$ through the an age dependent control action depending both on time and pairwise interactions among individuals with different ages. 

Thus, we introduce the optimal control problem
\begin{equation}\label{eq:func}
\min_{u\in \mathcal U}J(u):= \int_0^T\psi(I(t))  dt+\dfrac{1}{2}\int_0^T\int_{\LL\times\LL} {\nu(\as,\as_*,t)}|u(\as,\as_*,t)|^2\ d\as d\as_*  dt,
\end{equation}
subject to
\begin{equation}\label{eq:SIRc}
\begin{split}
\frac{d}{dt} \fS(\as,t)&= - \fS(\as,t)\int_{\LL} (\beta(\as,\as_*) -u(\as,\as_*,t)){\fI(\as_*,t)}\ d\as_* \\
\frac{d}{dt} \fI(\as,t) &=  \fS(\as,t)\int_{\LL} (\beta(\as,\as_*) -u(\as,\as_*,t)){\fI(\as_*,t)}\ d\as_* - \gamma (\as) \fI(\as,t) \\
\frac{d}{dt} \fR(\as,t) &= \gamma (\as) \fI(\as,t)
\end{split}
\end{equation}
with initial condition $\fI(\as,0) = \fI_0(\as)$, $\fS(\as,0) = \fS_0(\as)$ and $\fR(\as,0) = r_0(\as)$.

The number of infected individual is measured by a monotone increasing  function $\psi(\cdot)$ such that $\psi:[0,1]\to\mathbb{R}_{+}$. This function models the policy maker's perception of the impact of the epidemic by the number of people currently infected {and in the sequel will be referred to as \emph{perception function}}. For example $\psi(I)=I^q/q$, for $q>1$ implies an underestimation of the actual number of infected corresponding to $q=1$. The control aims to minimize this measure of the total infected population by reducing the rate of interaction between individuals. We consider a quadratic cost for its actuation.

Such control is restricted to the space of admissible controls 
\[
\mathcal U \equiv \left\{ u\,|\, 0 \leq u(\as,\as_*,t) \leq \min\{M,\beta(\as,\as_*)\},\,\, \forall\, (\as,\as_*,t)\in \Lambda^2\times[0,T],\, M>0\right\},
\] 
which ensure the admissibility of the solution for \eqref{eq:SIRc}. {The above restriction on admissible controls can be relaxed if we consider controls that violate the previous condition locally but preserve the inequality in integral form after integration against $i(a_*,t)$.}   

The solution to problem \eqref{eq:func}-\eqref{eq:SIRc} is computed through the optimality conditions obtained from the Euler-Lagrangian as follows
\begin{align*}
\mathcal{L}&(\fS,\fI,\fR,\pS,\pI,\pR,u) = J(u) + \cr&\int_0^T\int_{\LL} \pS\cdot\left(\frac{d}{dt}\fS  +{\fS(\as,t)}{}\int_{\LL}\left(\beta (\as,\as_*)-u (\as,\as_*,t)\right) \fI(\as_*,t)d\as_*\right)\ d\as\,dt\cr
&+\int_0^T\int_{\LL}\pI\cdot\left(\frac{d}{dt}\fI - {\fS(\as,t)}{}\int_{\LL}\left(\beta (\as,\as_*)-u (\as,\as_*,t)\right) \fI(\as_*,t)d\as_*+\gamma (\as)\fI(\as,t)\right) d\as\,dt\cr
& + \int_0^T\int_{\LL} \pR\cdot\left( \dfrac{d}{dt}\fR - \gamma(\as)i(\as,t)\right) d\as\,dt
\end{align*}
where $\pS (\as,t),\pI (\as,t),\pR (\as,t)$ are the associated Lagrangian multipliers. 
%
By computing the variations with respect to $(s,i,r)$ we retrieve the adjoint system
\begin{equation}
\begin{split}
\label{eq:opt1}
\frac{d}{dt} \pS &= (\pS-\pI)\int_{\LL}(\beta (\as,\as_*)-u (\as,\as_*,t)){\fI(\as_*,t)}{}d\as_*\\
\frac{d}{dt} \pI &= \psi'(I(t))+\int_{\LL}(\pS(\as_*,t)-\pI(\as_*,t))(\beta (\as_*,\as)-u (\as_*,\as,t)){\fS(\as_*,t)}{}d\as_*+\gamma (\as)\pI
\end{split}
\end{equation}
with terminal conditions  $ \pS (\as,T) =0, \pI (\as,T) =0$ and $\pR (\as,T)= 0$.  
Note that the contribution of $\pR (\as,t)$ vanishes since the control does not act directly on population $R$, and the removed population is not considered in the minimization of the functional. The optimality condition reads
\begin{align}\label{eq:opt2}
{\nu(\as,\as_*,t)} u (\as,\as_*,t)&= \left(\pS-\pI\right)\fS( \as,t){i(\as_*,t)}{}.
\end{align}

The optimality conditions \eqref{eq:opt1}-\eqref{eq:opt2}
are first order necessary conditions for the optimal control {$u(\as,\as_*,t)$}. In order to be admissible then the control reads
\[
u (\as,\as_*,t)=\max\left\{0, \min\left\{\dfrac{\pS-\pI}{{\nu(\as,\as_*,t)} }\fS( \as,t){i(\as_*,t)},\phi_{M,\beta}(\as,\as_*)\right\}\right\},
\]
where  $\phi_{M,\beta}(\as,\as_*)=\min\{\beta(\as,\as_*),M\}$.

The approach just described, however, is generally quite complicated when there are uncertainties as it involves solving simultaneously the forward problem \eqref{eq:func}- \eqref{eq:SIRc} and the backward problem \eqref{eq:opt1}- \eqref{eq:opt2}. Moreover, the assumption that the policy maker follows an optimal strategy over a long time horizon seems rather unrealistic in the case of a rapidly spreading disease such as the COVID-19 epidemic. {Let us emphasize that extending the above optimal control formulation to more complex compartmental models designed specifically for COVID-19, like SEPIAR or SIDHARTE \cite{GattoPNAS,Giordano}, can be done by generalizing the control functional \eqref{eq:func} to include, for example, the hospitalized compartment, or other specific indicators that can be measured from the data. For all of these models, the feedback control strategy described in the next section does not change substantially.} {We refer the reader to Appendix \ref{sec:gen} for more details.}

\subsection{Feedback controlled compartmental models}\label{sect:ins}
In this section we consider short time horizon strategies which permits to derive suitable feedback controlled models. These strategies are suboptimal with respect the original problem \eqref{eq:func}-\eqref{eq:SIRc} but they have proved to be very successful in several social modeling problems \cite{Albi1,Albi2,Albi3,Albi4,DPT}. To this aim, we consider a short time horizon of length $h>0$ and formulate a time discretize optimal control problem through the functional $J_h(u)$ restricted to the interval $[t,t+\h]$, as follows
\begin{equation}\label{eq:func_I}
\min_{u\in\mathcal U} J_h(u) := \psi(I(t+\h))+\frac{1}{2}\int_{\Lambda\times\Lambda}{\nu(\as,\as_*,t)}|u(\as,\as_*,t)|^2d\as d\as_*
\end{equation}
subject to
\begin{align}\label{eq:SIR_I}
\fS(\as,t+\h) &=  \fS(\as,t) - \h {\fS(\as,t)}{}\int_{\LL} \left(\beta(\as,\as_*) - u(\as,\as_*,t)\right)\fI(\as_*,t) d\as_* \\
\label{eq:SIR_I2}
\fI(\as,t+\h) &= \fI(\as,t)  + \h {\fS(\as,t)}{}\int_{\LL} \left(\beta(\as,\as_*) - u(\as,\as_*,t)\right)\fI(\as_*,t) d\as_*- \h\gamma(\as) \fI(\as,t).
\end{align}
By recalling that the macroscopic information on the infected is
\[
I(t+\h)= I(t)+ \h \int_{\LL} \left[ {\fS(\as,t)}{}\int_{\LL} \left(\beta(\as,\as_*) - u(\as,\as_*,t)\right)\fI(\as_*,t) d\as_*- \gamma(\as) \fI(\as,t)\right]\ d\as,
\] 
we can derive the minimizer of $J_h$ computing $D_u J_h(u)\equiv 0$. We retrieve the following nonlinear equation
\begin{align}\label{eq:ic}
{\nu(\as,\as_*,t)}u (\as,t)={\h}\fS(\as,t) i(\as_*,t)\psi'(I(t+h)).
\end{align}
{In order to pass to the limit $h\to 0$ we must rescale the penalization term as {$\nu(\as,\as_*,t) = \h \kappa(\as,\as_*,t)$} so that we can introduce the above instantaneous strategy directly in the discrete system \eqref{eq:SIR_I}-\eqref{eq:SIR_I2}}. 

The resulting controlled dynamic, {corresponding to the feedback controlled continuous system \eqref{eq:SIRc}}, reads as follows
\begin{subequations}\label{eq:SIR_ic}
\begin{align}
\frac{d}{dt} \fS(\as,t)&= - \fS(\as,t)\int_{\LL} \Big(\beta(\as,\as_*) -{\frac{\fS(\as,t)\fI(\as_*,t) \psi'(I(t))}{\kappa(\as,\as_*,t)}}\Big)\fI(\as_*,t)d\as_*\\
\frac{d}{dt} \fI(\as,t) &=  \fS(\as,t)\int_{\LL} \Big(\beta(\as,\as_*) -{\frac{\fS(\as,t)\fI(\as_*,t) \psi'(I(t))}{\kappa(\as,\as_*,t)}}\Big)\fI(\as_*,t)d\as_*- \gamma (\as) \fI(\as,t) \\
\frac{d}{dt} \fR(\as,t) &= \gamma (\as) \fI(\as,t).
\end{align}
\end{subequations}
In what follows we provide a sufficient conditions for the admissibility of the instantaneous control in terms of the penalization term {$\kappa(\as,\as_*,t)$}. Indeed we want to assure that the dynamics preserve the monotonicity of the number of susceptible population $s(a,t)$.

\begin{proposition} Let $\beta(\as,\as_*)\geq\delta>0$, and $\psi'(\cdot)$ {a monotonically non decreasing function}, then  for all $(\as,\as_*)\in\LL\times\LL$ {and $t>0$}, solutions to \eqref{eq:SIR_ic} are admissible if the penalization $\kappa$ satisfies the following inequality
\begin{equation}\label{eq:cond}
	\delta {\kappa(\as,\as_*,t)}\geq {s_0(\as)\bar i(\as_*)\psi'(\bar I)}{},\qquad \forall (\as,\as_*)\in\LL\times\LL
\end{equation}
where $\bar i(\as)$ and $\bar I$ are respectively the peak reached by the infected of {age} $\as$ and by the total population {of infected}.
\end{proposition}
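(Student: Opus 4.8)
The plan is to show that the controlled dynamics keep $s(\as,t)$ non-increasing in $t$, which is the natural notion of admissibility here: the control $u$ should never exceed $\beta$, for otherwise the effective contact rate $\beta - u$ turns negative and susceptibles would start increasing, which is unphysical. So the strategy is to rewrite the right-hand side of the $\fS$-equation in \eqref{eq:SIR_ic} and demand that the integrand stay non-negative, i.e. that
\[
\beta(\as,\as_*) - \frac{1}{\kappa(\as,\as_*)}\fS(\as,t)\fI(\as_*,t)\psi'(I(t)) \geq 0
\qquad \text{for all } t .
\]
If this holds pointwise in $(\as,\as_*)$ and for all $t$, then $\frac{d}{dt}\fS(\as,t)\leq 0$, the solution stays in $[0,f(\as)]$, and in particular all compartments remain non-negative, so the solution is admissible.

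**Next I would** bound each factor in the subtracted term by its worst-case value over time. Since we are assuming — and this is the point that needs a small argument — that $\fS(\as,\cdot)$ is non-increasing, we have $\fS(\as,t)\leq \fS(\as,0)=s_0(\as)$ for all $t\geq 0$. Likewise $\fI(\as_*,t)\leq \bar\imath(\as_*)$ and $I(t)\leq \bar I$ by definition of the peaks, and since $\psi$ is monotone increasing with $\psi'\geq 0$ we get $\psi'(I(t))\leq \psi'(\bar I)$ (here one uses, implicitly, convexity of $\psi$ so that $\psi'$ is itself monotone — this is satisfied by the model example $\psi(I)=I^q/q$, $q>1$, and should be stated as a standing assumption). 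Combining, the subtracted term is at most $\frac{1}{\kappa(\as,\as_*)}s_0(\as)\bar\imath(\as_*)\psi'(\bar I)$, and using $\beta(\as,\as_*)\geq\delta$ we see that the integrand is non-negative as soon as
\[
\delta \geq \frac{1}{\kappa(\as,\as_*)} s_0(\as)\,\bar\imath(\as_*)\,\psi'(\bar I),
\]
which is exactly \eqref{eq:cond}.

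**The main obstacle** is the apparent circularity: the bound $\fS(\as,t)\leq s_0(\as)$ was used to derive the condition that guarantees monotonicity of $\fS$, and the peaks $\bar\imath(\as_*)$, $\bar I$ are themselves properties of the controlled solution, not data. I would resolve this with a continuity/bootstrap argument: let $t^\star$ be the first time at which $\frac{d}{dt}\fS(\as,t)$ would become positive for some $\as$; on $[0,t^\star)$ the function $\fS(\as,\cdot)$ is non-increasing, hence $\fS(\as,t)\leq s_0(\as)$ there, and by the estimate above together with \eqref{eq:cond} the integrand stays $\geq 0$ on $[0,t^\star]$, contradicting the definition of $t^\star$ unless $t^\star=\infty$. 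Equivalently one can phrase it as: the set of times where $\fS(\as,\cdot)$ is non-increasing is non-empty, closed, and open in $[0,T]$. The dependence on the (a posteriori) peak values is acceptable because the statement is conditional — it tells the modeller how large the penalization must be chosen relative to the observed peaks; alternatively one may replace $\bar\imath(\as_*)$ and $\bar I$ by the crude a priori bounds $\bar\imath(\as_*)\leq f(\as_*)$ and $\bar I\leq 1$, at the cost of a more conservative condition. I would mention this trade-off in a remark rather than belabour it in the proof.
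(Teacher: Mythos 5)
Your proof follows essentially the same route as the paper's: impose pointwise non-negativity of the controlled contact rate $\beta(\as,\as_*)-\frac{1}{\kappa(\as,\as_*)}\fS(\as,t)\fI(\as_*,t)\psi'(I(t))$, then bound $\fS(\as,t)\le s_0(\as)$, $\fI(\as_*,t)\le \bar \imath(\as_*)$, $\psi'(I(t))\le\psi'(\bar I)$ and $\beta\ge\delta$ to arrive at \eqref{eq:cond}. The two refinements you add --- flagging that $\psi'(I(t))\le\psi'(\bar I)$ really uses monotonicity of $\psi'$ (convexity of $\psi$, true for $\psi(I)=I^q/q$) rather than just monotonicity of $\psi$, and closing the apparent circularity ``$s$ decreasing $\Rightarrow$ $s\le s_0$ $\Rightarrow$ $s$ decreasing'' with a continuity/bootstrap argument --- are legitimate tightenings of steps the paper states only informally, and do not change the substance of the argument.
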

\begin{proof} By imposing the non-negativity of the controlled reproduction rate inside the integral
we have
\[
\beta(\as,\as_*) {\kappa(\as,\as_*,t)}\geq {s(\as,t) i(\as_*,t)\psi'(I(t))}{}.
\]
This inequality has to be satisfied for every time $t\geq 0$. Next we observe that the number of susceptible $s(\as,t)$ is decreasing in time, therefore $s_0(a)\geq s(a,t)$ for all $t$. Moreover $i(\as,t)$ reaches a peak before decreasing to 0 (note that this peak can also be in $t=0$), say $\bar I$ for the macroscopic variable and $\bar i(a)$. Thus, thanks to the { monotonicity of $\psi'(\cdot)$}, we can restrict the previous inequality as follows
\[
\delta {\kappa(\as,\as_*,t)}\geq {s_0(\as) \bar i(\as_*)\psi'(\bar I)}{}.
\]
\end{proof}

\begin{figure}
	\centering
	\includegraphics[scale = 0.5]{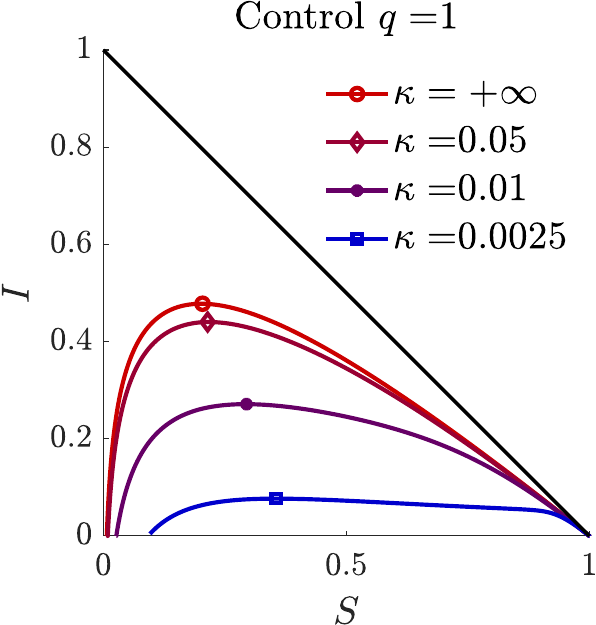}
	\includegraphics[scale = 0.5]{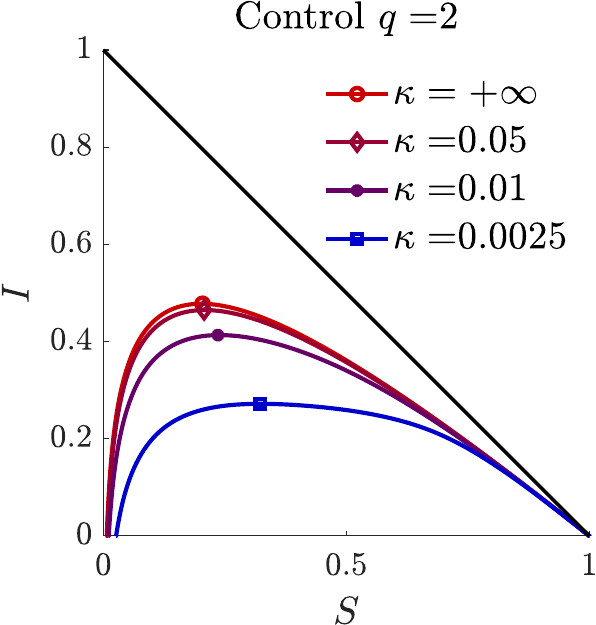}
	\caption{Phase diagram of susceptible-infected trajectories for the controlled SIR-type model with homogenous mixing and $\psi(I) = I^q/q$. Different choices of the penalization term $\kappa$ are reported. Left plot the case $q=1$, right plot $q=2$. {The line markers point out the peaks of the infected population for each choice of $\kappa$.}}
\label{fg:phase}
\end{figure}

In Figure \ref{fg:phase} we report the phase diagram of susceptible-infected trajectories for the controlled model with homogeneous mixing
\begin{equation}
\begin{split}
\frac{d}{dt} S(t)&= - \left(\beta-\frac1{\kappa}S(t)I(t)\psi'(I(t))\right)S(t)I(t) \\
\frac{d}{dt} I(t) &= \left(\beta-\frac1{\kappa}S(t)I(t)\psi'(I(t))\right)S(t)I(t)-\gamma I(t),
\end{split}
\label{eq:homo}
\end{equation}
with $\psi(I)=I^q/q$.
The dynamics are similar to the classical SIR model but with a nonlinear contact rate. In particular, the trajectories are flattened when the value of the control is such that $\kappa\beta \approx S(t)I(t)\psi'(I(t))$. Note, however that this status is not an equilibrium point of the system.

To understand this, let us observe that an equilibrium state $(S^*,I^*)$ for \eqref{eq:homo} satisfies the equations
\[
\begin{split}
\left(\kappa\beta-S^*I^*\psi'(I^*)\right)S^*I^*=0\\
\left((\kappa\beta-S^*I^*\psi'(I^*))S^*-\kappa\gamma\right)I^*=0.
\end{split}
\]
An equilibrium point corresponds to the classical state in which we have the extinction of the disease $I^*=0$ and $S^*$ arbitrary and defined by the asymptotic state of the dynamics \cite{H00}. Now, let's suppose that $I^* \neq 0$, $S^*\neq 0$, we can look for solutions where control is able to perfectly balance the spread of the disease
\[
\begin{split}
\kappa\beta=S^*I^*\psi'(I^*)\\
\kappa\beta S^*=(S^*)^2 I^* \psi'(I^*)+\kappa\gamma,
\end{split}
\]
consequently
\[
\kappa\beta S^* = (S^*)^2 \frac{\kappa\beta}{S^*} + \kappa \gamma = 0,
\]
which is satisfied only for $\gamma=0$ when $\kappa \neq 0$. {The stability analysis of this unique equilibrium point can be performed by standard arguments and we omit the details.} 


\section{Control of epidemic dynamics with uncertainties}\label{sect:control_UQ}

Since the beginning of the outbreak of new infectious diseases, the actual number of infected and  recovered people is typically underestimated, causing fatal delays in the implementation of public health policies facing the propagation of epidemic fronts. This is the case of the spreading of COVID-19 worldwide, often mistakenly underestimated due to deficiencies in surveillance and diagnostic capacity \cite{JRGL,RR}. Health systems are struggling to adopt systematic testing to monitor actual cases. Moreover, another important epidemiological factor affecting data reliability is the large proportion of asymptomatic \cite{JRGL,MKZC,Zhang_etal}. 

Among the common sources of uncertainties for dynamical systems modeling epidemic outbreaks we may consider the following
\begin{itemize}
\item noisy and incomplete available data;
\item structural uncertainty due to the possible inadequacy of the mathematical model used to describe the  phenomena under consideration.
\end{itemize}
In the following we consider the effects on the dynamics of uncertain data, such as the initial conditions on the number of infected people or the interaction and recovery rates. On the numerical level we consider techniques based on stochastic Galerkin methods, for which spectral convergence on random variables is obtained under appropriate regularity assumptions \cite{X}. For simplicity, the details of the numerical method that allows to reduce the uncertain dynamic system to a set of deterministic equations are reported in Appendix \ref{app:sG}.

\subsection{Socially structured models with uncertain inputs}

We introduce the random vector $\z = (z_1,\dots,z_{d_z})$ whose components are assumed to be independent real valued random variables
\[
z_k:(\Omega,F) \rightarrow (\mathbb R,\mathcal B_{\mathbb R}),  \qquad k = 1,\dots,d_z
\]
with $\mathcal B_{\mathbb R}$ the Borel set. We assume to know the probability density $p(\z): \mathbb R^{d_z} \rightarrow \mathbb R^{d_z}_+$ characterizing the distribution of $\z$. 
Here, $z\in\mathbb{R}^{d_z}$ is a random vector taking into account various possible sources of uncertainty in the model.

In presence of uncertainties we generalize the initial modeling by introducing the quantities $s(\z,\as,t)$, $i(\z,\as,t)$ and $r(\z,\as,t)$ representing the distributions at time $t\ge 0$ of susceptible, infectious and recovered individuals. The total size of the population is a deterministic conserved quantity in time, i.e.
\[
s(\z,\as,t)  + i(\z,\as,t) + r(\z,\as,t) = f(\as), \qquad \int_{\LL} f(\as)d\as = N. 
\]
Hence, the quantities 
\begin{equation}
S(\z,t)=\int_{\LL}\fS (\z,\as,t)\,d\as,\quad I(\z,t)=\int_{\LL}\fI (\z,\as,t)\,d\as,\quad R(\z,t)=\int_{\LL}\fR (\z,\as,t)\,d\as,
\end{equation}
denote the uncertain fractions of the population that are susceptible, infectious and recovered respectively. 

The social structured model with uncertainties reads
\begin{equation}\label{eq:SIR_u}
\begin{split}
\frac{d}{dt} \fS(\z,\as,t)&= - \fS(\z,\as,t)\int_{\LL} \beta(\z,\as,\as_*)\dfrac{\fI(\z,\as_*,t)}{N}\ d\as_* \\
\frac{d}{dt} \fI(\z,\as,t) &=  \fS(\z,\as,t)\int_{\LL} \beta(\z,\as,\as_*) \dfrac{\fI(\z,\as_*,t)}{N}\ d\as_* - \gamma (\z,\as) \fI(\z,\as,t) \\
\frac{d}{dt} \fR(\z,\as,t) &= \gamma (\z,\as) \fI(\z,\as,t)
\end{split}
\end{equation}
To illustrate the impact of uncertainties let us consider the simple following example. In the case of homogeneous mixing with uncertain contact rate $\beta(z)=\beta+\alpha z$, $\alpha > 0$, $z\in \mathbb{R}$ distributed as $p(z)$ the model reads
\begin{equation}
\begin{split}
\frac{d}{dt} S(z,t)&= - \left(\beta+\alpha z\right)S(z,t)I(z,t) \\
\frac{d}{dt} I(z,t) &= \left(\beta+\alpha z\right)S(z,t)I(z,t)-\gamma I(z,t),
\end{split}
\label{eq:homo2}
\end{equation}
with deterministic initial values $I(z,0)=I_0$ and $S(z,0)=S_0$.
The solution for the proportion of infectious during the initial exponential phase is \cite{Roberts}
{\[
I(z,t) = I_0 e^{(\beta+\alpha z)S_0 t - \gamma t},
\]}
and its expectation
{
\begin{equation}
{\mathbb E}[I(\cdot,t)]=I_0 \int_{\mathbb{R}} e^{(\beta+\alpha z)S_0 t - \gamma t}p(z)\,dz = I_0 e^{\beta S_0 t - \gamma t} W(t),
\end{equation}}
where the function 
{
\[
W(t)=\int_{\mathbb{R}} e^{\alpha z S_0 t} p(z)\,dz
\]}
represents the statistical correction factor to the standard deterministic exponential phase of the disease {$I_0 e^{\beta S_0 t - \gamma t}$}. If $z$ is uniformly distributed in $[-1,1]$ we can explicitly compute 
{
\[
W(t)=\frac{\sinh \left({\alpha S_0 t}\right)}{\alpha S_0 t} > 1,\quad t>0. 
\]}
More in general, if $z$ has zero mean then by Jensen's inequality we have $W(t)>1$ for $t > 0$, so that the expected exponential phase is amplified by the uncertainty (see \cite{Roberts}). 

In a similar way, keeping $\beta$ constant, but introducing a source of uncertainty in the initial data $I(z,0)=I_0+\mu z$, $\mu > 0$ and $z\in \mathbb{R}$ distributed as $p(z)$ the solution in the exponential phase reads
{
\[
I(z,t) = (I_0+\mu z) e^{\beta S_0 t - \gamma t},
\]}
and then its expectation 
{
\begin{equation}
\begin{split}
{\mathbb E}[I(\cdot,t)]&=\int_{\mathbb{R}}(I_0+\mu z) e^{\beta S_0 t - \gamma t}p(z)\,dz= (I_0 + \mu\bar{z}) e^{\beta S_0 t - \gamma t},
\end{split}
\end{equation}}
where $\bar{z}$ is the mean of the variable $z$. Therefore, the expected initial exponential growth behaves as the one with deterministic initial data $I_0+\mu \bar{z}$. Of course, if both sources of uncertainty are present the two effects just described sum up in the dynamics.

{
\begin{remark}
The presence of a large number of undetected infected is at the basis of the construction of numerous epidemiological models with an increasingly complex compartmental structure in which the original compartment of the infected is subdivided into further compartments with different roles in the propagation of the disease \cite{Giordano,GattoPNAS,IC}. 
To clarify the relationships to other deterministic compartmental models, let us consider, for simplicity, model \eqref{eq:SIR_u} in absence of a social structure 
\begin{equation}
\begin{split}
\frac{d}{dt} S(z,t)&= - \beta(z) S(z,t)I(z,t) \\
\frac{d}{dt} I(z,t) &= \beta(z)S(z,t)I(z,t)-\gamma(z) I(z,t),\\
\frac{d}{dt} R(z,t) &= \gamma(z) I(z,t),
\end{split}
\label{eq:homo3}
\end{equation}
and with a one-dimensional random input $z \in \mathbb{R}$ distributed as $p(z)$. Furthermore, for a function $F(z,t)$ we will denote its expected value as $\bar F(t) = \mathbb{E}[F(\cdot,t)]$. Now, starting from a discrete probability density function
\[
p_k=P\left\{Z=z_k\right\},\qquad \sum_{k=1}^n p_k = 1,
\]
we have $\bar F(t) = \sum_{k=1}^n p_k F_k$, with $F_k=F(z_k)$. 
Taking the expectation in \eqref{eq:homo3}, we can write
\begin{equation}
\begin{split}
\frac{d}{dt} \bar S(t)&= - \bar S(t) \sum_{k=1}^n \tilde \beta_k  p_k I_k(t) \\
\frac{d}{dt} \bar I(t) &= \bar S(t) \sum_{k=1}^n \tilde \beta_k  p_k I_k(t)-\sum_{k=1}^n \gamma_k p_k I_k(t),\\
\frac{d}{dt} \bar R(t) &= \sum_{k=1}^n \gamma_k p_k I_k(t),
\end{split}
\label{eq:homopar}
\end{equation}
with $\tilde \beta_k = S_k \beta_k/\bar S $, $k=1,\ldots,n$. For example, in the case n=2, by identifying $I_d=p_1 I_1$ and $I_u=p_2 I_2$ with the compartments of detected and undetected infectious individuals, respectively, we can formulate the equivalent partitioning
\begin{equation}
\begin{split}
\frac{d}{dt} \bar S(t)&= - \bar S(t) \left(\tilde \beta_1  I_d(t)+\tilde \beta_2  I_u(t)\right) \\
\frac{d}{dt}  I_d(t) &= \bar S(t) \tilde \beta_1  I_d(t)-\gamma_1 I_d(t),\\
\frac{d}{dt}  I_u(t) &= \bar S(t) \tilde \beta_2  I_u(t)-\gamma_2 I_u(t),\\
\frac{d}{dt} \bar R(t) &= \gamma_1 I_d(t)+\gamma_2 I_u(t),
\end{split}
\label{eq:SPIAR}
\end{equation}
which has the same structure of a SIAR compartmental model including the undetected (or the asymptomatic) class.
In the following, we will not rely on discrete probability distributions, but on continuous representations that can be associated with the overall probability of having a certain number of infectious individuals (detected or undetected). The additional dependence of the epidemiological parameters from the random variable allows us to take into account changes in the corresponding dynamics of disease transmission and recovery. 
\end{remark}  
}
  
\subsection{The feedback controlled model with random inputs}
In presence of uncertainties the optimal control problem \eqref{eq:func}-\eqref{eq:SIRc}  is modified as follows 
\begin{equation}\label{eq:func_z}
\min_{u\in \mathcal U}J(u):= \int_0^T\mathcal R [\psi(I(\cdot,t))] dt+\dfrac{1}{2}\int_0^T\int_{\LL\times\LL} {\nu(\as,\as_*,t)}|u(\as,\as_*,t)|^2\ d\as d\as_*  dt,\qquad 
\end{equation}
being $\mathcal R[\psi(I(\cdot,t))]$ a suitable operator taking into account the presence of the uncertainties $\z$. Examples of such operator that are of interest in epidemic modeling are the expectation with respect to uncertainties
\begin{equation}
{\mathcal R[\psi(I(\cdot,t))]=}\mathbb E[\psi(I(\cdot,t))]= \int_{\mathbb R^{d_z}} \psi(I(\z,t)) \; p(\z)d\z
\label{eq:R1}
\end{equation} 
or relying on data which underestimate the number of infected
\begin{equation} 
{\mathcal R[\psi(I(\cdot,t))] = \psi(I(\z_0,t)),}
\label{eq:R2}
\end{equation}
{where $\z_0$ is a given value such that $\psi(I(\z_0,t)) \leq \psi(I(\z,t))$, $\forall\, \z\in \mathbb{R}^{d_z}$ and $t>0$.}

In \eqref{eq:func_z} $\mathcal U$ the space of admissible controls is  defined as 
\[
\mathcal U=\left\{ u\,|\, 0 \leq u(\as,\as_*,t) \leq\min\{M,\min_{\z}\beta(\z,\as,\as_*)\},\,\, \forall\, (\as,\as_*,t)\in \LL^2\times[0,T],\, M>0\right\},
\] 
{or in a more relaxed form if we consider the above inequality after integration against $i(\as_*,t)$.}

The minimization problem \eqref{eq:func_z} is subject to the following dynamics
\begin{equation}\label{eq:SIR_z}
\begin{split}
\frac{d}{dt} \fS(\z,\as,t)&= - \fS(\z,\as,t)\int_{\LL} (\beta(\z,\as,\as_*) -u(\as,\as_*,t)){\fI(\z,\as_*,t)}{}\ d\as_* \\
\frac{d}{dt} \fI(\z,\as,t) &=  \fS(\z,\as,t)\int_{\LL} (\beta(\z,\as,\as_*) -u(\as,\as_*,t)){\fI(\z,\as_*,t)}{}\ d\as_* - \gamma (\z,\as) \fI(\z,\as,t) \\
\frac{d}{dt} \fR(\z,\as,t) &= \gamma (\z,\as) \fI(\z,\as,t)
\end{split}
\end{equation}
with initial condition $\fI(\z,\as,0) = \fI_0(\z,\as)$, $\fS(\z,\as,0) = \fS_0(\z,\as)$ and $\fR(\z,\as,0) = r_0(\z,\as)$.

The implementation of instantaneous control for dynamics in presence of uncertainties follows from the derivation presented in Section \ref{sect:ins}. {We can derive the minimizer of $J_h$ computing $D_u J_h(u)\equiv 0$ from the restriction of the minimization problem \eqref{eq:func_z} in the interval $[t,t+h]$ or equivalently   
\[
{\mathcal R}\left[\frac{\partial \psi(I(\cdot,t+\h))}{\partial u}\right] = {\nu(a,t)}u (\as,\as_*,t),
\]
where we assumed ${\partial {\mathcal R}\left[\psi(I(\cdot,t+\h))\right]}/{\partial u}={\mathcal R}\left[{\partial \psi(I(\cdot,t+\h))}/{\partial u}\right]$, to obtain the following nonlinear identity
\begin{align}\label{eq:icz}
{\nu(a,t)}u(\as,\as_*,t)={\h}{\mathcal R}[\fS(\cdot,\as,t) \fI(\cdot,\as_*,t)\psi'(I(\cdot,t))].
\end{align}
The above assumption on $\mathcal R[\cdot]$ is clearly satisfied by \eqref{eq:R1} and \eqref{eq:R2}, where in the case of $\eqref{eq:R2}$ we used the notation $${\mathcal R}[\fS(\cdot,\as,t) i(\cdot,\as_*,t)\psi'(I(\cdot,t))]=\fS(\z_0,\as,t) i(\z_0,\as_*,t)\psi'(I(\cdot,t)).$$ By introducing the scaling $\nu(a,t) = \h \kappa(\as,\as_*,t)$ we can pass to the limit for $h\to 0$ to get}  
\begin{equation}\label{eq:u_z}
u (\as,\as_*,t)=\frac{ 1}{\kappa(\as,\as_*)}\mathcal R \left[\fS(\cdot,\as,t)\fI(\cdot,\as_*,t)\psi'(I(\cdot,t))\right],
\end{equation}
which defines the feedback controlled model in presence of uncertainties.

\section{Examples from the COVID-19 outbreak in Italy}

In this section we present several numerical tests on the constrained compartmental model with uncertain data. Details of the numerical method used are given in Appendix \ref{app:sG}. In an attempt to analyze sufficiently realistic scenarios, in the following examples we will refer to values taken from Italian data on the COVID-19 epidemic \cite{Protezione}. More precisely, in the first test case we illustrate the behavior of the model in a simplified setting in absence of uncertainty and social structure and without trying to reproduce scenarios closely related to current data. In the second test case, following a progressively more realistic approach, we consider the impact of the presence of uncertain data in the controlled model with homogeneous social mixing and calibrated on Italian data. {The same setting is then considered in Test 3 taking into account the additional effects given by the social structure of the system, characterized by suitable social interaction functions and an age-dependent recovery rate. The final scenario, explored in Test 4, examines the influence on the spread of infectious disease induced by relaxed confinement measures related to the social structure of the system.}

\subsection{Test 1. Containment in homogeneous social mixing dynamics}
To illustrate the effects of controls introduced that mimic containment procedures, let us first consider the case where the social structure is not present. Furthermore, to simplify further the modeling, in this first example we neglect any dependence on uncertain data. 

We consider as initial small number of infected and recovered $i(0) = 3.68 \times 10^{-6}$, $r(0) = 8.33\times 10^{-8}$. These normalized fractions refer specifically to the first reported values in the case of the Italian outbreak of COVID-19, even if in this simple test case we will not try to match the data in a quantitative setting but simply to illustrate the behavior of the feedback controlled model.
Based on recent studies \cite{JRGL,Zhang_etal,Liu}, the initial infection rate of COVID-19 $R_0 = \beta/\gamma$ has been estimated between $2$ and $6.5$. Here, to exemplify the possible evolution of the pandemic we consider a value close to the lower bound, taking $\beta =0.25$ and $\gamma = 0.10$, {namely a recovery rate of $10$ days}, so that $R_0 = 2.5$. 

\begin{figure}
\centering
\includegraphics[scale = 0.27]{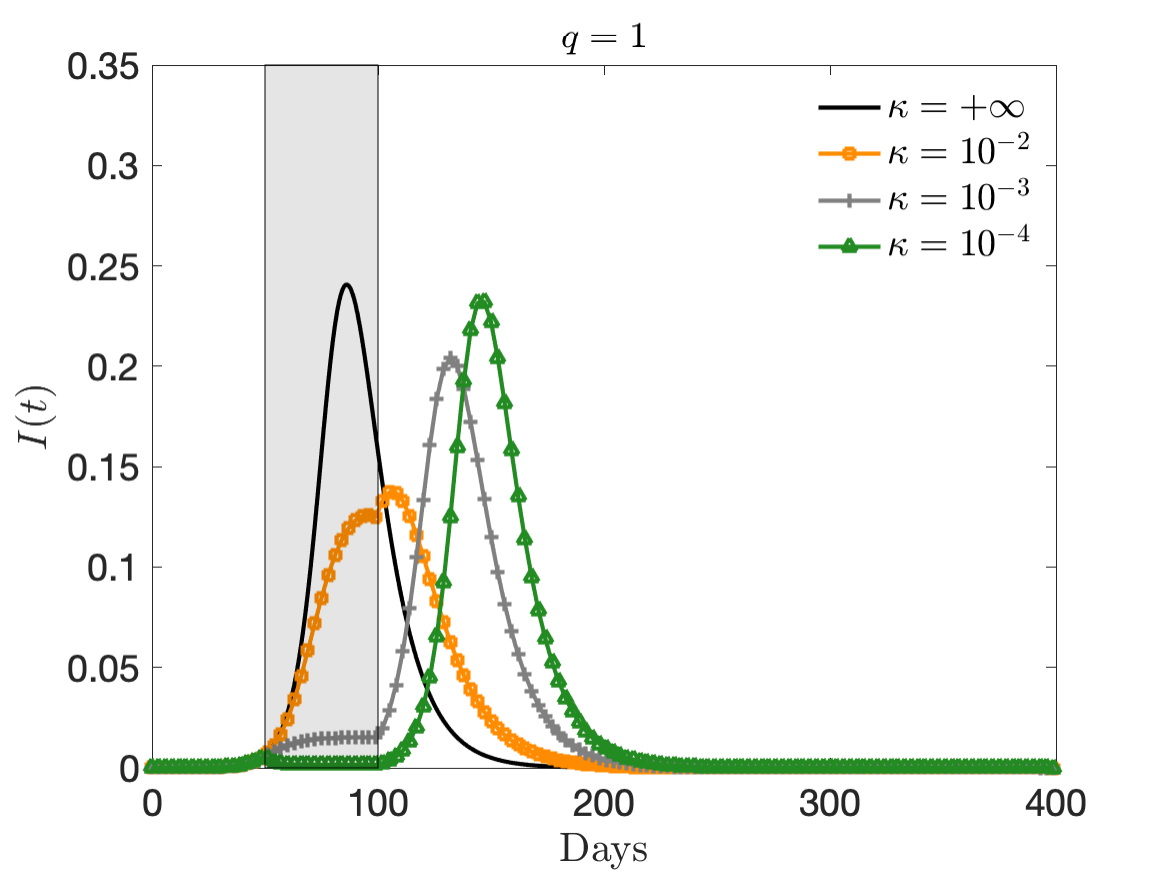}
\includegraphics[scale = 0.27]{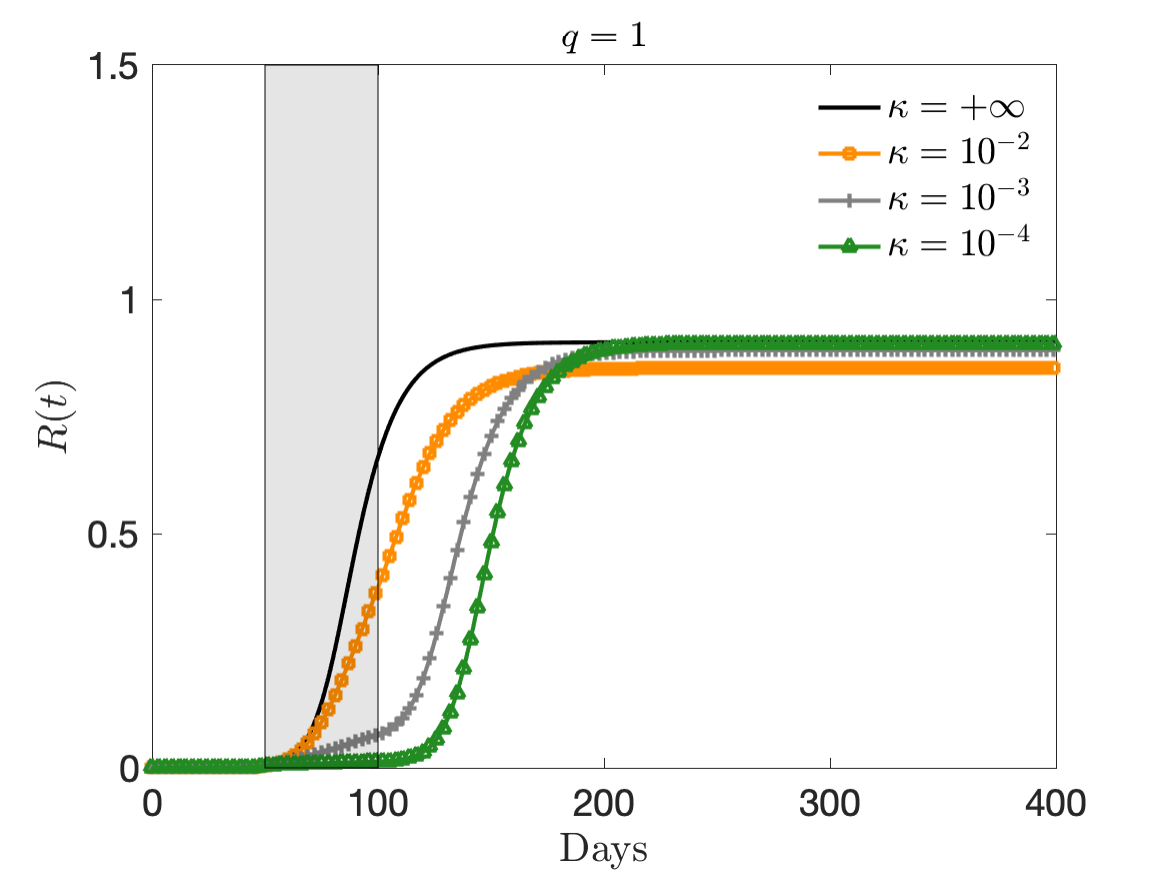} \\
\includegraphics[scale = 0.27]{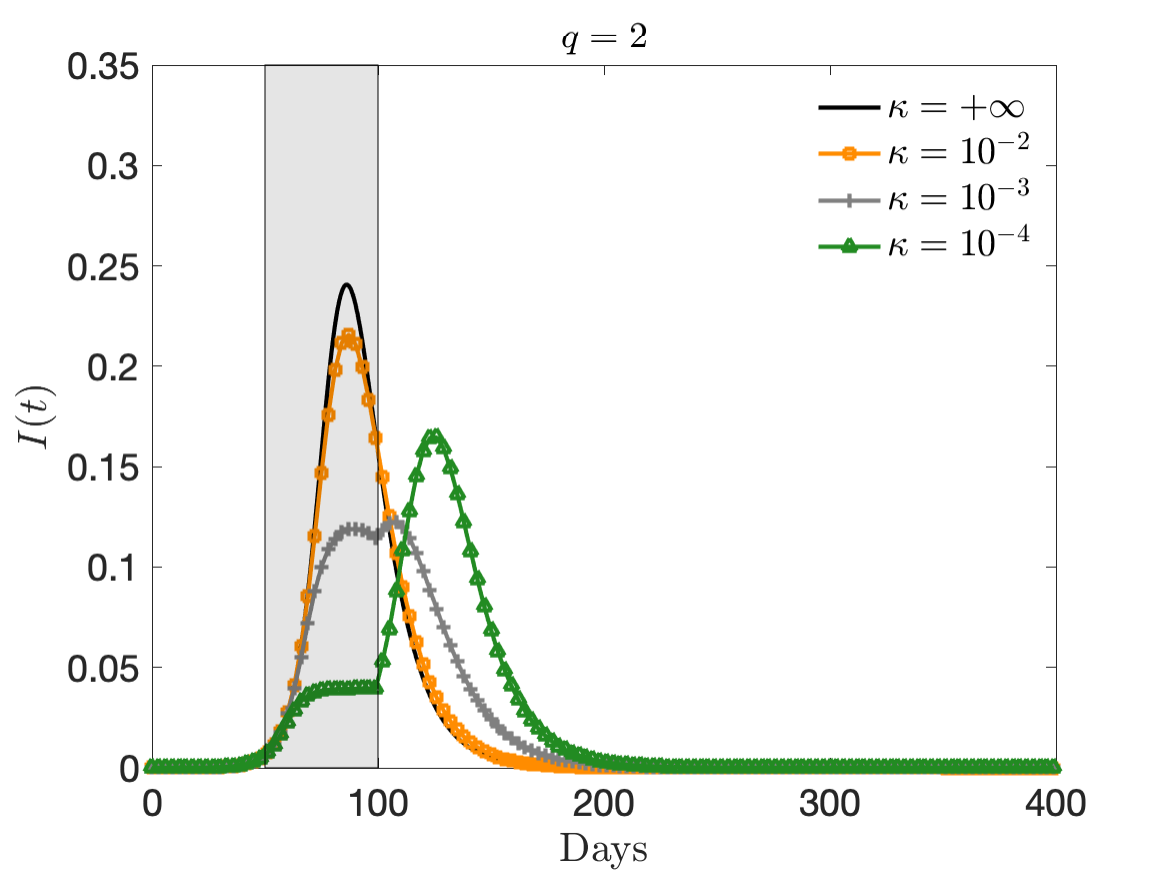}
\includegraphics[scale = 0.27]{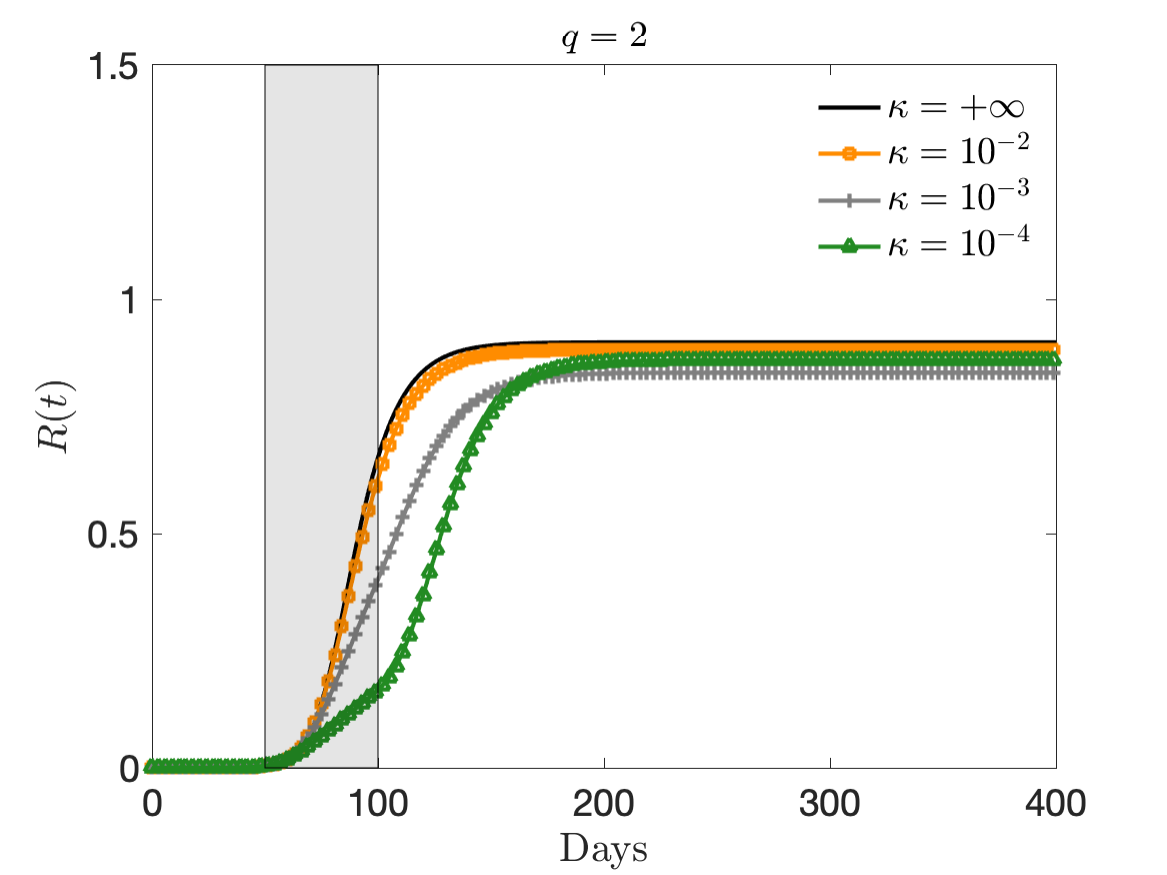}\\
\includegraphics[scale = 0.27]{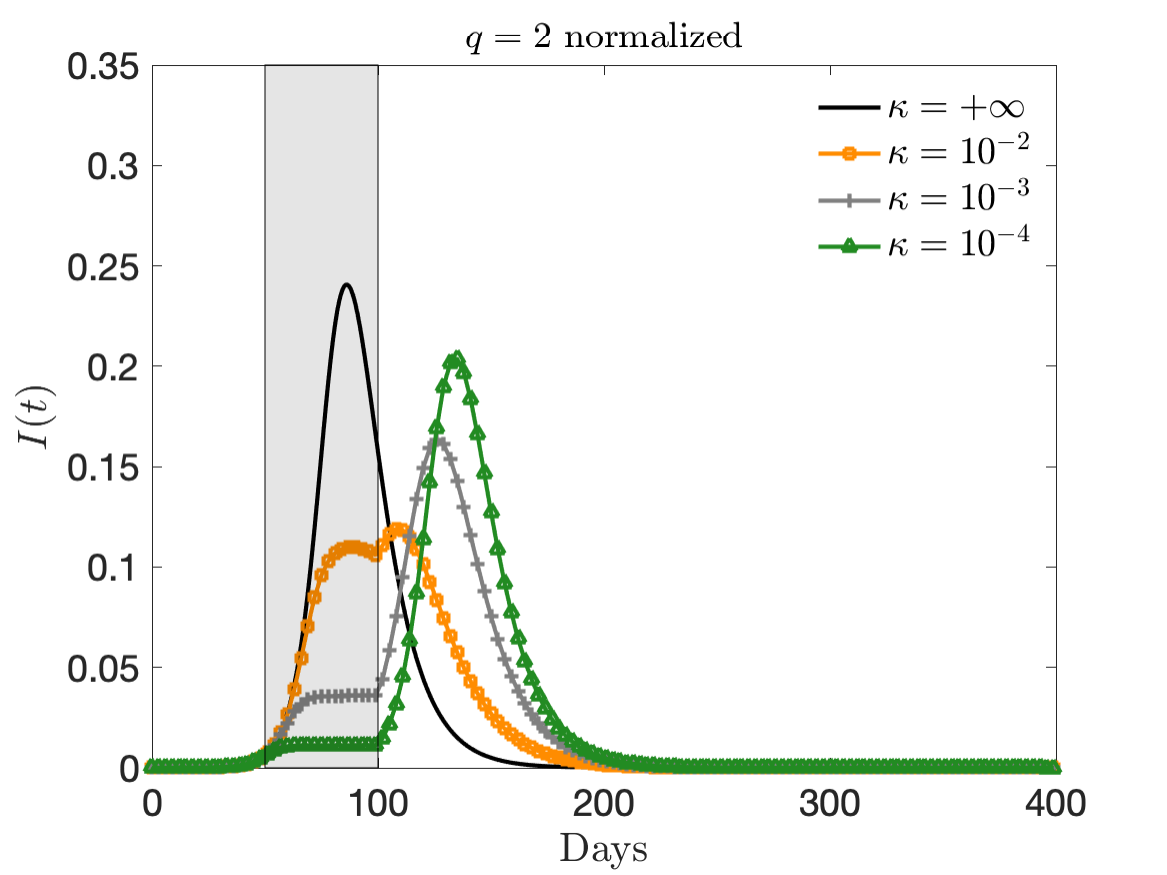}
\includegraphics[scale = 0.27]{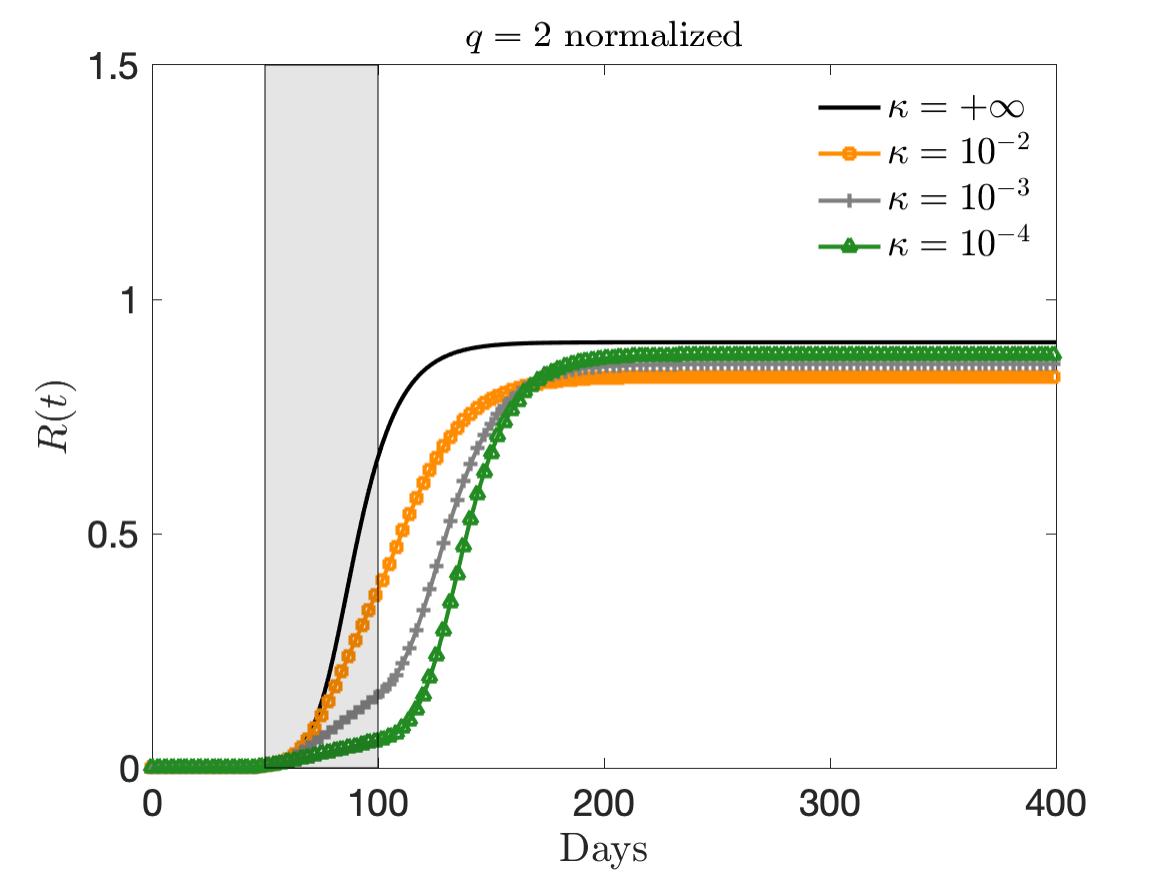}
\caption{\textbf{Test 1}.  Evolution of the fraction of infected (left) and recovered (right) based on {the feedback constrained model \eqref{eq:homo} for $t\in [50,100]$, a perception function} $\psi(I)=I^q/q$, $q = 1,2$ and several penalizations $\kappa = 10^{-2}, 10^{-3}, 10^{-4}$. The choice $\kappa = +\infty$ corresponds to the unconstrained case. {In last row the normalized case $\psi(I) = C_2 I^2/2$, with $C_2 = 12$. }}
\label{fig:test1a}
\end{figure}

\begin{figure}
\centering
\includegraphics[scale = 0.27]{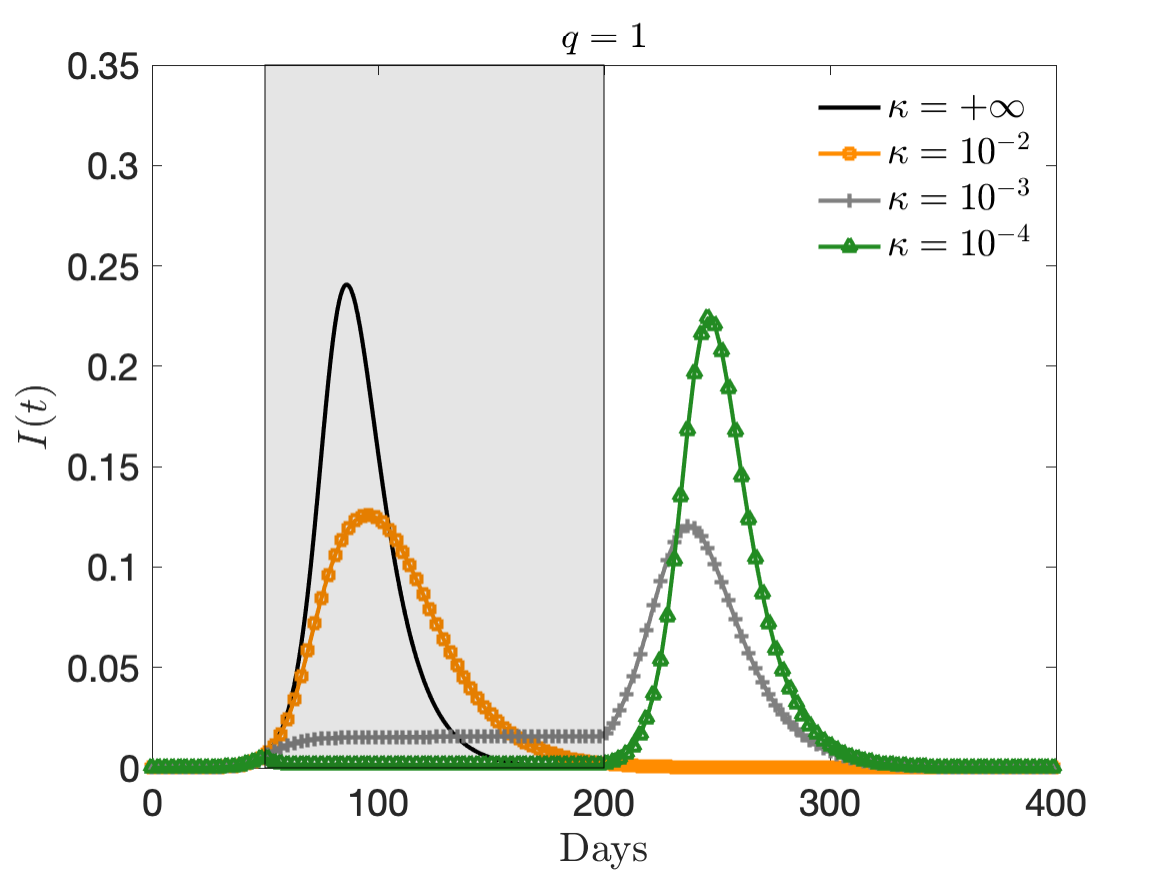}
\includegraphics[scale = 0.27]{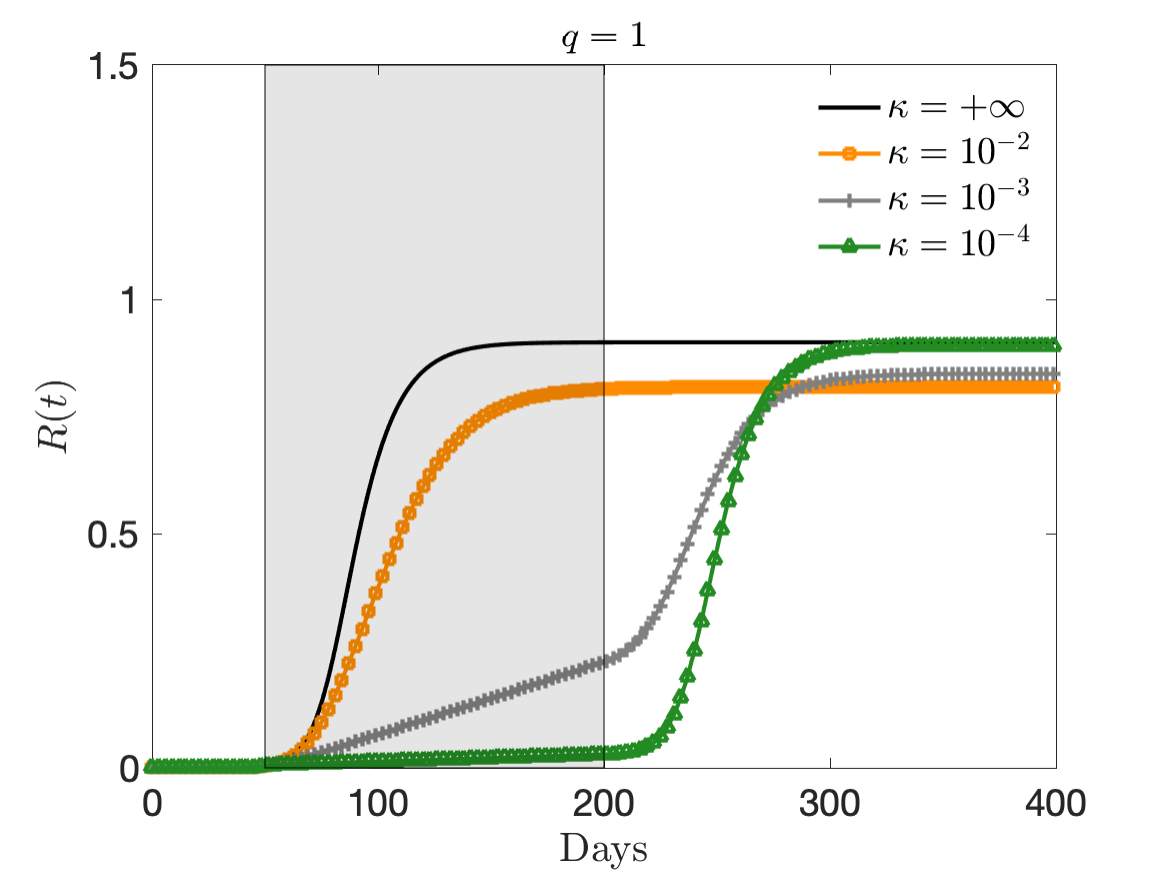} \\
\includegraphics[scale = 0.27]{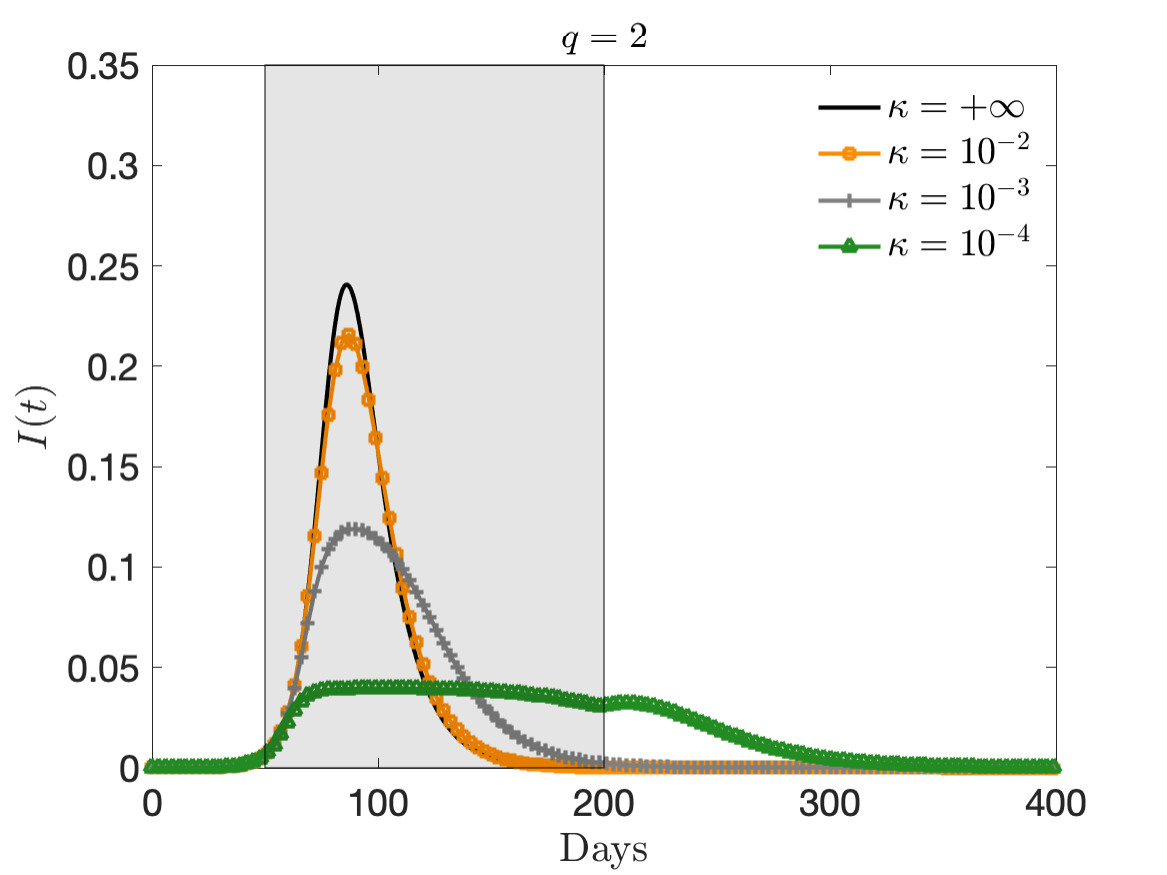}
\includegraphics[scale = 0.27]{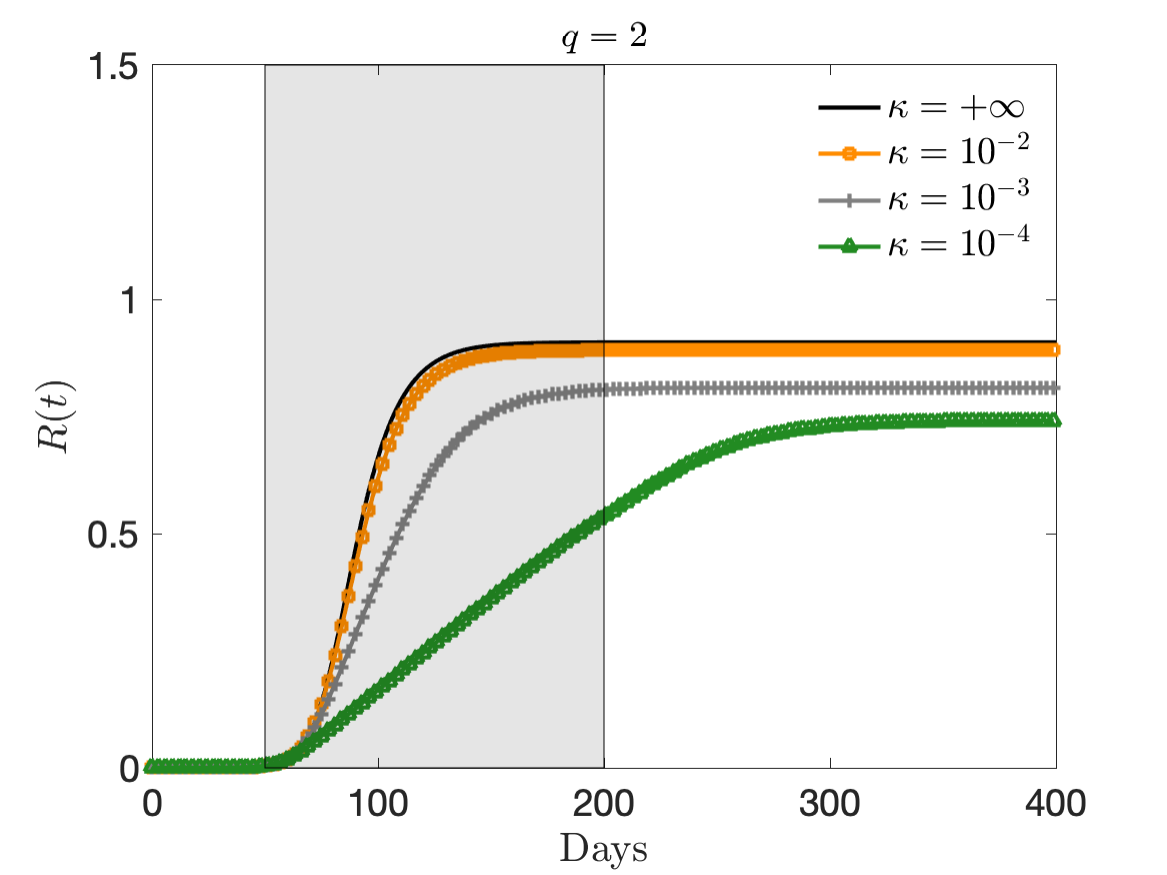}\\
\includegraphics[scale = 0.27]{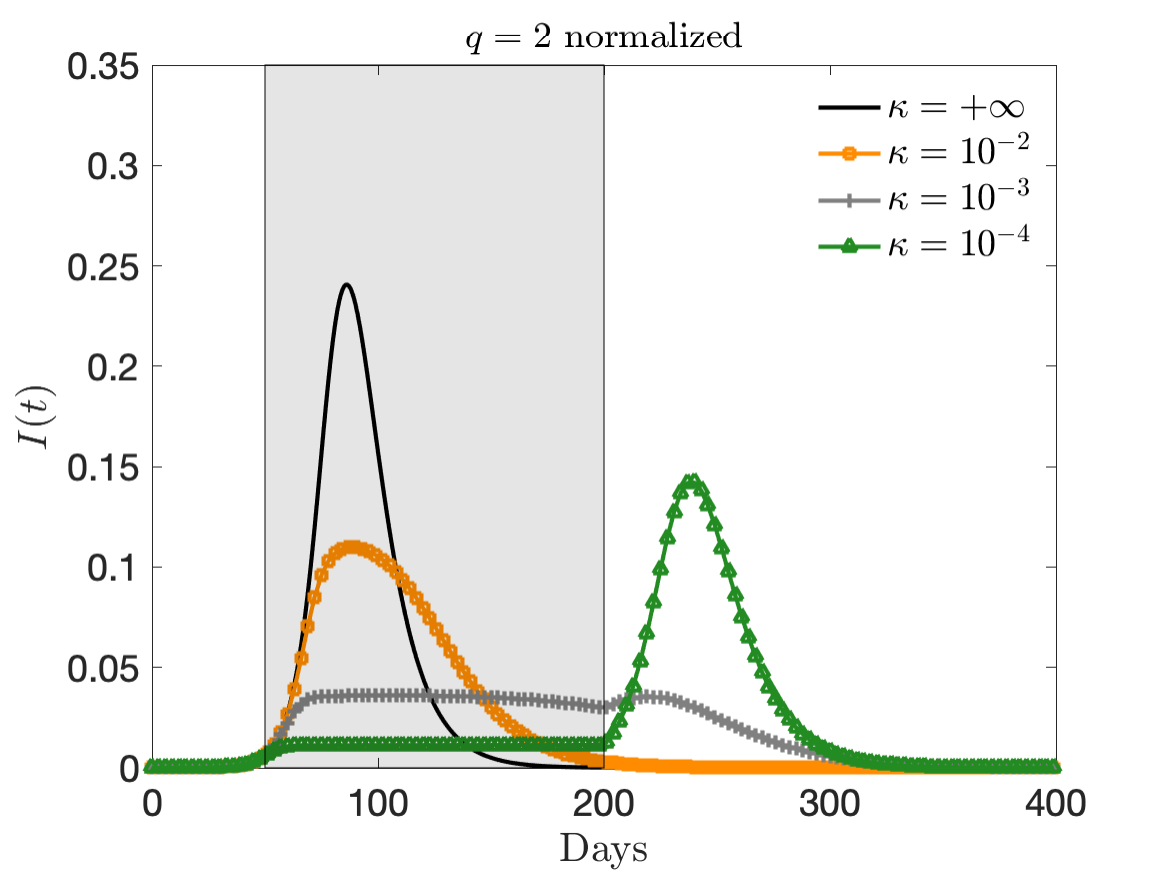}
\includegraphics[scale = 0.27]{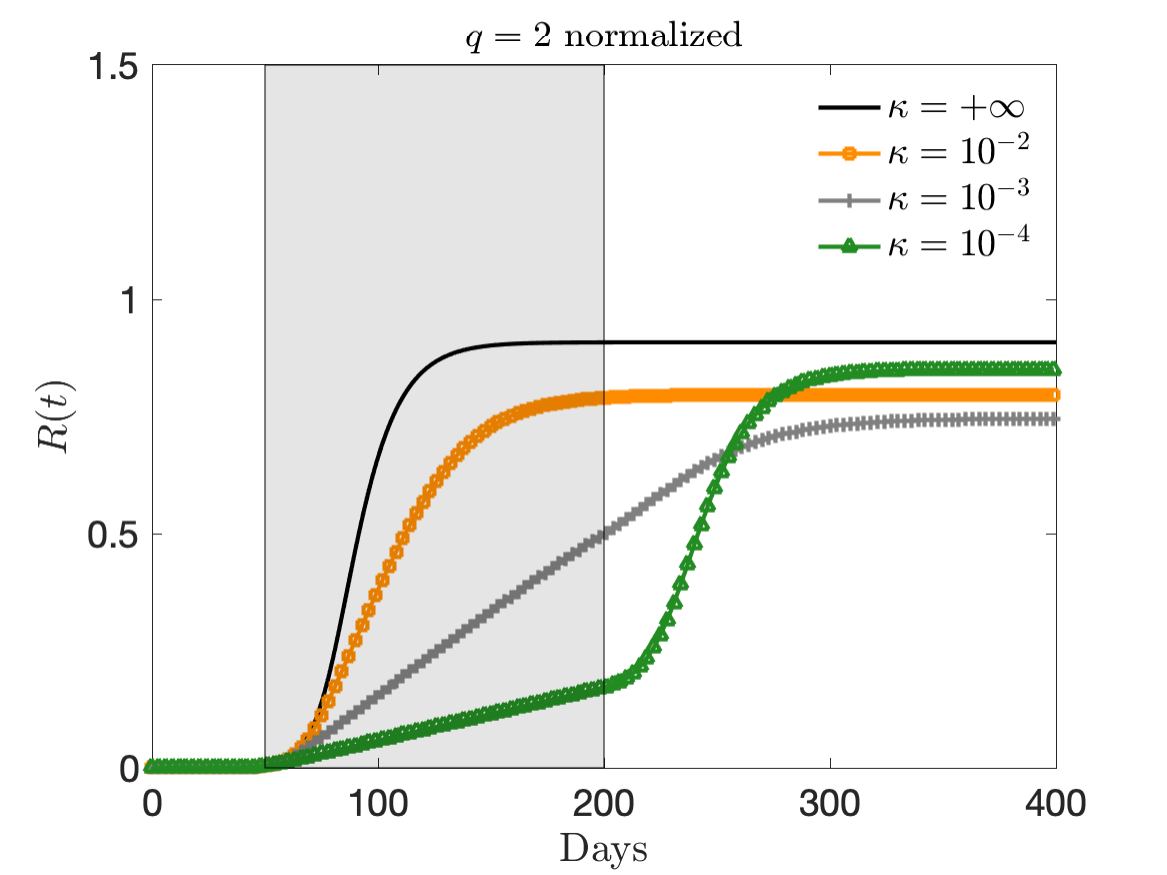}
\caption{\textbf{Test 1}. Evolution of the fraction of infected (left) and recovered (right) based on {the feedback constrained model \eqref{eq:homo} for $t\in [50,200]$, a perception function} $\psi(I)=I^q/q$, $q = 1,2$ and several penalizations $\kappa = 10^{-2}, 10^{-3}, 10^{-4}$. The choice $\kappa = +\infty$ corresponds to the unconstrained case.  {In last row the normalized case $\psi(I) = C_2 I^2/2$, with $C_2 = 12$. }}
\label{fig:test1b}
\end{figure}

In Figures \ref{fig:test1a} and \ref{fig:test1b} we report the infected and recovered dynamics based on the activation of the control in two different time frames. {In Figure \ref{fig:test1a} the activation for $t\in [50,100]$, which means that after $100$ days we suppose that all containment restrictions are cancelled. In Figure \ref{fig:test1b} we consider a larger activation time frame $t\in[50,200]$.}  

With the choice of the perception function $\psi(I)=I^q/q$, $q \geq 1$, we can observe how the control term is able to flatten the curve even if, as expected, the case $q=2$ gives rise to a weaker control action. {To make the two controls, $q = 1$ and $q = 2$, comparable for the same penalization factor $\kappa$ we also consider the normalized case, where $\psi(I) = C_q I^q/q$, and the constant $C_q>0$ is a normalization constant such that }
{
\[
\int_0^{I_{\textrm{max}}} C_1 I\; dI = \int_0^{I_{\textrm{max}}} C_2 \dfrac{I^2}{2}\; dI, 
\] 
with $I_{\textrm{max}}$ an estimate of the maximum number of infected in absence of control. In particular, in our test case, from $I_{\textrm{max}}\approx \dfrac{1}{4}$, assuming $C_1=1$ we obtain $C_2 = 12$. }

Note that, if the activation time is too short the control is not able to significantly change the total number of infected (and therefore recovered). On the other hand, by enlarging the activation time in combination with a sufficiently small penalty constant, the peak infection is not only reduced, but the total number of infected people is decreased. To achieve this, the control should be kept activated for a sufficiently long time and with the right intensity in a kind of plateau regime where there is a perfect balance between the containment effect and the spread of the disease. On the contrary, if the control is too strong, the majority of the population remains susceptible and consequently the disease will start spreading again forming a second wave after the containment policy is removed. {Similar conclusions (that may appear counterintuitive) have been shown also by other authors (see for example \cite{Britton,Lunelli}).}

\begin{figure}
\centering
\includegraphics[scale = 0.21]{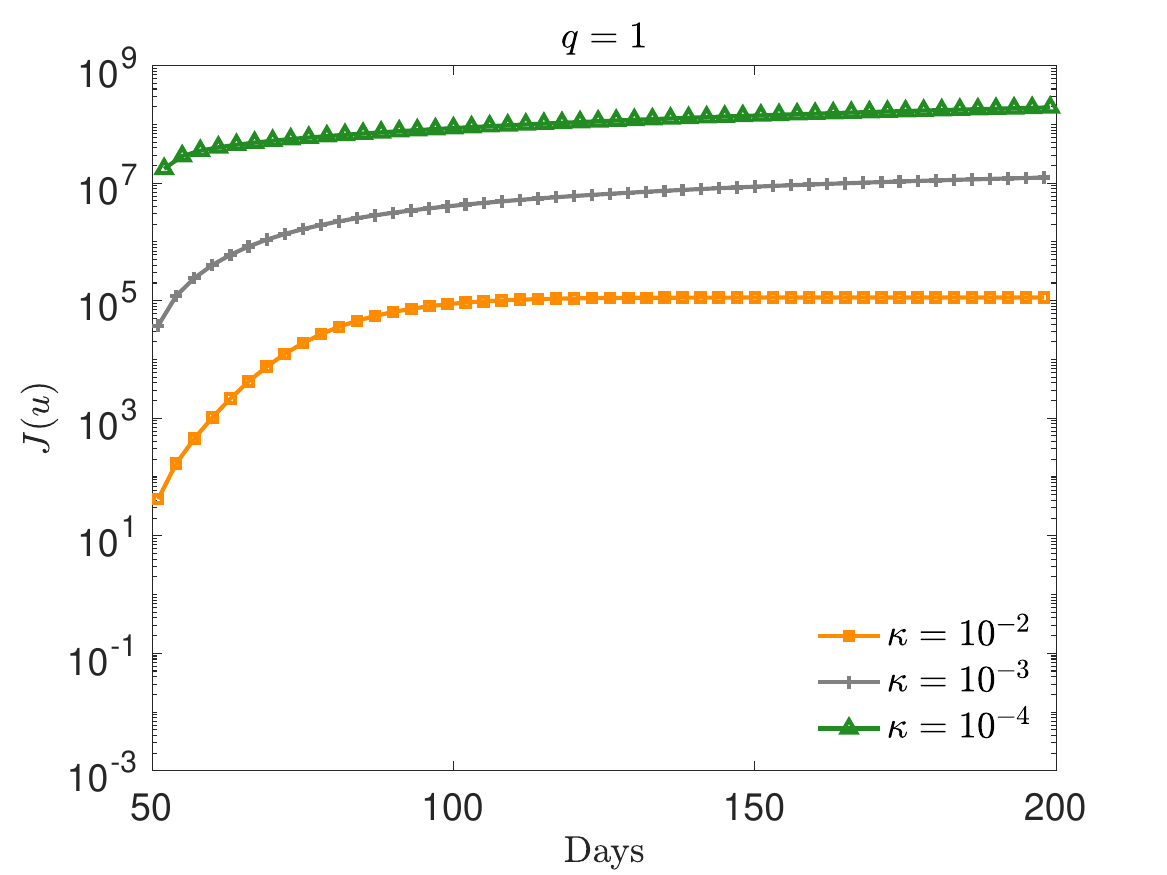}\hskip -.3cm
\includegraphics[scale = 0.21]{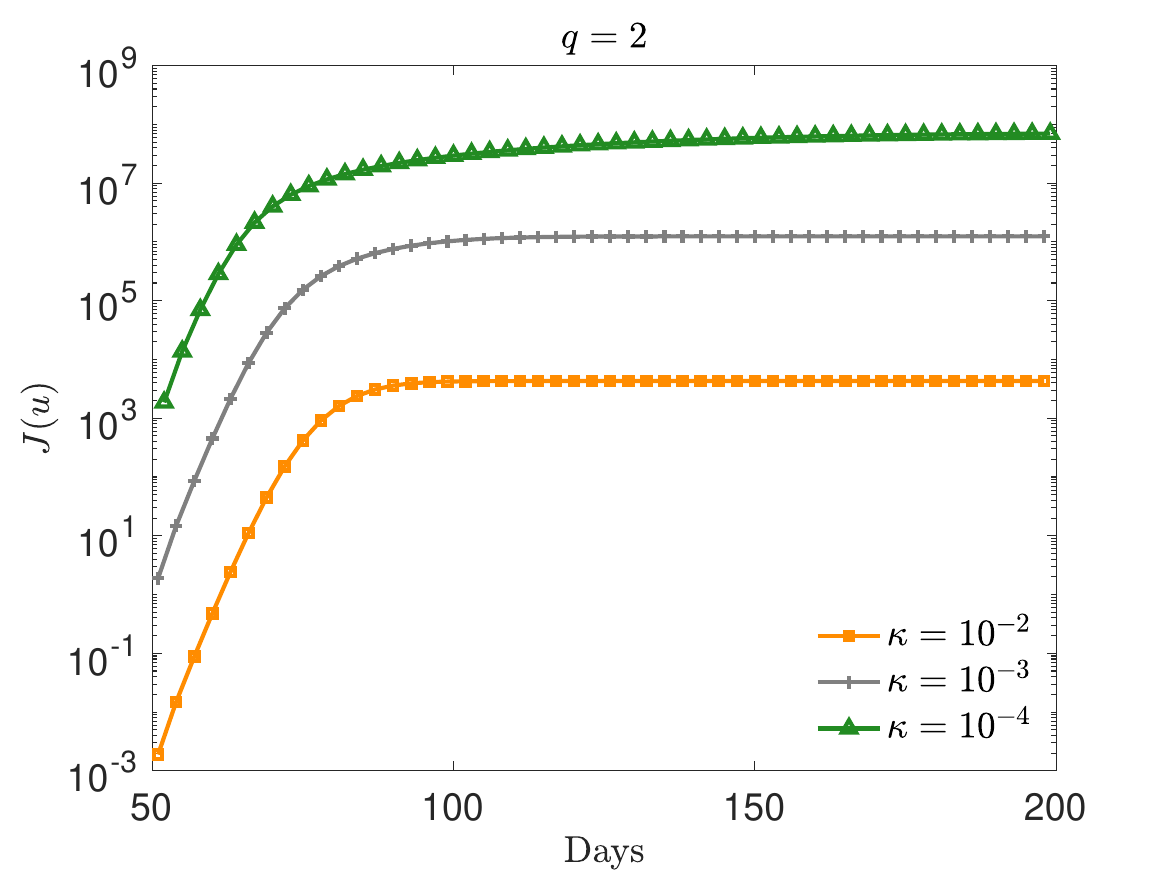}\hskip -.3cm 
\includegraphics[scale = 0.21]{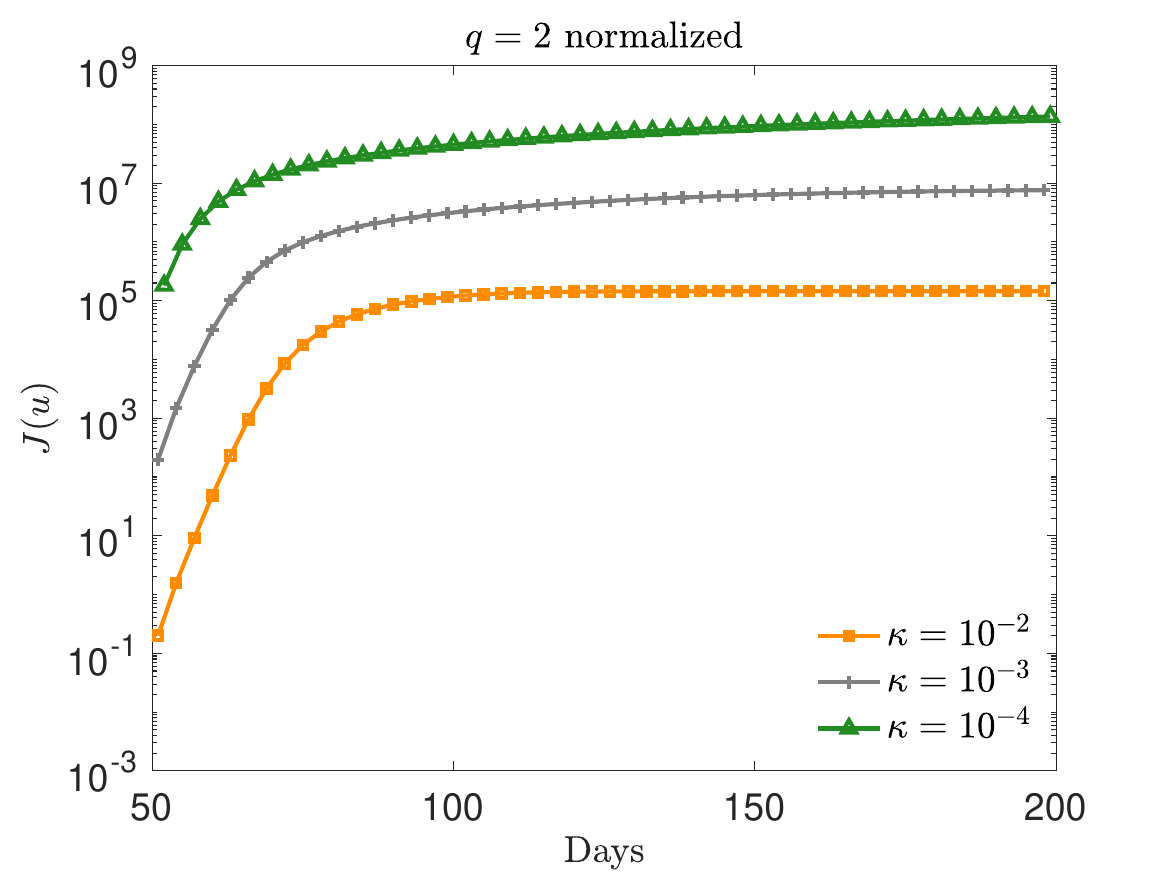} 
\caption{\textbf{Test 1}. Evaluation of the cost functional $J(u)$ for the dynamics in Figure \ref{fig:test1b}.}
\label{fig:cost}
\end{figure}

\begin{figure}
\centering
\subfigure[Infected]{
\includegraphics[scale = 0.27]{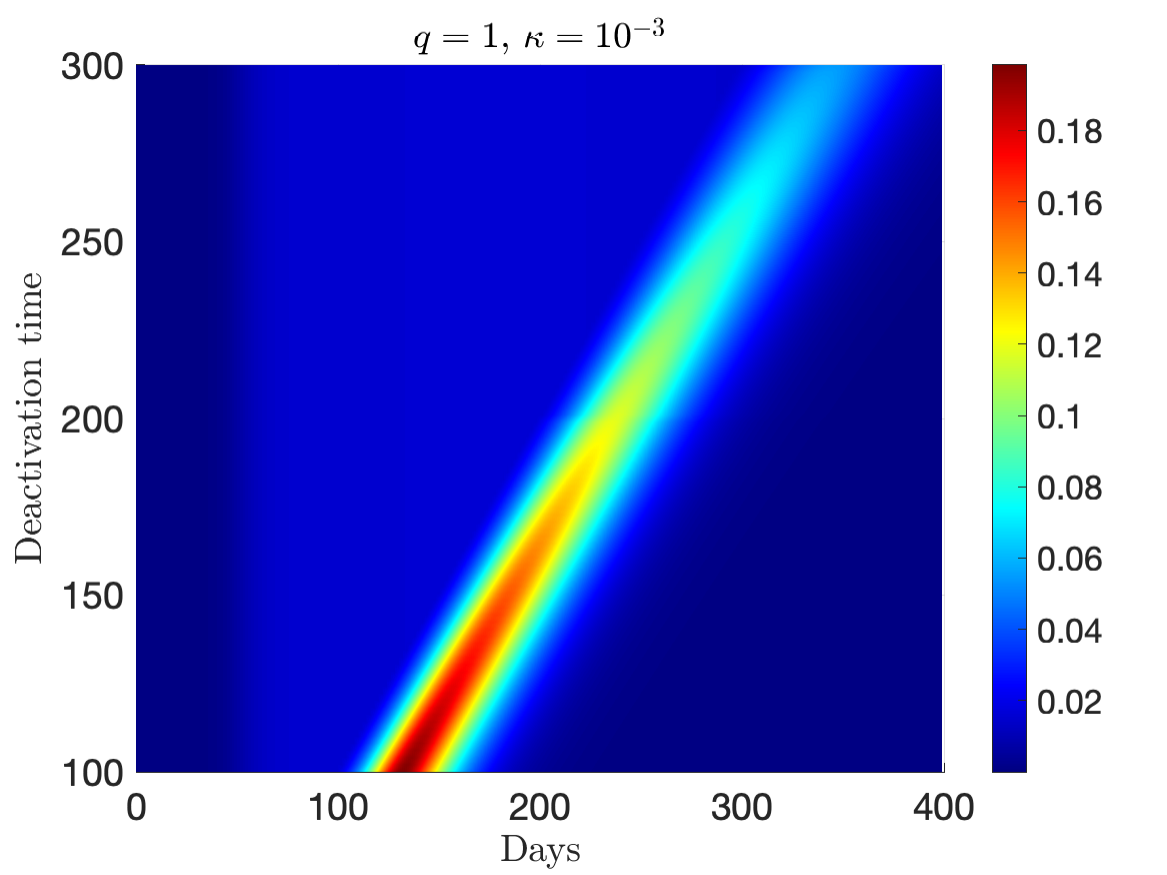}}
\subfigure[Recovered]{
\includegraphics[scale = 0.27]{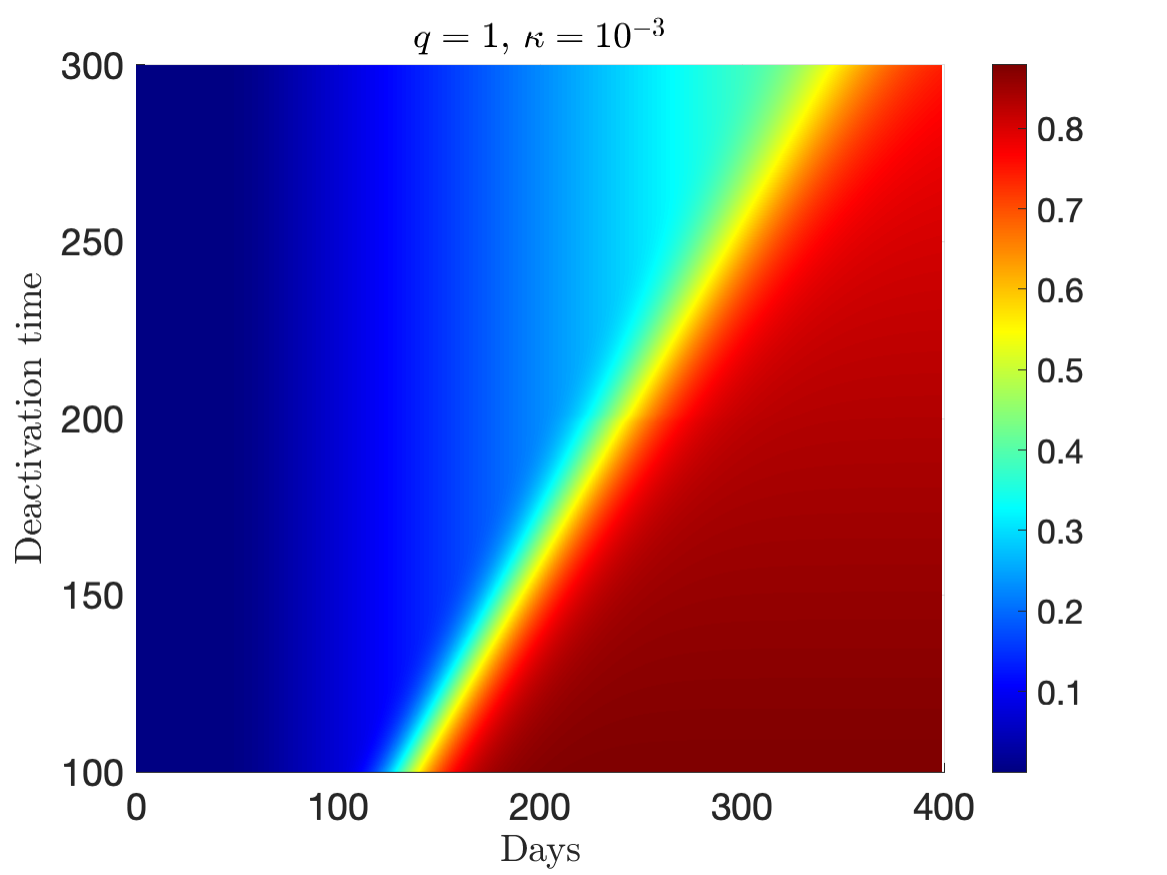}}\\
\subfigure[Infected]{
\includegraphics[scale = 0.27]{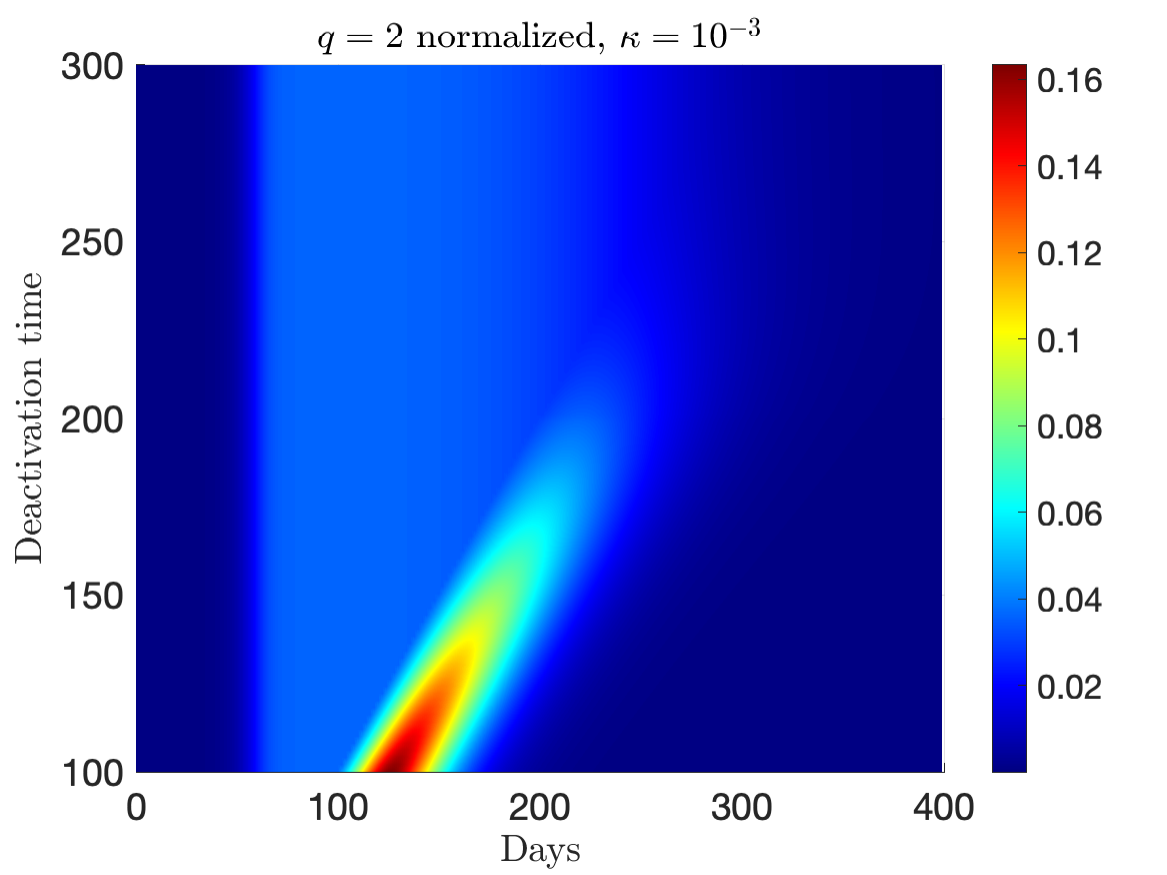}}
\subfigure[Recovered]{
\includegraphics[scale = 0.27]{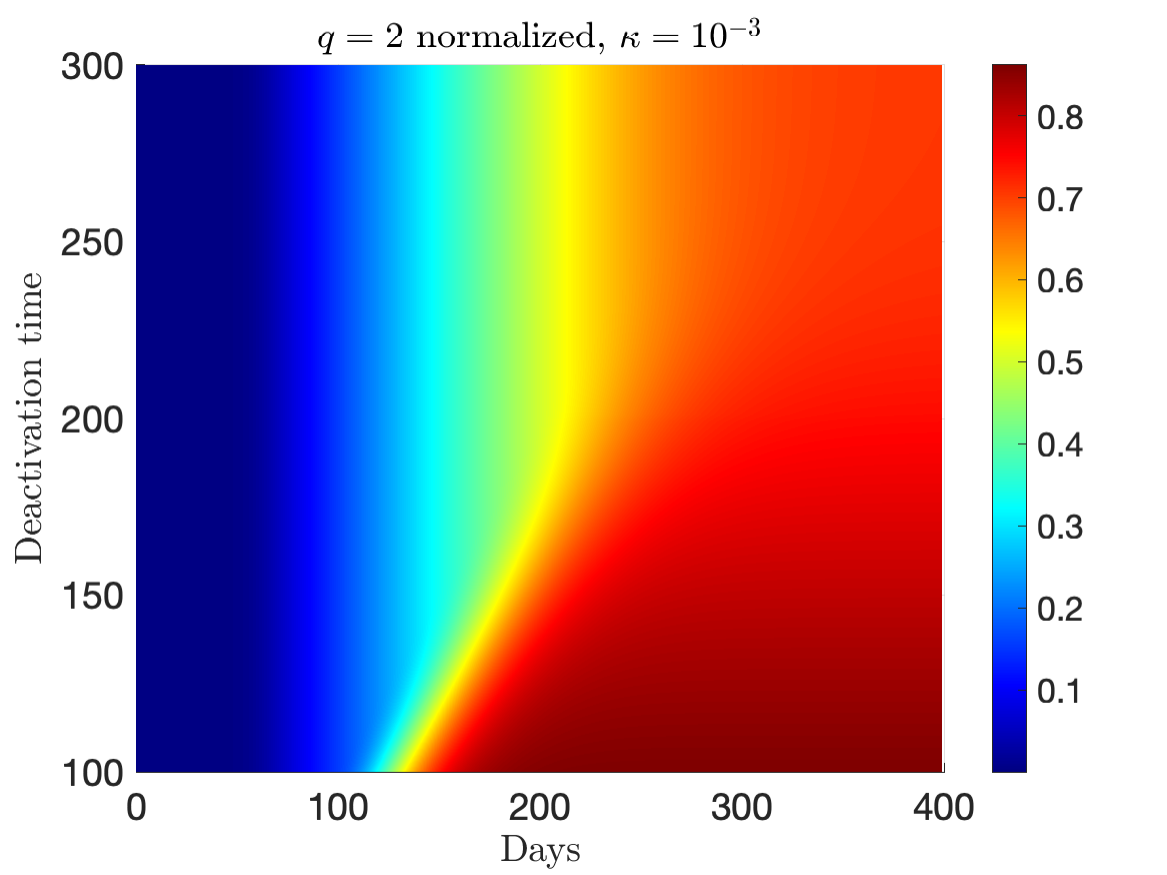}}
\caption{\textbf{Test 1}. Evolution of {the of the fraction of infected (left) and recovered (right) based on the feedback constrained model \eqref{eq:homo} for a perception function $\psi(I)=I^q/q$, $q=1$ and $q=2$ normalized, for different control actions in $[50,t_d]$ with a deactivation time $t_d \in [100,300]$} and a fixed penalization $\kappa = 10^{-3}$. }
\label{fig:deactiv}
\end{figure}

{The cost functional depends on the value of $q$ and can be evaluated summing up contributions in \eqref{eq:func_I} with  explicit form of the control given by \eqref{eq:ic}. In Figure \ref{fig:cost} the cost of the two interventions is compared. We can see how a higher cost is associated with  $q=1$ that can be obtained with the control $q=2$ only for smaller penalizations. {As expected, the normalized case $q=2$ essentially realigns the cost of interventions. Then, in Figure \ref{fig:deactiv} we compare the performance of the two controls in $[50,t_d]$ with a deactivation time $t_d \in [100,300]$.} We consider $\kappa = 10^{-3}$ for both $q=1$  and $q=2$ normalized. It can be observed that there is a minimum control horizon for both strategies, in order to avoid the onset of a second infection peak. A sufficiently long control horizon is therefore necessary to reduce the impact of the infection.}

\subsection{Test 2: Impact of uncertain data on the epidemic outbreak}

Next we focus on the influence of uncertain quantities on the controlled system with homogeneous mixing focusing on available data for COVID-19 outbreak in Italy, see \cite{Protezione}. 
The estimation of epidemiological parameters is a very difficult problem that can be addressed with different approaches \cite{Capaldi_etal,Chowell,Roberts}. 
It is worth to mention that, in the case of COVID-19, the number of infected and recovered has been largely underestimated, especially in the early phases of the epidemic, see \cite{JRGL,MKZC}. Here, we restrict ourselves to identifying the deterministic parameters of the model through a suitable data fitting procedure, considering possible deviations due to such underestimates as part of the subsequent uncertainty quantification process.
 
\subsubsection*{The data fitting process}
{In details, we have adopted a two-level approach in estimating the parameters in absence of uncertainties. In the phase preceding the lockdown we estimated the epidemic parameters in an unconstrained regime, where we assumed no social containment procedure was activated. This estimate was then kept int the subsequent lockdown phase where we estimated as a function of time the value of the control penalty parameter. Both these two calibration steps were analyzed under the assumption of homogeneous mixing, therefore model \eqref{eq:homo} has been used in lockdown phase.} 

{
First, we estimated the parameters $\beta,\gamma>0$ in the time interval $t \in [t_0,t_u]$ solving a lest square problem based on the minimization of the relative $L^2$ norm of the difference between the reported number of infected $\hat{I}(t)$ and recovered $\hat{R}(t)$, and the theoretical evolution of the unconstrained model whose solution at time $t\ge 0$ is indicated with $I(t)$ and $R(t)$. More precisely, we considered the following minimization problem}

{
\be\label{eq:min1}
\min_{\beta,\gamma \in \mathbb R_+} 
\left[(1-\theta) \| I(t)-\hat I(t) \|_{L^2([t_0,t_u])}+\theta \| R(t)-\hat R(t) \|_{L^2([t_0,t_u])} 	\right], 
\ee
where $\theta \in [0,1]$ and $\| \cdot \|_{L^2([t,s])}$ denotes the relative $L^2$ norm over a time horizon $[t,s]$.}

{Problem \eqref{eq:min1} has been solved with the constraints $\beta \in [0,1]$ and $\gamma \in \left[\dfrac{1}{24},\dfrac{1}{10}\right]$. Indeed, according to several studies the time to viral clearance during the early phase of the epidemic, corresponding to the time from the first positive test to the first negative test, can approximately span in average from 10 to 24 days, see \cite{Chen_etal,GattoPNAS,LavezzoCrisanti}. }

\begin{figure}
\centering
\includegraphics[scale = 0.35]{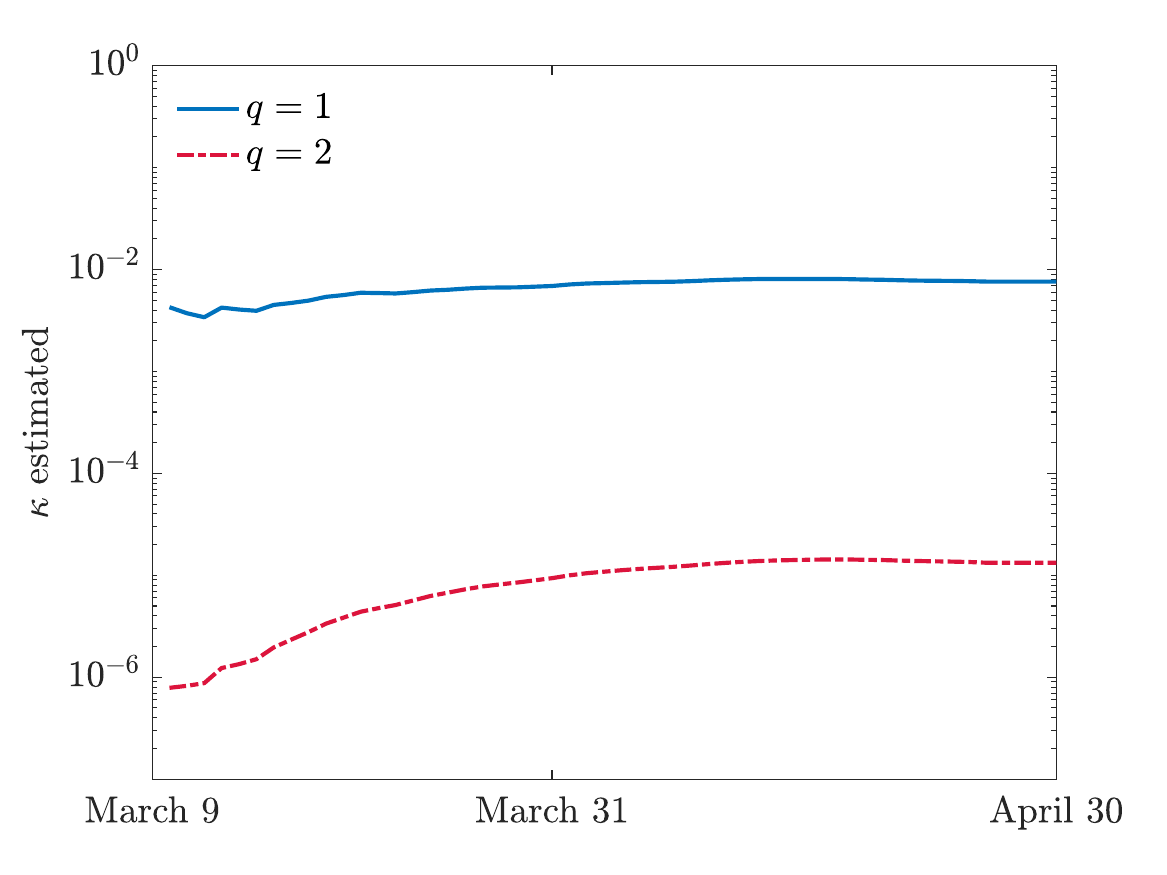}
\caption{\textbf{Test 2}. Estimated control penalization terms over time from reported data on number of infected and recovered in the case of COVID-19 outbreak in Italy. }
\label{fig:datak}
\end{figure}

{
At the end of the above optimization process we obtained the values $\beta_e \approx 0.31$, $\gamma_e \approx 0.049$ computed by averaging the optimization results with $\theta = 10^{-2}$ and $\theta = 10^{-6}$. The choice of a small value for $\theta$ is due to the low reliability on the recovered data at this early stage.}

{
Next, we estimate the penalization $\kappa = \kappa(t)$ in time by solving \eqref{eq:homo}, in the lockdown time interval $t \in (t_u,t_c]$ and for a sequence of time steps $t_i$, the corresponding least square problems in $[t_i - k_l, t_i + k_r]$ where $k_\ell,k_r\ge 1$ are integers, and where we fix the values $\beta_e,\gamma_e$ estimated in the first optimization step. In details, we solve the following minimization problem
\be\label{eq:min2}
\min_{\kappa(t_i) \in \mathbb R_+} \left[(1-\theta) \| I(t)-\hat I(t) \|_{L^2([(t_i-k_\ell,t_i+k_r])}+\theta \| R(t)-\hat R(t) \|_{L^2([t_i-k_\ell,t_i + k_r])}  \right], 
\ee
over a window of seven days corresponding to $k_\ell = 3$ and $k_r = 4$  for regularization along one week of available data.} Both minimization problems \eqref{eq:min1}-\eqref{eq:min2} have been solved testing various numerical methods in combination with adaptive solves for the systems of ODEs. The results have been obtained using Matlab functions \texttt{fmincon} in combination with \texttt{ode45}. The available data start on February 24 2020, when moderate social restrictions were enforced by the Italian government, and since the lockdown started on March 9 2020, thus we considered $t_u - t_0 = 14$ (days).

The corresponding time dependent values for the expected penalization for a perception function $\psi(I)=I^q/q$ are reported in Figure \ref{fig:datak}. After an initial adjustment phase the penalty terms converge towards a constant value that we can assume as fixed in predictive terms for future times in a lockdown scenario. This is consistent with a situation in which society needs a certain period of time to adapt to the lockdown policy. 

\begin{remark}
{Finally, we remark that the data fitting procedure is easily generalizable to epidemic models with additional compartments \cite{GattoPNAS,Giordano}. Let us denote with $C_i(t)$ the compartments that are related to the reported data $\hat C_j(t)$, $j=1,\ldots,h$, as the number of actual cases, hospitalized, deaths, etc. In the first step one minimizes a weighted $L^2$ norm in the form
\be
\label{eq:ming}
\min_{{\bf p}\in {\cal P}}
\sum_{j=1}^h w_j \| C_j(t)-\hat C_j(t)\|_{L^2[t_0,t_u]}, 
\ee
where $w_j \in [0,1]$ are suitable weights such that $\sum_{j=1}^h w_j=1$ and ${\bf p}$ is the vector of the model parameters that need to be estimated. Since the problem may admit multiple minima leading to unrealistic solutions one usually perform the above optimization process under constraints on the range of values ${\cal P}$ of some parameters such as recovery rate, incubation period, etc.}
   
{An important difference, compared to a simple SIR compartmentalization, is the presence of compartments that are not data-driven as exposed, pre-symptomatic, asymptomatic, etc. that prevent the realization of the data fitting since their values are unknown. A way to overcome this difficulty is to solve the differential model starting from an unknown time $t^* < t_0$ using as initial data the presence of a single individual in the first compartment promoting the infection (for example the exposed) and zero individuals in all other compartments. The idea is to simulate the early phase of the epidemic, by optimizing also the initial time $t^*$ in problem \eqref{eq:ming}.}

{After this, in the second step one considers the corresponding feedback controlled model (see Appendix \ref{sec:gen}) and solves the minimization problem
\be\label{eq:min2g}
\min_{{\boldsymbol \kappa}(t_i)} \sum_{j=1}^h w_j \| C_j(t)-\hat C_j(t)\|_{L^2([t_i-k_\ell,t_i + k_r])}, 
\ee  
where ${\boldsymbol \kappa}(t_i)$ is the vector of penalization terms that need to be estimated. Even in this case, assumptions on the range of values of the penalization terms may be necessary to avoid unrealistic solutions. Finally, we underline that most epidemic models are not data-driven, but one can always assume that the total number of reported cases underestimates the actual number of cases and perform the above data fitting to obtain a lower bound on the evolution of the epidemic. Then including a suitable data uncertainty, as in the present work, allows the recovery of realistic scenarios for the pandemic progression. }
\label{rk:df}
\end{remark}

\subsubsection*{Introducing data uncertainty}

{To account for the impact of uncertainties in the data and parameters we then consider a two-dimensional random variable $\z = (z_1,z_2)$ with independent components such that 
\be\label{eq:ir_z1}
i(\z,0) = i_0(1+\mu z_1),\qquad r(\z,0) = r_0(1+\mu z_1),\qquad \mu >0, 
\ee
and
\be\label{eq:beta_z2}
\gamma(\z) = \gamma_e + \alpha_\gamma z_2,\qquad \beta(\z) = \beta_e - \alpha_\beta z_2, \qquad \alpha_\gamma,\alpha_\beta >0,
\ee
where $z_1,z_2$ are chosen distributed as symmetric Beta functions in $[0,1]$, $i_0$ and $r_0$ are the initial number of reported cases and recovered taken from \cite{Protezione} on February 24, 2020.  
The choice of a Beta distribution for $p(\z)=p_1(z_1)p_2(z_2)$ is coherent with other authors \cite{Roberts,X}.} However, different probability distribution functions may be considered if additional information on the nature of the uncertainties are available.

\begin{figure}
\centering
\includegraphics[scale = 0.28]{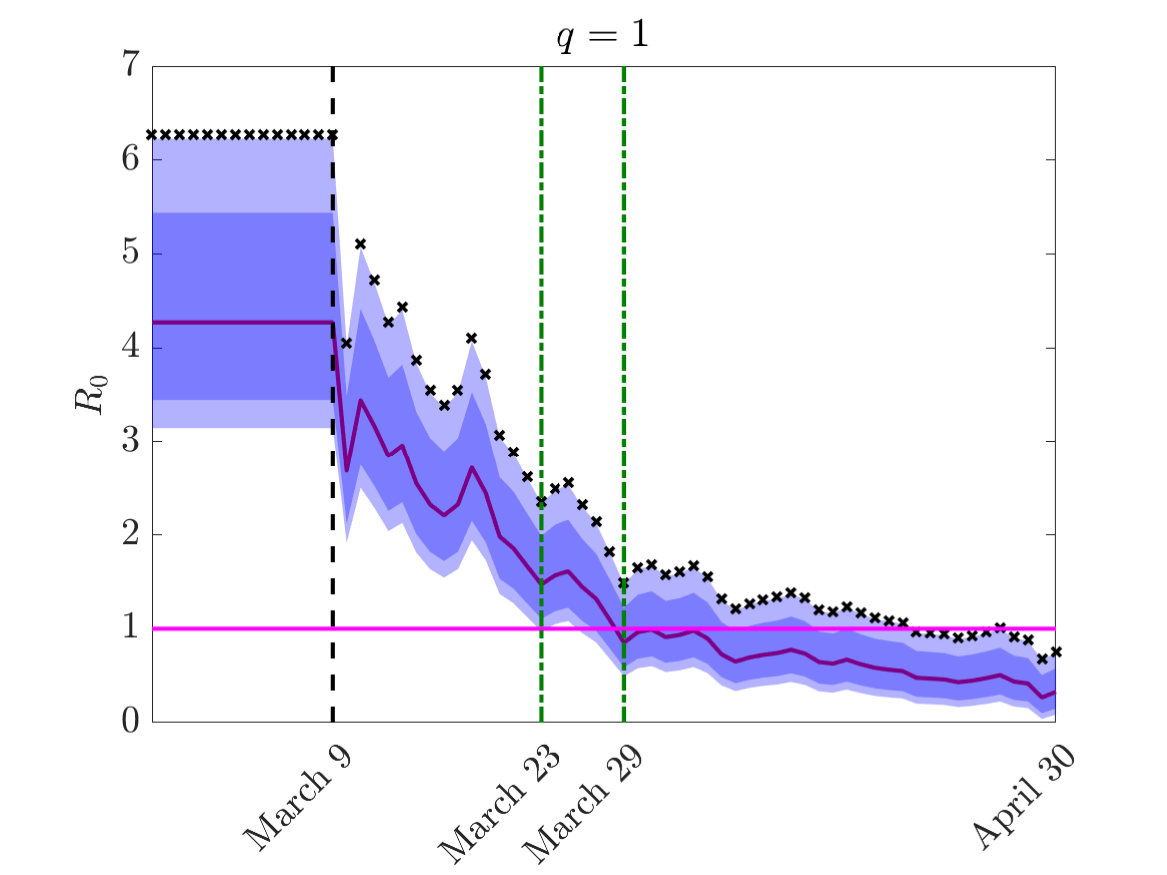}
\includegraphics[scale = 0.28]{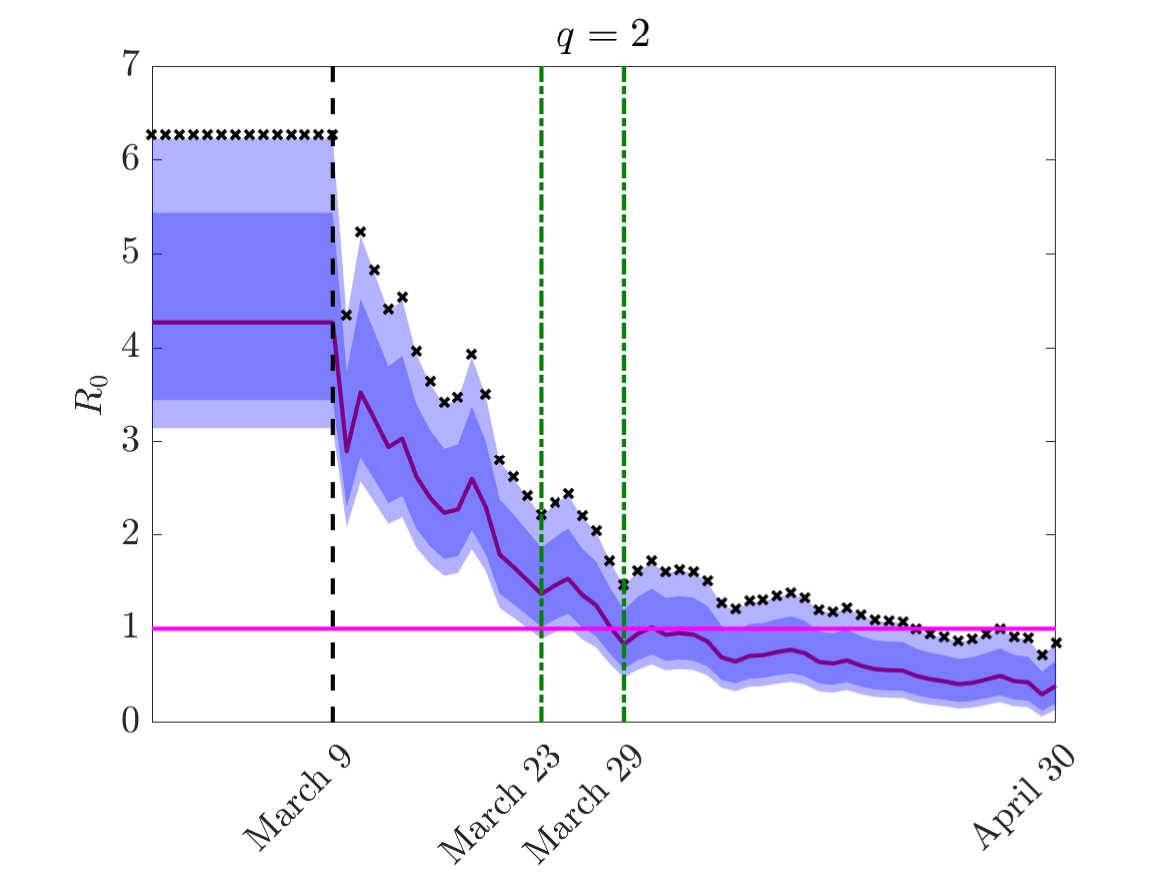}
\caption{\textbf{Test 2}. Estimated reproduction number $R_0$ {from the feedback controlled model with uncertain  data \eqref{eq:beta_z2} for a perception function $\psi(I)=I^q$, $q=1$ (left) and $q=2$ (right) together with the confidence bands. We mark with dash-dotted green lines the days in which the lower $95\%$ band and the expected $R_0$ fell below one, and with $x$-markers the estimated reproduction number relative to data fitting. }}
\label{fig:dataR}
\end{figure}

{
It should be noted that the estimated reproduction number, computed in the first optimization step, corresponds to $R^e_0=\beta_e/\gamma_e \approx 6.3$, which is at the upper limit of the values reported in the literature for COVID-19 \cite{JRGL,Zhang_etal,Liu}. Being aware of the limitations of the data fitting on reported data, to consider a more realistic range of values we assumed a stochastic dependence in $\beta$ and $\gamma$ {taking into account the faster recovery of asymptomatic individuals \cite{GattoPNAS} and the fact that asymptomatic individuals might be slightly less infectious than symptomatic cases \cite{Lancet}.}
In the simulations we take $\alpha_{\beta} = 0.03$, $\alpha_\gamma = 0.04$ and $z_2 \sim B(2,2)$ in \eqref{eq:beta_z2}. Under this assumptions the reproduction number covers a range of values approximatively in $[3.13,6.27]$ with an expected value around $4.25$.} 

{The reproduction number in the feedback controlled model is estimated from
\begin{equation}\label{eq:R0_z2}
R_0(z_2,t) = \dfrac{\beta(z_2) - u(t) \chi(t >\bar t)}{\gamma(z_2)}.
\end{equation}
In \eqref{eq:R0_z2} the time $\bar t$ is the lockdown time, in the case under study March 9th, and $\chi(\cdot)$ the indicator function. The feedback control $u(t)$ is defined from  \eqref{eq:u_z} in the case of homogeneous mixing and assuming
$\mathcal R[S(\cdot,t)I(\cdot,t)\psi'(I(\cdot,t))]=S(\z_0,t)I(\z_0,t)^{q}$ as in \eqref{eq:R2}, where $I(\z_0,t)$ is the total number of infected reported at time $t$. This leads to
\begin{equation}
u(t)= \frac1{\kappa(t)} S(\z_0,t)I(\z_0,t)^{q},
\label{eq:r0e}
\end{equation}
in agreement with the fact that the confinement restrictions have been implemented accordingly to the reported data.}
Note that, otherwise the action of the uncertainty translates into the control and leads to the unrealistic effect that the largest is the number of unreported infected the largest is the action of the control in the system. 

{In Figure \ref{fig:dataR} we report the expected value of $R_0$ together with the $95\%$ and $50\% $ confidence levels with respect to the variable $z_2$. 
The estimated reproduction number relative to data fitting is reported with $x$-marked symbols and represents an upper bound for $R_0(z_2,t)$. The results show that the $R_0$ reproduction number, thanks to the containment actions, has been drastically reduced and its expected value fell below one between March 23rd and March 29th for both $q = 1$ and $q = 2$. After March 29th the observed $R_0$ is stably below unity. Note that, in both controls the results are very similar, without any need of renormalization for $q=2$ due to the data fitting process.}

\begin{figure}
\centering
\includegraphics[scale = 0.28]{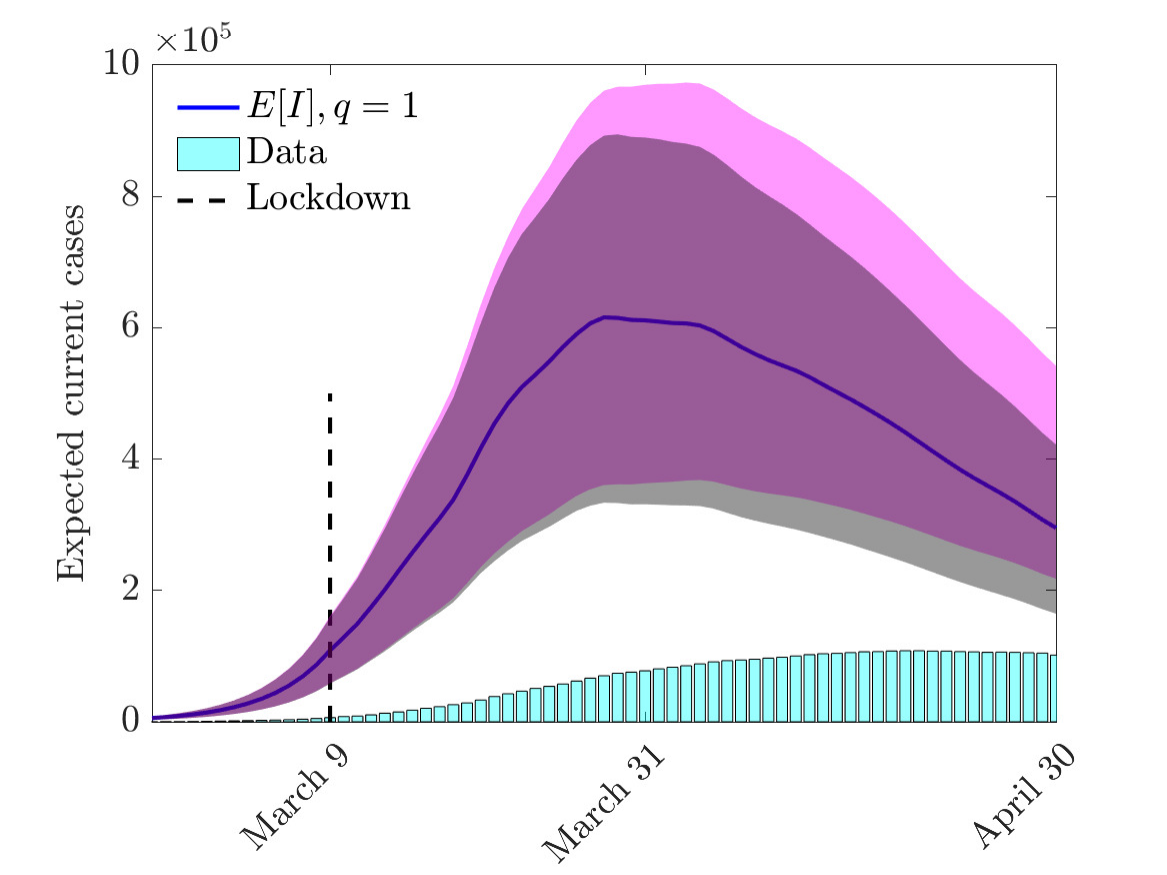}
\includegraphics[scale = 0.28]{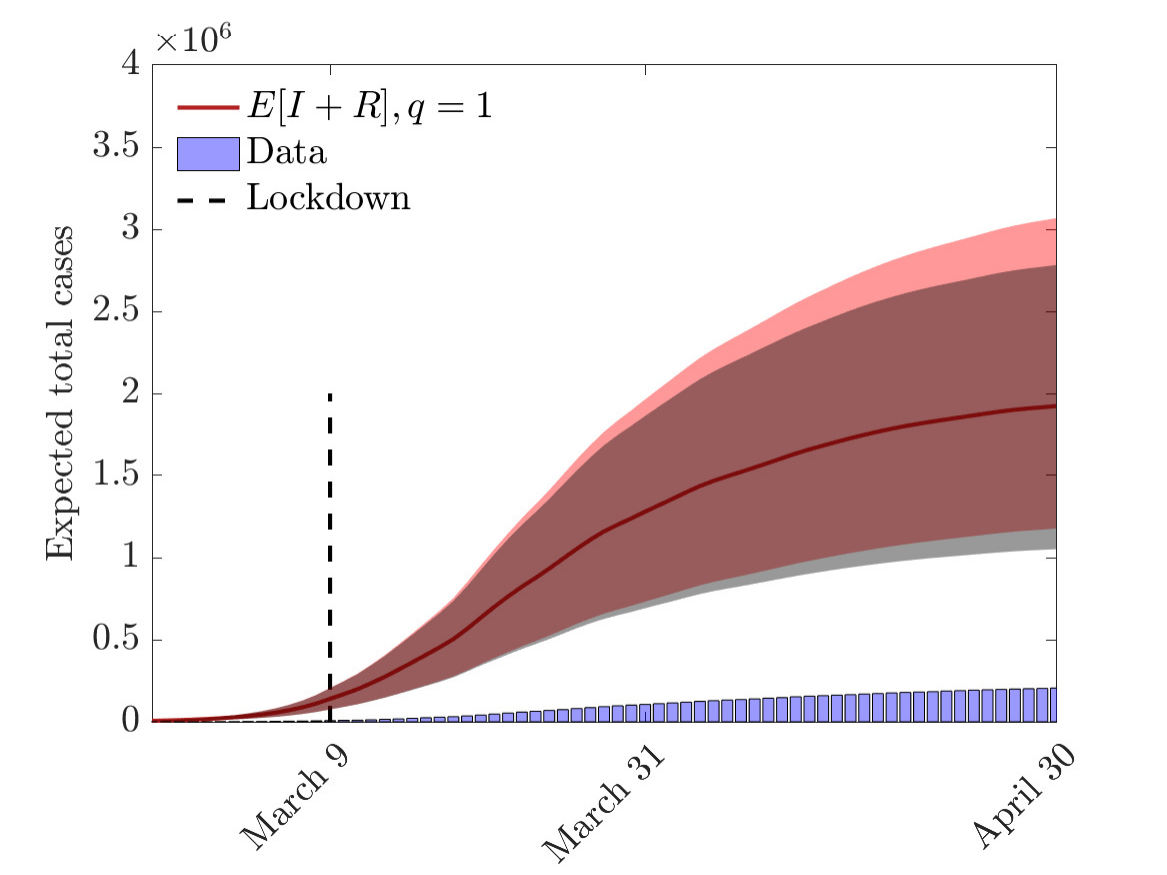}
\caption{\textbf{Test 2}. {Evolution of expected current cases (left) and of the expected total cases (right) and their 95\% confidence bands with respect to $z_1$ (shaded color) and $z_2$ (shaded gray) for the feedback controlled model with perception function $\psi(I)=I$, and uncertain initial data \eqref{eq:ir_z1}-\eqref{eq:beta_z2}}.}
\label{fig:data}
\end{figure} 

{Next we considered the evolution of the uncertain number of infected. In the following we assumed a strongly underestimated initial number of infected (including asymptomatic), taking {$\mu = 50$} so that the reported infected along the time horizon of the simulation represent approximately a $20\%$ portion of the total infected persons computed by the feedback controlled model. 
This is in accordance with the WHO findings that around $80\%$ of infected are asymptomatic\footnote{Q\&A: Similarities and differences COVID-19 and influenza. \\
\texttt{ https://www.who.int/news-room/q-a-detail/q-a-similarities-and-differences-covid-19-and-influenza}} and with the results of preliminary serological campaigns promoted in Italy\footnote{Preliminary results on the seroprevalence of SARS-CoV-2 in Italy: \\ \texttt{https://www.istat.it/it/files//2020/08/ReportPrimiRisultatiIndagineSiero.pdf}}.}

{
In Figure \ref{fig:data} we represent the evolution of the expected value of the number of infected obtained by the controlled model with perception function $\psi(I)=I$  in presence of uncertain contact and recovery rates \eqref{eq:beta_z2} and initial uncertain data \eqref{eq:ir_z1} assuming $z_1 \sim B(40,40)$ and $z_2\sim B(2,2)$. 
We represent the expected values of the current cases (left) and of the total cases (right) along with the $95\%$  confidence level with respect to the variables $z_1$ and $z_2$. The shaded color band is relative to the variability in $z_1$ 
whereas the shaded gray band is relative to the variability in $z_2$.
The bars below the graph are the reported values of the number of infected on which the model has been calibrated. The results with $q=2$ do not highlight significant differences with respect to the case $q=1$ and therefore have been omitted. }

\subsection{Test 3: The effect of social contacts in the population}

\begin{figure}
\centering

\includegraphics[scale = 0.4]{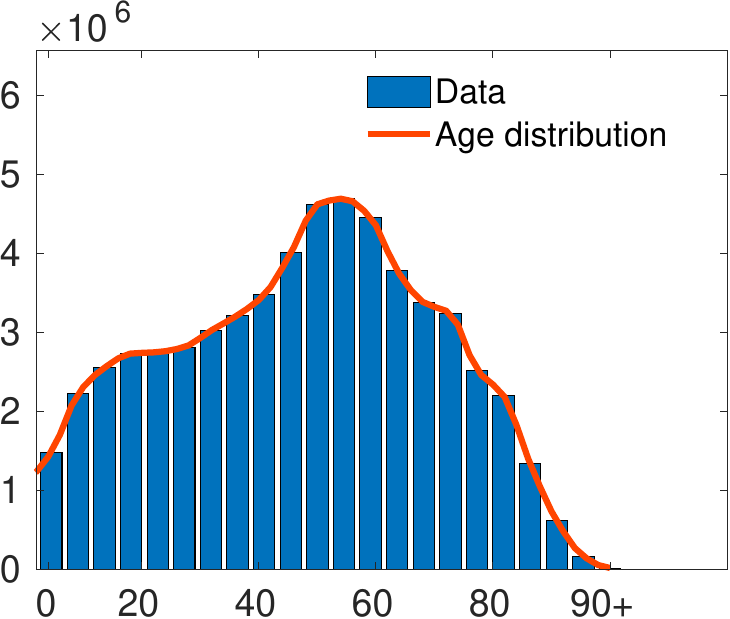}\hskip .8cm
\includegraphics[scale = 0.4]{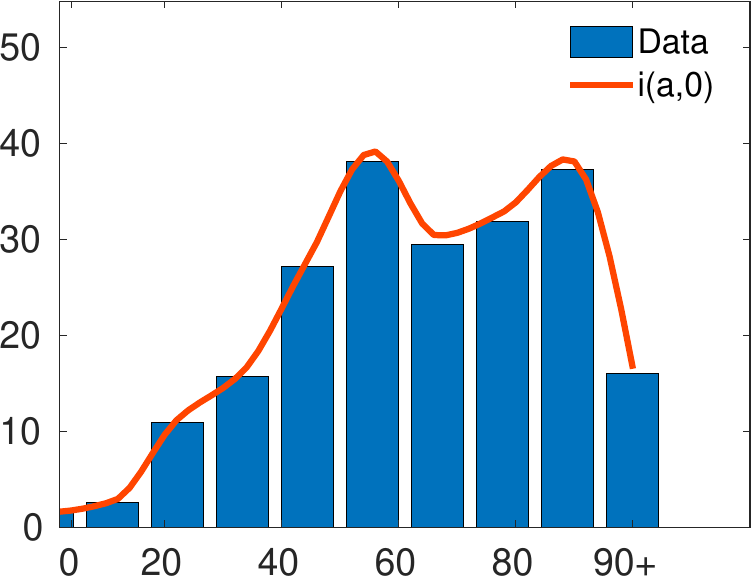}
\caption{\textbf{Test 3}. Distribution of age in Italy (left) and distribution of infected (right) together with the corresponding continuous approximations$^2$.}
\label{fig:age}
\end{figure}

\begin{figure}
\centering
\includegraphics[scale = 0.38]{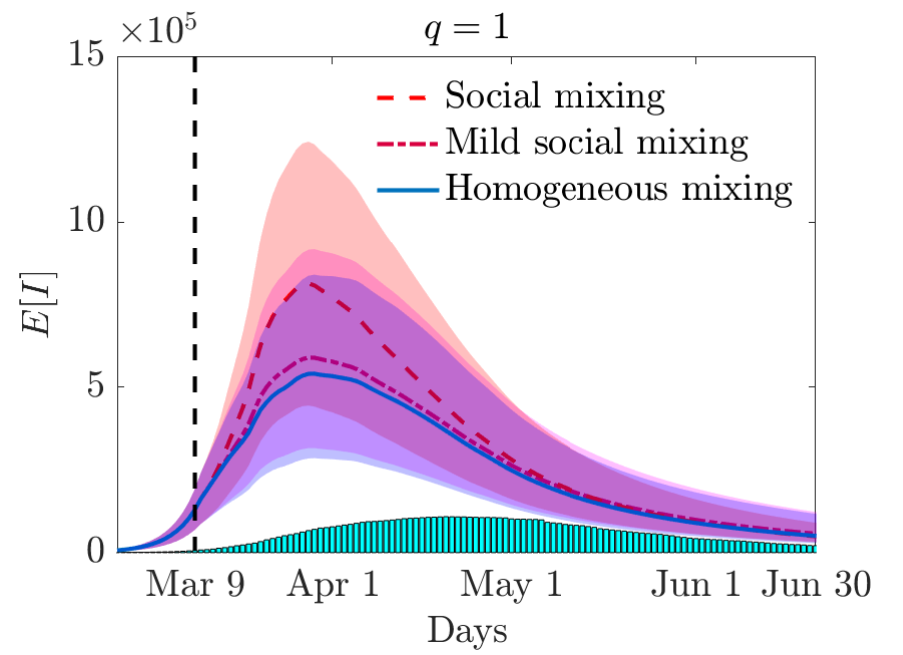}
\includegraphics[scale = 0.38]{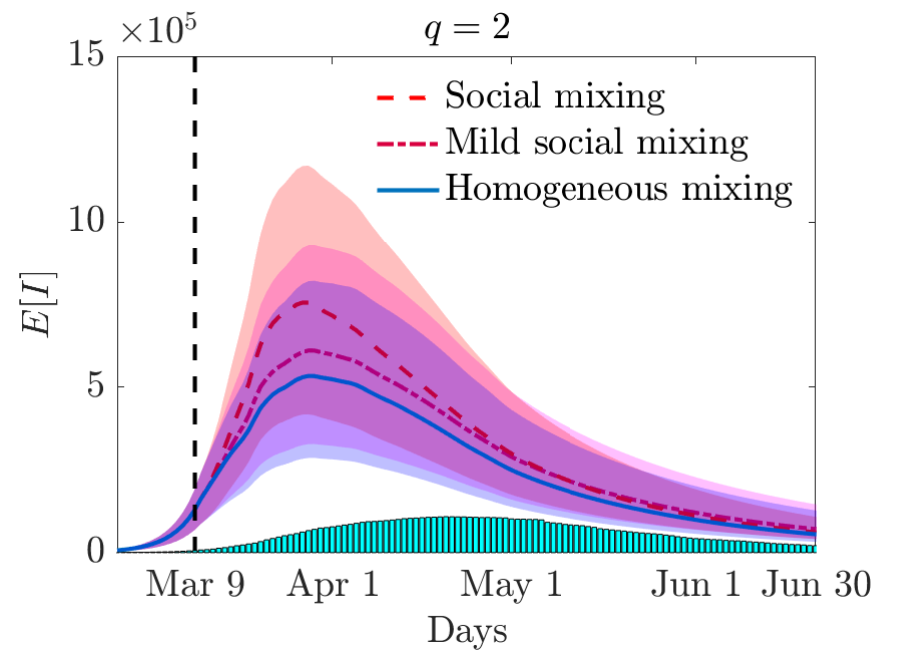}\\
\caption{\textbf{Test 3}. Expected number of infected in time for {the perception function $\psi(I)=I^q$, $q=1$ (left) and $q=2$ (right) and a constant recovery rate together with the confidence bands for homogeneous mixing ($\xi=0$), mild social mixing $(\xi = 0.75)$ and full social mixing ($\xi = 1$)}.}
\label{fig:predict}
\end{figure}

Subsequently, we analyze the effects of the inclusion of age dependence and social interactions in the above scenario. {More precisely we consider the social interaction rate $\beta = \beta(a,a_*)$, recovery rate $\gamma=\gamma(a)$ and uncertain initial number of infected.
 These functions were normalized using the estimated parameters $\beta_e$ and $\gamma_e$ in accordance with
\be\label{eq:bg}
\beta_e =\int_{\LL\times \LL} \beta(a,a_*)f(a)f(a_*)\,da\,da_*,\qquad \gamma_e = \int_{\LL} \gamma(a)f(a)\,da,
\ee
where $f(a)$ is the age distribution with $\Lambda = [0,a_{\rm max}]$, $a_{\max}=100$.}

{
The age dependent social interaction rate $\beta(a,a_*)$ is defined as follows,
\be\label{eq:beta_aa}
\beta(a,a_*) = (1-\xi)\beta_e + \xi \beta_{\rm social}(a,a_*),  
\ee
where $0\leq\xi\leq 1$, thus for $\xi=0$ we recover the {homogeneuos mixing}, whereas for $\xi = 1$ we have a full {social mixing} behavior.}
{
The social interaction function, $\beta_{\rm social}(a,a_*)$, accounts for the interactions due to specific activities $\mathcal A=\{\textrm{Family,\ Education,\, Profession}\}$ and is defined by \eqref{eq:betasocial} in Appendix \ref{app:data}. However, since after the discovery of the first case (February 21), schools, and many places of aggregation were closed in most regions of Northern Italy, we assume that $\beta_E$ is $0$ from February 24 onwards, while $\beta_P$ is reduced by a factor one-half from March 9 onwards.}

{
The choice of age dependent recovery rate $\gamma(a)$ involves a certain degree of arbitrariness, nevertheless it is reasonable to account such heterogeneity as observed in different studies \cite{GammaAge1,GammaAge2,GammaAge3}. In order to account fast recovery rate of young people, and slow recovery of the eldest we chose $\gamma(a)$ to be constant up to a specific age $a_0$ and then a decreasing function of the age. We express mathematically the recovery rate as 
\be\label{eq:gamma_a}
\gamma(a) = C_\gamma\left(\chi(a\leq a_0) + (1-\chi(a\leq a_o))e^{-r(a-a_o)}\right),  
\ee
with $r=4.5, a_o=20$ and $C_\gamma\in\mathbb{R}_+$ such that \eqref{eq:bg} holds.}

\begin{figure}
\centering
\includegraphics[scale = 0.38]{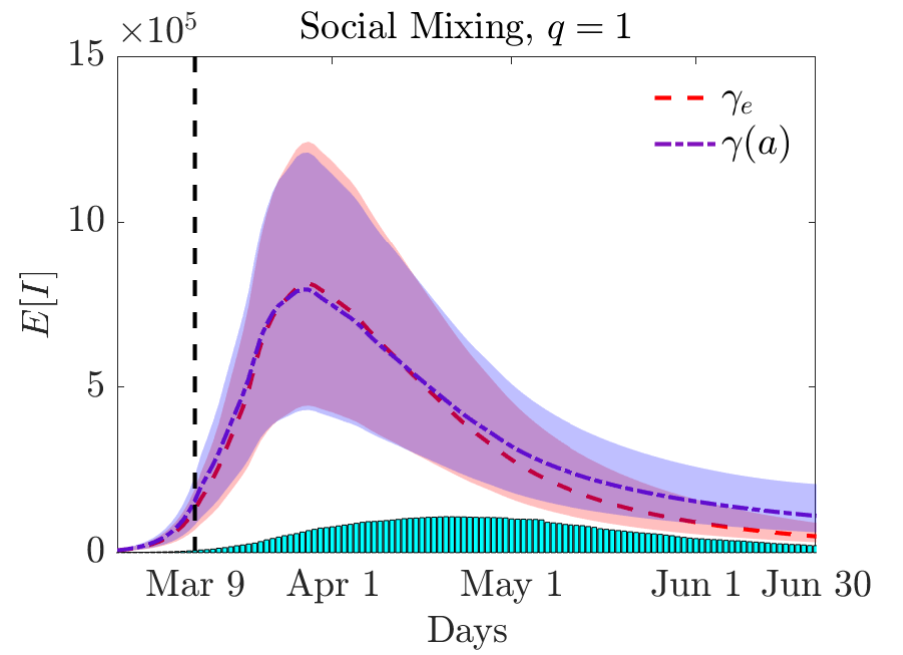}
\includegraphics[scale = 0.38]{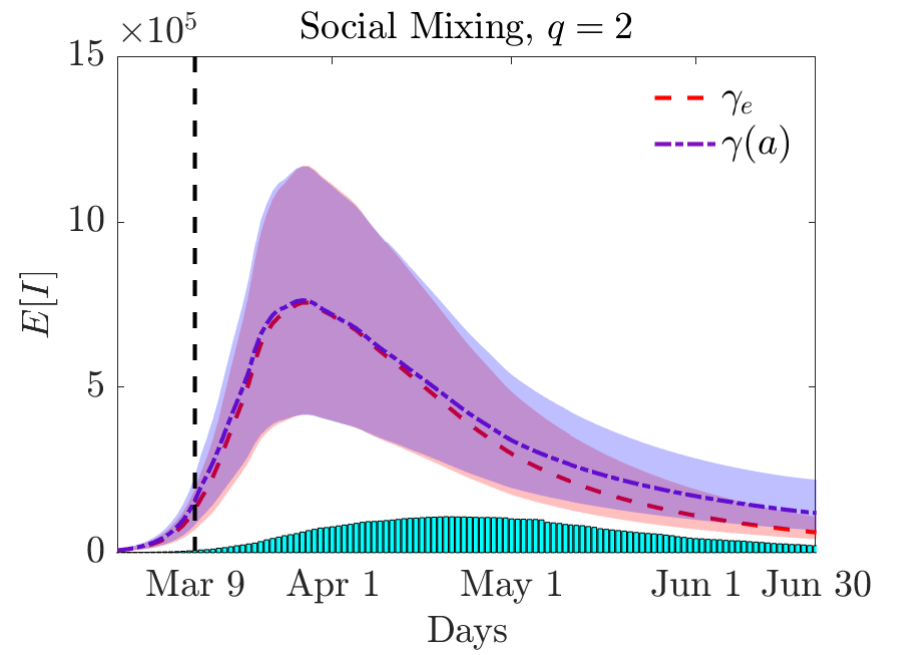}
\\
\includegraphics[scale = 0.38]{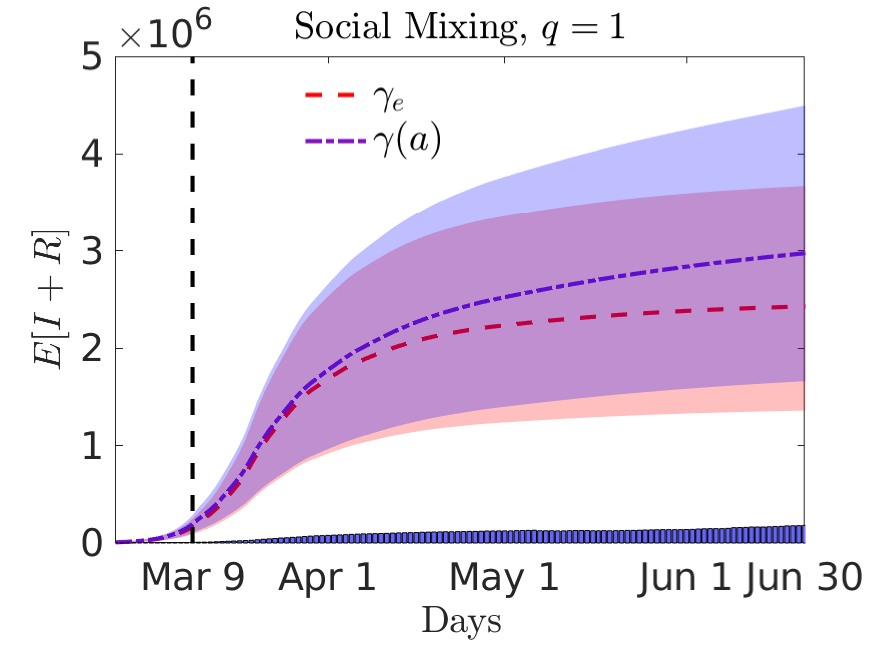}
\includegraphics[scale = 0.38]{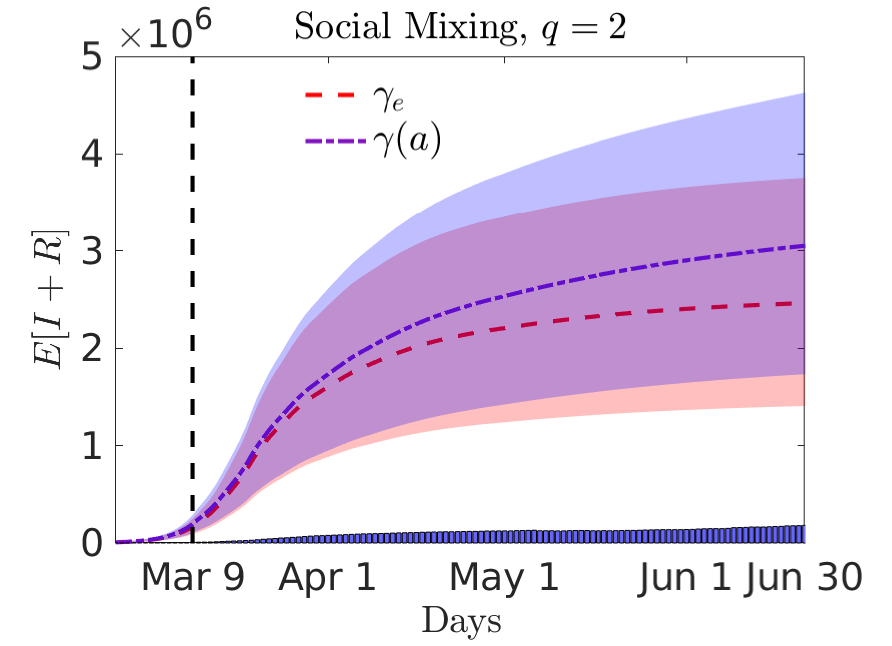}
\caption{\textbf{Test 3}. Expected number of infected and total cases of infected and recovered in time for a perception function $\psi(I)=I^q$, $q=1$ (left) and $q=2$ (right) together with the confidence bands for the social mixing scenario with age-independent or age-dependent recovery rate.}
\label{fig:predict2}
\end{figure}

We divided the computation time frame into two zones and used different models in each zone, in accordance with the policy adopted by the Italian Government. The first time interval defines the period without any form of containment from 24 February to 9 March, the second the lockdown period from 9 March. In the first zone we adopt the uncontrolled model with homogeneous mixing. Hence, in the second zone we compute the evolution of the feedback controlled age dependent model \eqref{eq:SIR_z}-\eqref{eq:u_z} with matching (on average) interaction and recovery rates accordingly to \eqref{eq:bg} and with the estimated value of the control penalization  
$\kappa(t)$ as reported in Figure \ref{fig:datak} until April 30. After April 30 the computation advances in time using as penalization term the constant asymptotic value $\bar\kappa$ reached by $\kappa(t)$. 
The initial values for the age distributions of susceptible and infectious individuals  are shown in Figure \ref{fig:age} in agreement with reported data\footnote{Source ISTAT (\texttt{https://www.istat.it}) and Istituto Superiore di Sanit\`{a} ({\texttt{https://www.epicentro.iss.it}})}. 

{
In Figure \ref{fig:predict} we report the results of the expected number of infected with the related confidence bands in case of homogeneous mixing and different levels of social mixing ($\xi = 0.75,~ \xi =1$) for the constant recovery rate $\gamma_e$. The homogeneous mixing hypothesis leads to a lower estimate of the maximum number of infected and shows a slower decay over time in the case $q=1$, whereas for $q=2$ the decay is comparable.}
{
In Figure \ref{fig:predict2} we compare the case of constant and age-dependent recovery rates for the social mixing scenario. The expected number of infected are shown in the top row, bottom row depicts the total number of recovered and infected people. The heterogeneity of the recovery rate makes the epidemic more persistent and causes an increase in the total number of cases with respect to the homogeneous recovery rate.} 
{
Finally, in Figure \ref{fig:gamma_a} we report the expected age distribution of infectious individuals in time for $q=1$. It is evident the effect of the age dependent recovery rate in the increase of cases among the oldest part of the population. Note that, this effect is partially compensated by the strength of the social mixing which reduces interaction among the elderly.}

\begin{figure}
\centering
\includegraphics[scale = 0.4]{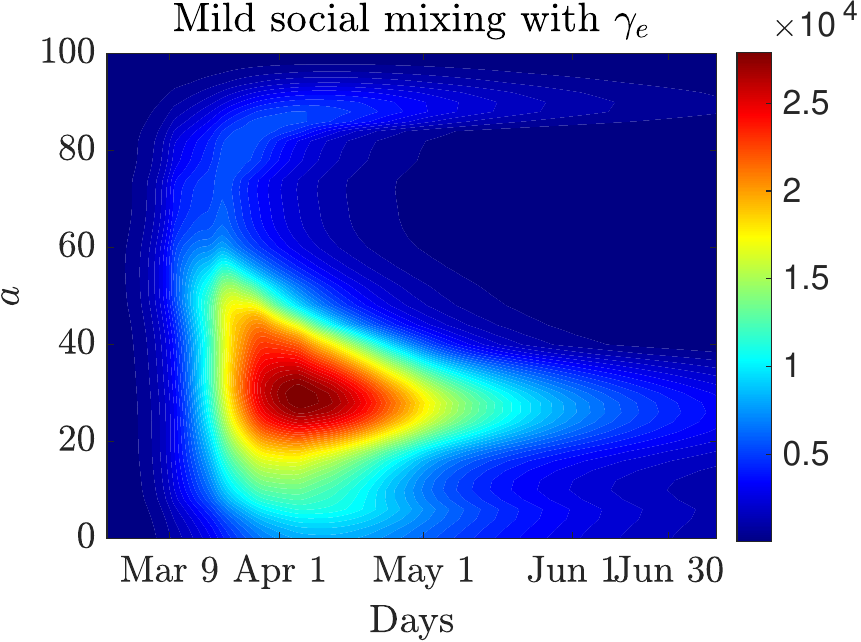}
\includegraphics[scale = 0.4]{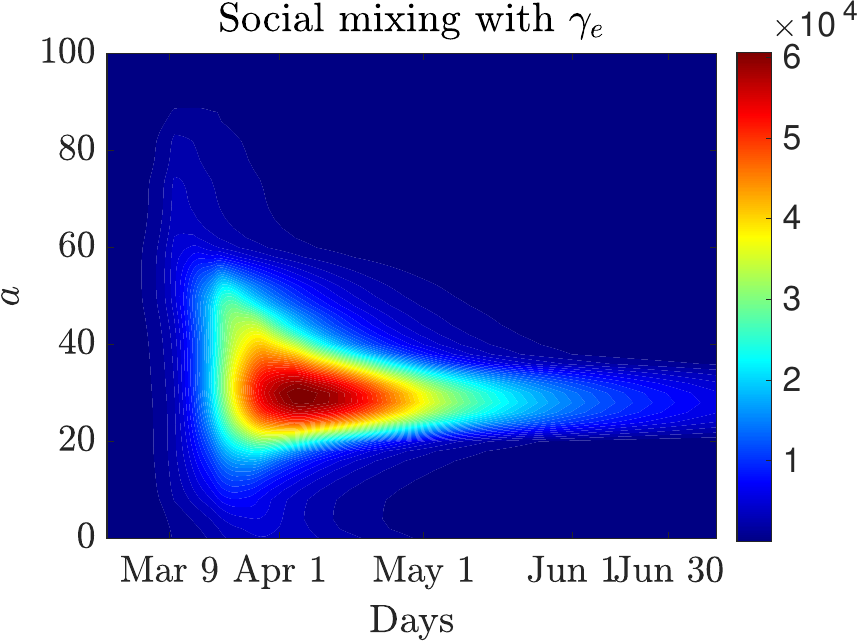}
\\
\includegraphics[scale = 0.4]{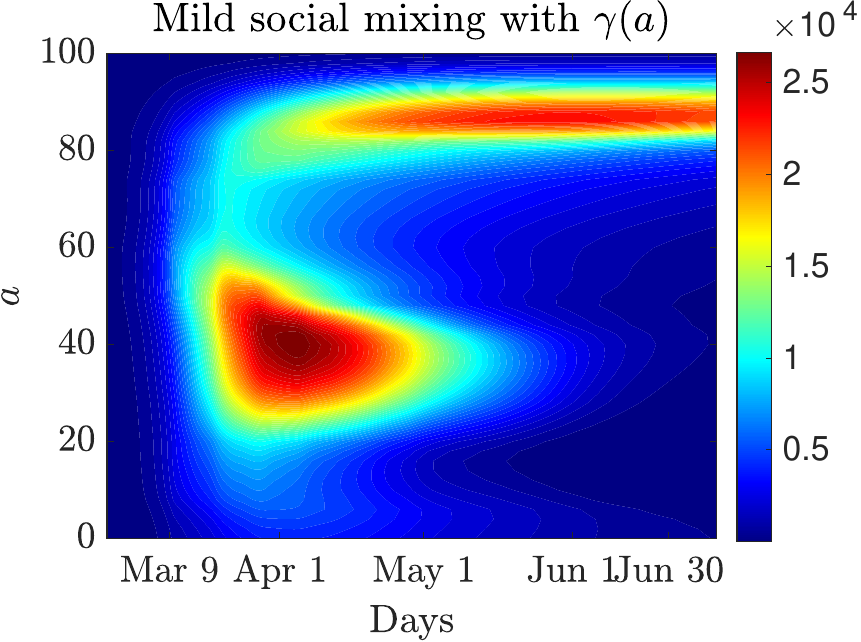}
\includegraphics[scale = 0.4]{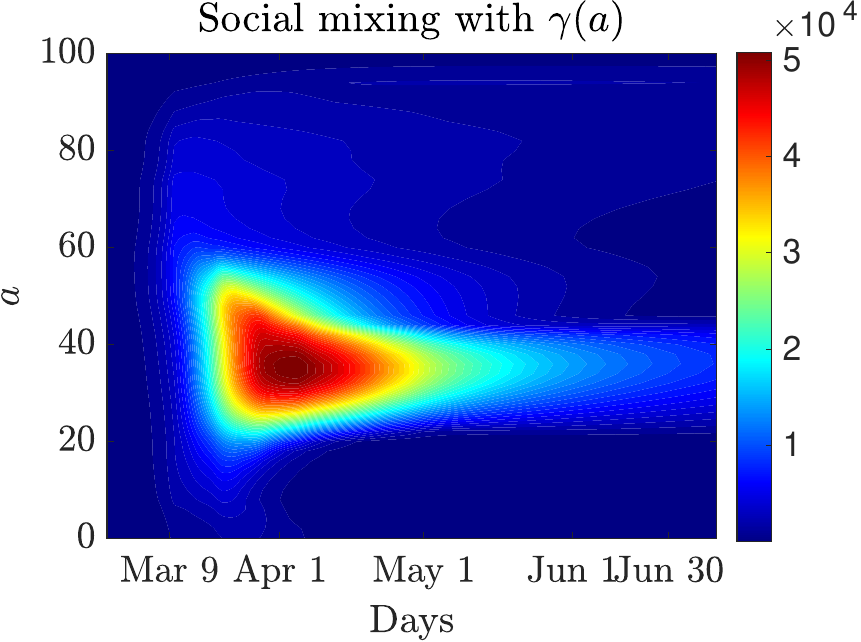}
\caption{\textbf{Test 3}. {Expected age distribution of infectious individuals for a perception function $\psi(I)=I$ with mild (left) and full (right) social mixing. In the top row $\gamma = \gamma_e$ and in the bottom row $\gamma=\gamma(a)$ defined in \eqref{eq:gamma_a}}.} 
\label{fig:gamma_a}
\end{figure}

\subsection{Test 4: Reducing the epidemic through relaxed social containment}

One of the major problems in the application of very strong containment strategies, such as the lockdown applied in Italy, is the difficulty in maintaining them over a long period, both for the economic impact and for the impact on the population from a social point of view. In order  to analyze sustainable control strategies, therefore, it is necessary to resort to models with a social structure and control methods based on specific forms of social distancing that allow the economy to restart and the population to dedicate itself, albeit in a limited way, to its pre-pandemic activities.

In accordance with the interaction function introduced in the Appendix \ref{app:data},
{ we considered the following age dependent relaxation function
\begin{equation}\label{eq:relaxP}
\Psi(a,a_*,t)=\left\lbrace 
\begin{array}{ccc}
 p_s & a,a_*\in \Lambda_P \\
 p_w & {\rm elsewhere}   
\end{array}
\right.
\end{equation}
where the interval $\LL_P$ defines the age group related to a stronger relaxation of the restrictions (in the sequel we assume $\LL_P=[25,65]$), and the parameters $0\leq p_w < p_s\leq 1$ characterize the intensity of the heterogeneity of the relaxation process  over the different age classes. Hence we relax the control parameter defined in \eqref{eq:u_z} according to 
\begin{equation}
u_{\rm relax}(a,a_*,t) = (1-\Psi(a,a_*,t)) u(a,a_*,t).
\label{eq:urelax}
\end{equation}}
{The evolution of the infection is considered within two different relaxation times, at  May 4 and at June 3, as actuated by the Italian Government during the first wave of the pandemic. We report in Table \ref{tab:par} the specific choice of the parameter $p_s$ and $p_w$ associated to different periods. Note that, these values are directly related with an increase of the disease transmission rather than to a relaxation of restrictions. In fact, it is clear that under suitable safety precautions a relaxation of restrictions may not contribute to the progress of the epidemic. 
}
	\begin{table}[t]
	\center
	\begin{tabular}{c|c|c|c|c}
           &until March 9& March 9 - May 3 & May 4 - June 2 & from June 3 \\
		\hline
		$p_s\times 100\%$  & --& 0\%& 15\%&20\%\\
		\hline
		~$p_w\times 100\%$ & --& 0\%& 5\% &10\%\\
		\hline

		\hline
	\end{tabular}
	\caption {Reduction of the feedback control \eqref{eq:urelax} over different time periods due to the relaxation of the lockdown processes  by the choice of the parameter $p_s$ and $p_w$ of the age dependent function $\Psi$ defined in \eqref{eq:relaxP}.}\label{tab:par}
\end{table}

{
In Figure \ref{fg:agedep1} (top row) we report the evolution of the age-controlled model with perception function $\psi(I)=I$, with mild social mixing and with full social mixing and homogeneous recovery rate. Each plot compare the evolution with relaxation of the containment measure and without, respectively indicated with $E[I_{\rm relax}]$ and $E[I]$. 
It is easily observed how the relaxation process increases the number of infected persons. Although the expected value remains under control, the wider confidence bands highlight the risk of a resumption of the epidemic. Of course, a further relaxation of the values in Table \ref{tab:par} will lead to a higher risk of restarting the epidemic wave.
In Figure \ref{fg:agedep1} (bottom row) we report also the evolution of the age-dependent expected number of infected individuals, both for mild social mixing and full social mixing. We remark that, as expected, the relaxation process focuses the increase of infections in the interval characterized by $\LL_P$.}

As can be seen, a gradual strategy can keep the average number of infections under control and have an outcome comparable to the fully controlled model at a potentially lower social cost. The timing and intensity of interventions, however, are crucial to prevent the restart of the epidemic wave.

\begin{figure}
	\includegraphics[scale =0.38]{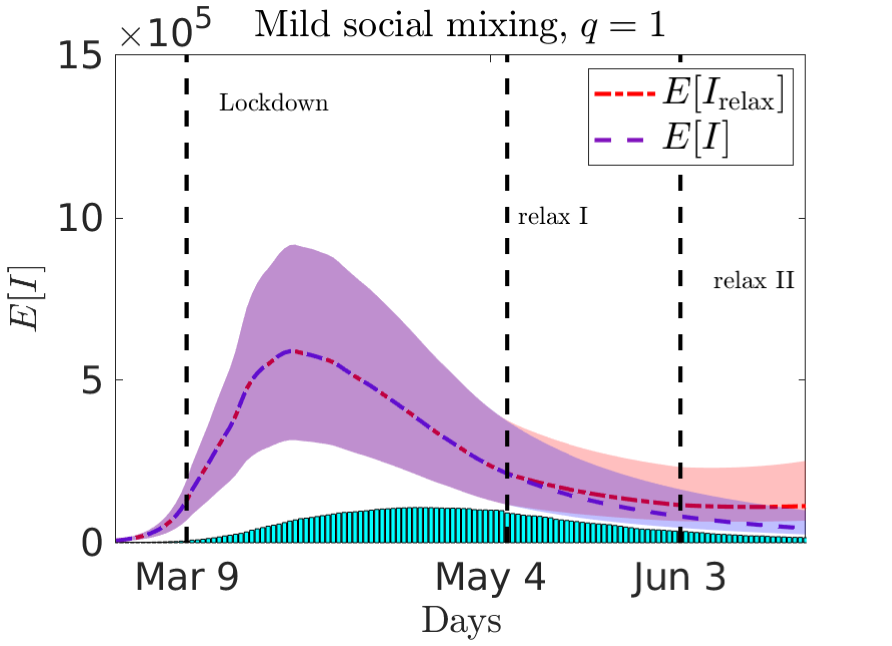}
	\includegraphics[scale =0.38]{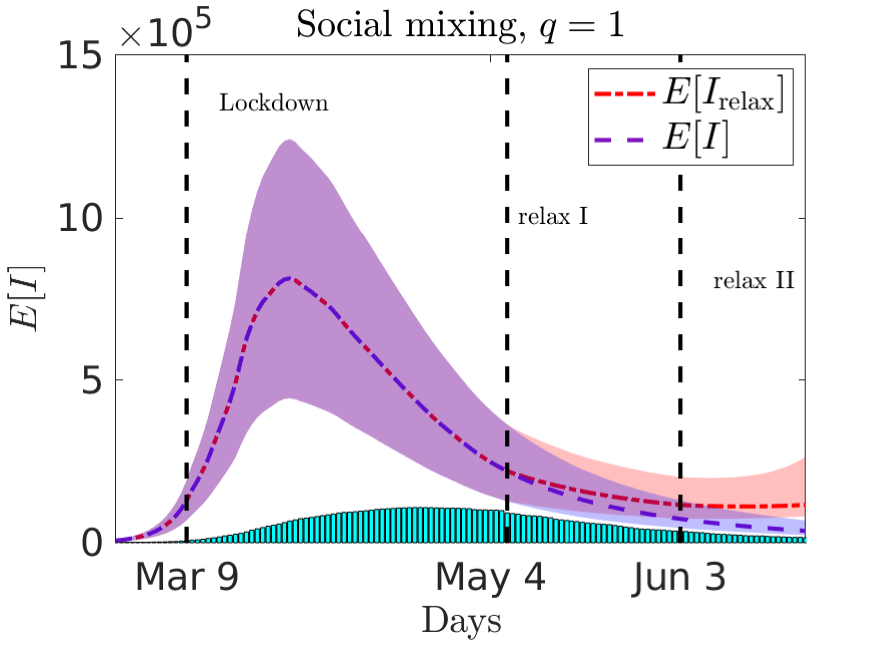} \\
	\includegraphics[scale = 0.4]{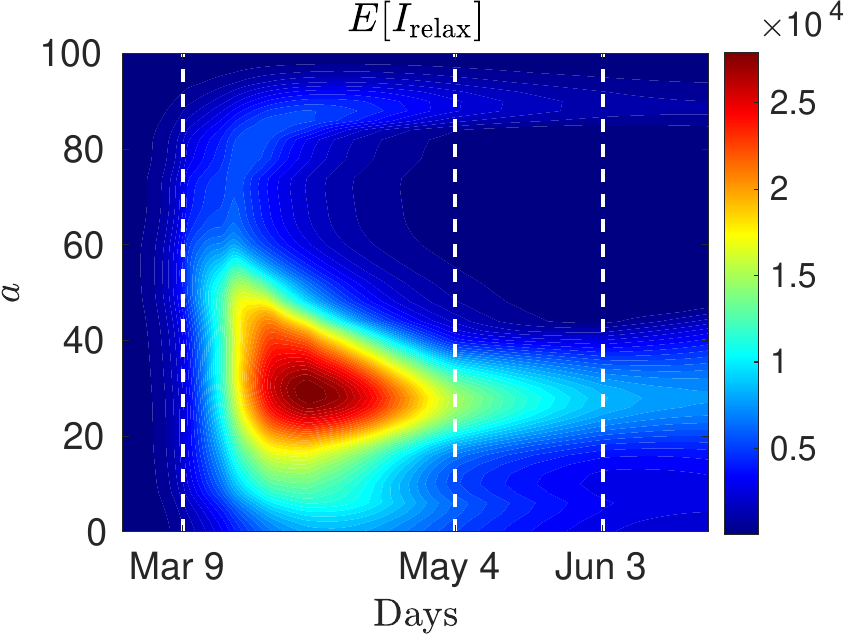}
	\includegraphics[scale = 0.4]{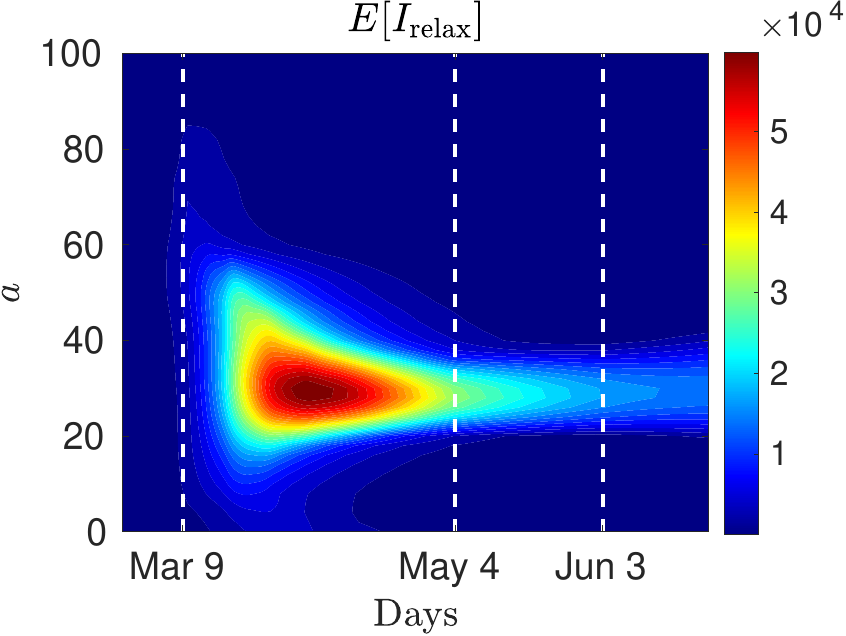}
\caption{\textbf{Test 4}. {Expected number of infected with relaxed control \eqref{eq:urelax} characterized by Table \ref{tab:par}, perception function $\psi(I)=I$, and homogeneous recovery rate, using mild social mixing (left) and with full social mixing (right). In the bottom row the corresponding expected age distribution of infectious individuals is reported.}}
\label{fg:agedep1}
\end{figure}

\section{Conclusions}
Quantifying the impact of uncertain data in the context of an epidemic emergency is essential in order to design appropriate containment measures. Such containment measures, implemented by several countries in the course of the COVID-19 epidemic, have proved effective in reducing the $R_0$ reproduction number to below or very close to one. These large-scale non-pharmaceutical interventions vary from country to country but include social distancing (banning large mass events, closing public places and advising people not to socialize outside their families), closing borders, closing schools, measures to isolate symptomatic individuals and their contacts, and the large-scale lock-down of populations with all but essential prohibited travel.

One of the main problems is the sustainability of these interventions, which until the introduction of a vaccine will have to be maintained in the field for long periods.
However, estimating the reproductive numbers of SARS-CoV-2 is a major challenge due to the high proportion of infections not detected by health care systems and differences in test application, resulting in diverse proportions of infections detected over time and between countries. Most countries initially had only the capacity to test a small proportion of suspected cases and tests were reserved for severely ill patients or high risk groups. The available data therefore offer a systematic partial overview of trends.

In this article, starting from a SIR-type compartmental model with social structure, we developed new stochastic mathematical models describing the actions of a government agency to contain infections among the population in presence of an uncertain number of infectious individuals. {More precisely, in the present model, the state of the infected is defined by a multi-dimensional time-dependent function characterized by the age, which proved essential in the description of the COVID-19 pandemic, and by a systematic uncertainty which permits to avoid additional sub-compartmentalization of the infectious population.}

These assumptions allows to derive a socially structured model that contains the control action in feedback form based on the perception of the policy maker of the spread of the disease. Subsequently, the uncertainty in the model has been approximated by expanding the state variables into orthogonal polynomials in the random space, reducing the problem to a set of deterministic equations for the distribution of the solution through
the course of the epidemic. {The resulting controlled stochastic dynamical system is then solved using this deterministic formulation, which in the case of sufficiently regular uncertainty distributions allows efficient and accurate estimations of the random parameters}. 

{The numerical simulations, performed using data from the recent COVID-19 outbreak in Italy, show, on the one hand, the model's ability to characterize the presence of the asymptomatic population trough the introduction of an uncertainty in the number of infected and in their epidemic role, and, on the other hand, in presence of control to well describe the effects of non pharmaceutical interventions aimed at flattening the infection curve. In particular, identifying some scenarios in agreement with government actions, containment measures by the population based on the resumption of certain occupational activities characterized by specific age groups and social interaction matrices were studied.} 

{Further studies in this direction will aim to consider more comprehensive epidemic models, including the effects of other clinical compartments of interest, along with generalization of the control term to different objective functions or to the case of multiple control terms for each social activity, in order to design optimal strategies to mitigate the overall epidemic impact.}

\appendix
{
\section{Feedback controlled models with additional compartmentalizations}\label{sec:gen}
 In this Appendix, we will detail how to extend the instantaneous control approach introduced in Section \ref{sect:ins} in the case of a a socially-structured SEPIAR-type compartmentalization \cite{GattoPNAS} where the control functional is given by 
\begin{equation}\label{eq:func_m}
\begin{split}
\min_{{\bf u}\in \mathcal U}J({\bf u}):=& \int_0^T\psi(I_T(t)) dt\\
&+\dfrac{1}{2}\sum_{j\in\{P,I,A\}}\int_0^T\int_{\LL\times\LL} {\nu_j(\as,\as_*,t)}|u_j(\as,\as_*,t)|^2\ d\as d\as_*  dt,
\end{split}
\end{equation}
with ${\bf u} = (u_P,u_I,u_A)$ and
subject to
\begin{equation}\label{eq:SEPIARc}
\begin{split}
\frac{d}{dt} \fS(\as,t)&= - \fS(\as,t)\sum_{j\in\{P,I,A\}}\int_{\LL} (\beta_j(\as,\as_*) -u_j(\as,\as_*,t)){\fI_j(\as_*,t)}\ d\as_* \\
\frac{d}{dt} \fE(\as,t) &=  \fS(\as,t)\sum_{j\in\{P,I,A\}}\int_{\LL} (\beta_j(\as,\as_*) -u_j(\as,\as_*,t)){\fI_j(\as_*,t)}\ d\as_* - \delta_E(\as) \fE(\as,t)\\
\frac{d}{dt} \fP(\as,t) &= \delta_E(\as) \fE(\as,t)-\delta_P(\as) \fP(\as,t)\\
\frac{d}{dt} \fId(\as,t) &=  \sigma(\as)\delta_P(\as)\fP(\as,t)-\gamma_I(\as)\fId(\as,t) \\
\frac{d}{dt} \fA(\as,t) &=  (1-\sigma(\as))\delta_P(\as)\fP(\as,t)-\gamma_A(\as)\fA(\as,t) \\
\frac{d}{dt} \fR(\as,t) &= \gamma_I(\as) \fI(\as,t)+\gamma_A(\as) \fA(\as,t).
\end{split}
\end{equation}
In the above compartmentalization $\fE, \fP, \fId$, and $\fA$ represent the number of exposed, pre-asymptomatic, infected with heavy symptoms and asymptomatic/mildly symptomatic individuals, respectively \cite{GattoPNAS}. The parameters $\beta_j$, $j \in \{P, I, A\}$ are the specific transmission rates of the three infectious classes. Exposed individuals, still not contagious, enter the presymptom stage at rate $\delta_E$ and only then become infectious. Pre-symptomatic individuals progress at rate $\delta_P$ to become symptomatic who develop severe symptoms with probability $\sigma$ or asymptomatic individuals with probability $1-\sigma$. Symptomatic individuals recover from infection at rate $\gamma_I$ whereas asymptomatic individuals leave their compartment after having recovered from infection at rate $\gamma_A$.
}

{ 
In \eqref{eq:func_m} we assume that the policy maker aims at minimizing the total number of infected
\[
I_T(t)=\int_{\Lambda} \left(\fP(\as,t)+\fId(\as,t)+ \fA(\as,t)\right)\,d\as.
\]
We highlight that other control strategies may be considered as well leading to different feedback control formulations.}

{For the socially-structured SEPIAR model the feedback controlled formulation is then obtained from the discrete approximation
\begin{equation}\label{eq:func_Im}
\min_{{\bf u}\in \mathcal U}J_h({\bf u}):= \psi(I_T(t+\h))+\dfrac{1}{2}\sum_{j\in\{P,I,A\}}\int_{\LL\times\LL} {\nu_j(\as,\as_*,t)}|u_j(\as,\as_*,t)|^2\ d\as d\as_*,
\end{equation}
subject to
\begin{eqnarray}
\nonumber
\fS(\as,t+\h) &=&  \fS(\as,t)-\h \fS(\as,t)\!\!\!\!\!\!\sum_{j\in\{P,I,A\}}\!\int_{\LL} (\beta_j(\as,\as_*) -u_j(\as,\as_*,t)){\fI_j(\as_*,t)}\ d\as_* \\
\nonumber
\fE(\as,t+\h) &=&  \fE(\as,t)+\h\fS(\as,t)\!\!\!\!\!\!\sum_{j\in\{P,I,A\}}\!\int_{\LL} (\beta_j(\as,\as_*) -u_j(\as,\as_*,t)){\fI_j(\as_*,t)}\ d\as_* - \h\delta_E(\as) \fE(\as,t)\\
\label{eq:SEPIARcc}
\fP(\as,t+\h) &=&  \fP(\as,t)+\h \delta_E(\as) \fE(\as,t+\h)-\h \delta_P(\as) \fP(\as,t)\\
\nonumber
\fId(\as,t+\h) &=& \fId(\as,t)+\h\sigma(\as)\delta_P(\as)\fP(\as,t+\h)-\h\gamma_I(\as)\fId(\as,t)\\ 
\nonumber
\fA(\as,t+\h) &=&  \fA(\as,t)+\h(1-\sigma(\as))\delta_P(\as)\fP(\as,t+h)-\h\gamma_A(\as)\fA(\as,t)
\end{eqnarray}
where the discrete equation for $\fR(\as,t)$ can be omitted since the control does not play any direct role in its evolution. The total number of infected evolved accordingly to 
\begin{equation}
I_T(t+\h) = I_T(t)+\h\int_{\LL}\left(\delta_E(\as)\fE(\as,t+\h)-\gamma_I(\as)\fId(\as,t)-\gamma(\as)\fA(\as,t)\right)\,d\as.
\label{eq:IT} 
\end{equation}
We can derive the minimizer of $J_h$ computing $\nabla_{\bf u} J_h({\bf u}) \equiv 0$ or equivalently
\[
\frac{\partial J_h(u_i)}{\partial u_j} = 0,\qquad i,j \in \{P,I,A\}.
\]
Using \eqref{eq:func_Im} we can compute
\[
\psi'(I_T(t+\h))\frac{\partial I_T(\as,t+\h)}{\partial u_j} =-\nu_j(\as,\as_*,t) u_j(\as,\as_*,t),\quad j \in \{P,I,A\},
\]
which by virtue of \eqref{eq:IT} leads to the discrete feedback controls 
\begin{equation}
u_j(a,a_*,t)=\frac{h^2\delta_E(\as)}{\nu_j(a,a_*,t)}s(a,t) i_j(\as_*,t) \psi'(I_T(t+\h)),\quad  j \in \{P,I,A\}. 
\end{equation}
In order to pass to the limit $\h\to 0$ in \eqref{eq:SEPIARcc} and obtain the feedback controlled model, we must rescale the penalization parameters with respect to the short time-horizon $\h$. With the aim of preserving the compartment of exposed,  the penalization terms should rescale as $\nu_j(a,a_*,t)=h^2 \kappa_j(a,a_*,t)$ to get the feedback controlled model \eqref{eq:SEPIARc} where the control terms are given by
\begin{equation}
u_j(a,a_*,t)=\frac{\delta_E(\as)}{\kappa_j(a,a_*,t)}s(a,t) i_j(\as_*,t) \psi'(I_T(t)),\quad  j \in \{P,I,A\}. 
\label{eq:ca2}
\end{equation}
It is interesting to observe that the control action in \eqref{eq:ca2} is inversely proportional to the incubation time, namely shorter incubation periods requires a stronger control. On the contrary, if one assumes the same rescaling adopted for the SIR model in \eqref{eq:ic}, namely $\nu_j(a,a_*,t)=h \kappa_j(a,a_*,t)$, and select a short time horizon equal to the incubation time $\h=1/\delta_E$, the exposed compartment in \eqref{eq:SEPIARcc} reduces to 
\[
\fE(\as,t+\h) =  \h\fS(\as,t)\!\!\!\!\!\!\sum_{j\in\{P,I,A\}}\!\int_{\LL} (\beta_j(\as,\as_*) -u_j(\as,\as_*,t)){\fI_j(\as_*,t)}\ d\as_*
\]
which substituted into the equation for $\fP(a,t+h)$ in the limit $h\to 0$ leads to a feedback controlled model in the form \eqref{eq:SEPIARc} without the compartment of exposed and where the pre-symptomatic satisfy
\[
\frac{d}{dt} \fP(\as,t) =  \fS(\as,t)\sum_{j\in\{P,I,A\}}\int_{\LL} (\beta_j(\as,\as_*) -u_j(\as,\as_*,t)){\fI_j(\as_*,t)}\ d\as_* - \delta_P(\as) \fP(\as,t).
\]
Finally, we emphasize that the generalization to the case with uncertainty contains no difficulty and follows the same methodology introduced in Section \ref{sect:control_UQ}. In terms of data-fitting, similar to the SIR model, one considers the feedback controlled model in absence of age-dependence and proceeds along the lines indicated in Remark \ref{rk:df}. We will omit the details for brevity.}

\section{Stochastic Galerkin approximation}\label{app:sG}
In this Appendix we give the details of the stochastic Galerkin (sG) method used to solve the feedback controlled system \eqref{eq:SIR_z}-\eqref{eq:u_z} with uncertainties.
To this aim, we consider a random vector $\z = (z_1,\dots,z_d)$ with independent components and whose distribution is $p(\z):\mathbb R^{d_z} \rightarrow \mathbb R^{d_z}_+$. The stochastic Galerkin approximation of the differential model \eqref{eq:SIR_z}-\eqref{eq:u_z} is based on stochastic orthogonal polynomials and provides a spectrally accurate solution under suitable regularity assumptions, see \cite{X}. We consider the linear space $\mathbb P_M$ of polynomials of degree up to $M$ generated by a family of polynomials $\{\Phi_{\mathbf h}(\z)\}_{|\mathbf h|=0}^M$ that are orthonormal in the space $L^2(\Omega)$ such that
\[
\mathbb E[\Phi_h(\cdot)\Phi_k(\cdot)] =\delta_{\mathbf k \mathbf h}, \qquad 0\le |\mathbf k|, |\mathbf h| \le M
\]
being  $\mathbf k =(k_1,\dots,k_d) $ a multi-index, $|\mathbf k| =k_1+\dots+k_{d_z} $ with  $\delta_{kh}$ the Kronecker delta function, and $\mathbb E[\cdot]$ the expectation with respect to $p(\z)$. The construction of the polynomial basis $\{\Phi_{\mathbf h}(\z)\}_{|\mathbf h|=0}^M$ depends on the distribution of the uncertainties and must be chosen in agreement with the Askey scheme \cite{X}. We summarize in Table \ref{tab:pol} several polynomials bases in connection with the law of a random component of $\z$. 

By assuming $s(\z,\as,t)$, $i(\z,\as,t)$ and $r(\z,\as,t)$ in $L^2(\Omega)$ we may approximate these terms through a generalized polynomial chaos expansion in the random space as follows
\be\label{eq:SIR_M}
\begin{split}
s(\z,\as,t) &\approx s^M(\z,\as,t) = \sum_{|\mathbf k|=0}^M \hat s_{\mathbf k}(\as,t) \Phi_{\mathbf k}(\z) \\
i(\z,\as,t) &\approx i^M(\z,\as,t) = \sum_{|\mathbf k|=0}^M \hat i_{\mathbf k}(\as,t) \Phi_{\mathbf k}(\z) \\
r(\z,\as,t) &\approx r^M(\z,\as,t) = \sum_{|\mathbf k|=0}^M \hat r_{\mathbf k}(\as,t) \Phi_{\mathbf k}(\z),
\end{split}
\ee
where the quantities $\hat s_{\mathbf k}$, $\hat i_{\mathbf k}$, $\hat r_{\mathbf k}$ are projections in the polynomial space
\[
\begin{split}
 \hat s_{\mathbf k}  = \mathbb E\left[ s(\cdot,\as,t) \Phi_{\mathbf k}(\cdot)\right], \quad
 \hat i_{\mathbf k}  = \mathbb E\left[ i(\cdot,\as,t) \Phi_{\mathbf k}(\cdot)\right],  \quad
 \hat r_{\mathbf k} = \mathbb E\left[ r(\cdot,\as,t) \Phi_{\mathbf k}(\cdot)\right]. 
 \end{split}
 \]

 \begin{table}[t]
\begin{center}
\begin{tabular}{l | l |  c}\label{tab:Ask}
Probability law & Expansion polynomials & Support \\ 
\hline\hline
& & \\[-.3cm]
Gaussian & Hermite    & $(-\infty,+\infty)$ \\
Uniform    & Legendre & $[a,b]$\\
Beta         & Jacobi     & $[a,b]$\\
Gamma    & Laguerre & $[0,+\infty)$\\
Poisson    & Charlier   & $\mathbb{N}$
\end{tabular}
\end{center}
\caption{The different polynomial expansions connected to the probability distribution of the random component $z_k$, $k = 1,\dots,d_{z}$.}
\label{tab:pol}
\end{table}
The sG formulation of system \eqref{eq:SIR_z} is obtained first by replacing the solutions with their stochastic polynomial expansions 
 \begin{equation}
\begin{split}
\dfrac{d}{dt} s^M(\z,\as,t)&= - s^M(\z,\as,t) \int_\LL \left(\beta(\z,\as,\as_*)-u^M(\as,\as_*)\right) i^M(\z,\as_*,t)d\as_*   \\
\dfrac{d}{dt} i^M(\z,\as,t) &= s^M(\z,\as,t) \int_\LL \left(\beta(\z,\as,\as_*)-u^M(\as,\as_*)\right)i^M(\z,\as_*,t)d\as_*  - \gamma(\z,\as) i^M(\z,\as,t) \\
\dfrac{d}{dt} r^M(\z,\as,t) &= \gamma(\z,\as) i^M(\z,\as,t),
\end{split}
\end{equation}
with
\begin{equation}
u^M(\as,\as_*) = \frac1{\kappa(\as,\as_*)}\mathcal{R}\left[s^M(\z,\as,t)i^M(\z,\as_*,t)\psi'(I^M(\z,t)) \right],
\end{equation}
\[
I^M(\z,t)=\sum_{|\mathbf k|=0}^M \hat I_{\mathbf k}(t)\Phi_{\mathbf k}(\z),\qquad  \hat I_{\mathbf k}(t) = \int_\LL i^M(\z,\as,t)d\as,
\]
and where $s^M$, $i^M$, $r^M$ are defined by \eqref{eq:SIR_M}. Then, thanks to the orthonormality of the polynomial basis of $\mathbb P_M$, multiplying by $\Phi_{\mathbf{m}}$, for all $|\mathbf m| \le M$, and taking the expectation with respect to $p(\z)$ we obtain the following coupled system of $M+1$ deterministic equations for the evolution of each projection
 \begin{equation}\displaystyle
\begin{cases}\vspace{0.5em}\displaystyle
\dfrac{d}{dt}\hat{s}_{\mathbf k}(\as,t)= - \sum_{|\mathbf m|=0}^M \textbf B_{\mathbf k \mathbf m}(\as,t) \hat{s}_{\mathbf m}(\as,t) \\ \vspace{0.5em}
\dfrac{d}{dt} \hat{i}_{\mathbf k}(\as,t) = \displaystyle \sum_{|\mathbf m|=0}^M \textbf B_{\mathbf k \mathbf m}(\as,t) \hat{s}_{\mathbf m}(\as,t) - \sum_{|\mathbf m|=0}^M \mathbf G_{\mathbf k \mathbf m} \hat{i}_{\mathbf m}(\as,t) \\
\dfrac{d}{dt} \hat{r}_{\mathbf k}(\as,t) = \displaystyle\sum_{|\mathbf m|=0}^M \mathbf G_{\mathbf k \mathbf m} \hat{i}_{\mathbf m}(\as,t)
\end{cases}
\end{equation}
where
\begin{equation}
\begin{split}
\textbf B_{\mathbf k \mathbf m} &=  \sum_{|\mathbf l|=0}^M \int_{\mathbb R^{d_z}} \left(\int_{\LL} \left(\beta(\z,\as,\as_*)-u^M(\as,\as_*)\right) \hat{i}_{\mathbf l}(\as_*,t) d\as_* \right)\Phi_{\mathbf{k}} (\z)\Phi_{\mathbf{m}}(\z)\Phi_{\mathbf{l}}(\z) p(\z)d\z\\
\mathbf G_{\mathbf k \mathbf m} &= \int_{\mathbb R^{d_z}} \gamma(\z,\as)\Phi_{\mathbf k}(\z)\Phi_{\mathbf m}(\z)p(\z)d\z.
\end{split}
\end{equation}
The above system is then integrated in time directly in the space of projections. We remark that statistical quantities of interest, such as expectation and variance of infected, can be recovered as
\[
\mathbb{E}[i^M(\cdot,a,t)]=\int_{\mathbb R^{d_z}} i^M(\z,a,t) p(\z)\,d\z = \sum_{|\mathbf k|=0}^M \hat{i}_{\mathbf k}(a,t) \mathbb{E}[\Phi_{\mathbf k}(\cdot)] =\hat{i}_{\mathbf 0}(a,t),
\]
whereas for the {variance} we get
\[
\begin{split}
{\rm Var}(i^M(\cdot,a,t)) &= \mathbb{E}[i^M(\cdot,a,t)^2]-\mathbb{E}[i^M(\cdot,a,t)]^2\\
& = \sum_{|\mathbf k|=0}^M \sum_{|\mathbf h|=0}^M \hat i_{\mathbf k} \hat i_{\mathbf h} \mathbb{E}[\Phi_{\mathbf k}(\cdot)\Phi_{\mathbf h}(\cdot)]-\hat i_{\mathbf 0}^2(a,t)=\sum_{|\mathbf k|=0}^M \hat i_{\mathbf k}^2(a,t)-\hat i_{\mathbf 0}^2(a,t).
\end{split}
\]
In all the simulations reported we used $M=10$ and a fourth order Runge-Kutta method for the time integration. 

\section{Social mixing functions}\label{app:data}

This last appendix is devoted to report the details of the social interaction functions that characterize the dynamics of social mixing. These characteristics are in fact crucial for a correct prediction of outcomes, especially in diseases transmitted by close contacts. Several large-scale studies have been designed in the last decade to determine relevant age-based models in social mixing. Without attempting to review the vast literature on this topic, we mention \cite{Betal,Metal,PCJ} and the references therein. 

The number of contacts per person generally shows considerable variability according to age, occupation, country and even day of the week, in relation to the social habits of the population. Nevertheless, some universal behaviors can be extracted which emerge as a function of specific social activities. Social mixing is highly age-related, which means that people usually tend to interact with other people of a similar age. Young people have a high rate of contact with adults aged around 30-39 and older people over 65, i.e. their parents and grandparents respectively. Contact rates are indeed very high at home and at school. On the other hand, professional mixing is weakly assortative by age and tends to be determined by uniform interactions, approximately between people from 25 and 65 years old.  

For these reasons we consider an interaction function determined by three main sub-functions that characterize the family, the school and the professional mixing.
{
Therefore, a stylized function approximating a realistic contact matrix can be written as follows
\begin{equation}
\beta_{\textrm{\rm social}}(a,a_*) =  C_\beta\sum_{j\in \mathcal A}\omega_j \beta_j(a,a_*),
\label{eq:betasocial}
\end{equation}
where the functions $\beta_j(a,a_*)$ take into account the different contact rates of people with ages $a$ and $a_*$ in relation to specific social activities of the type $\mathcal A = \{F,E,P\}$, where we identify family contacts with $F$, education and school contacts with $E$ and professional contacts with $P$, and the associated weights $\omega_F,\omega_E$ and $\omega_P$.
The particular structure of these social interaction matrices was determined empirically in \cite{Betal,PCJ}.
Here, according to these observations, we propose suitable mathematical functions that can be calibrated to reproduce empirical observations, in what follow we normalize $\beta_e=1$ and $a_{\rm max}=1$.}
\begin{figure}
\centering
\includegraphics[scale = 0.24]{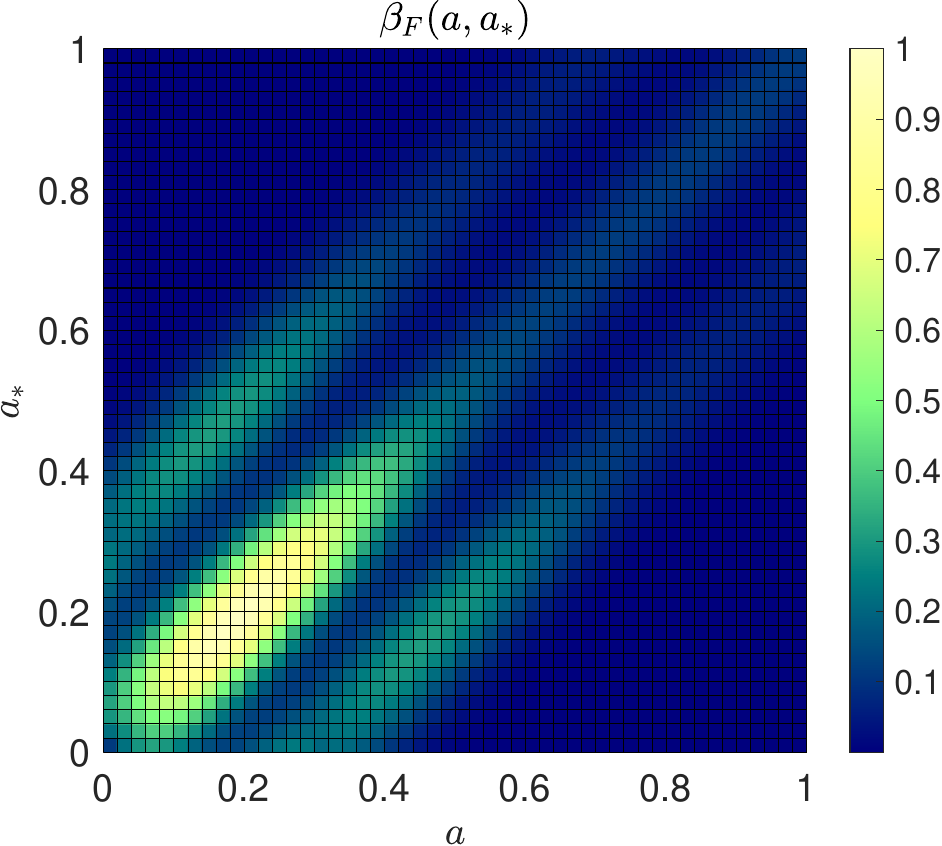}
\includegraphics[scale = 0.24]{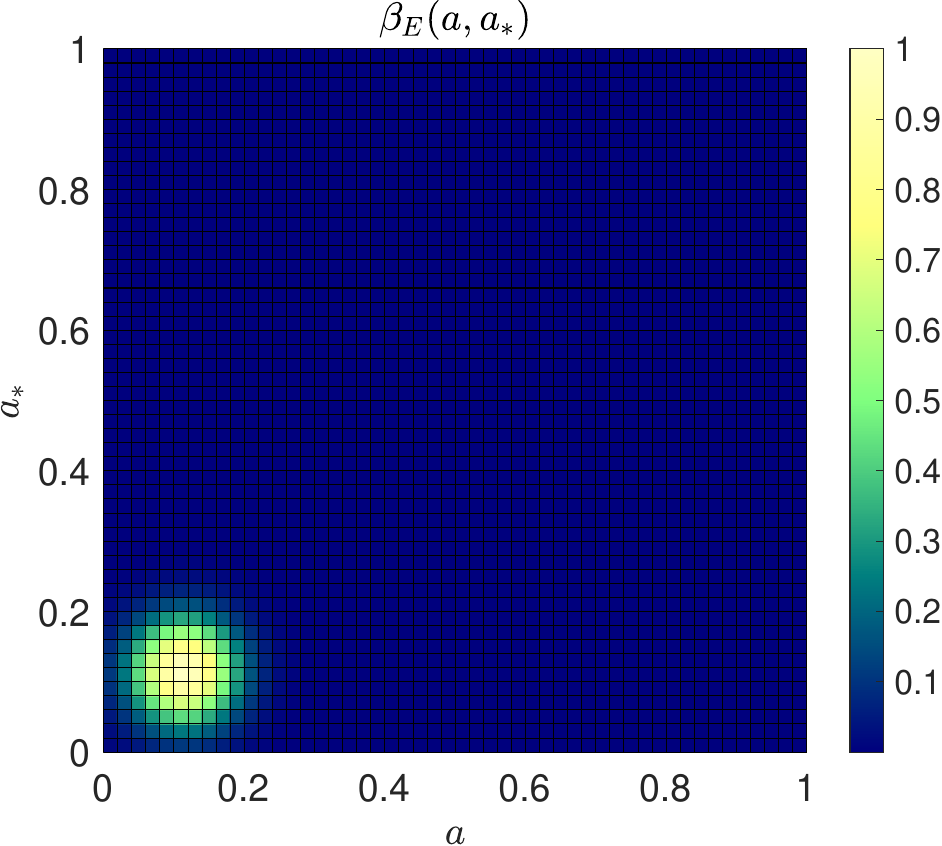}
\includegraphics[scale = 0.24]{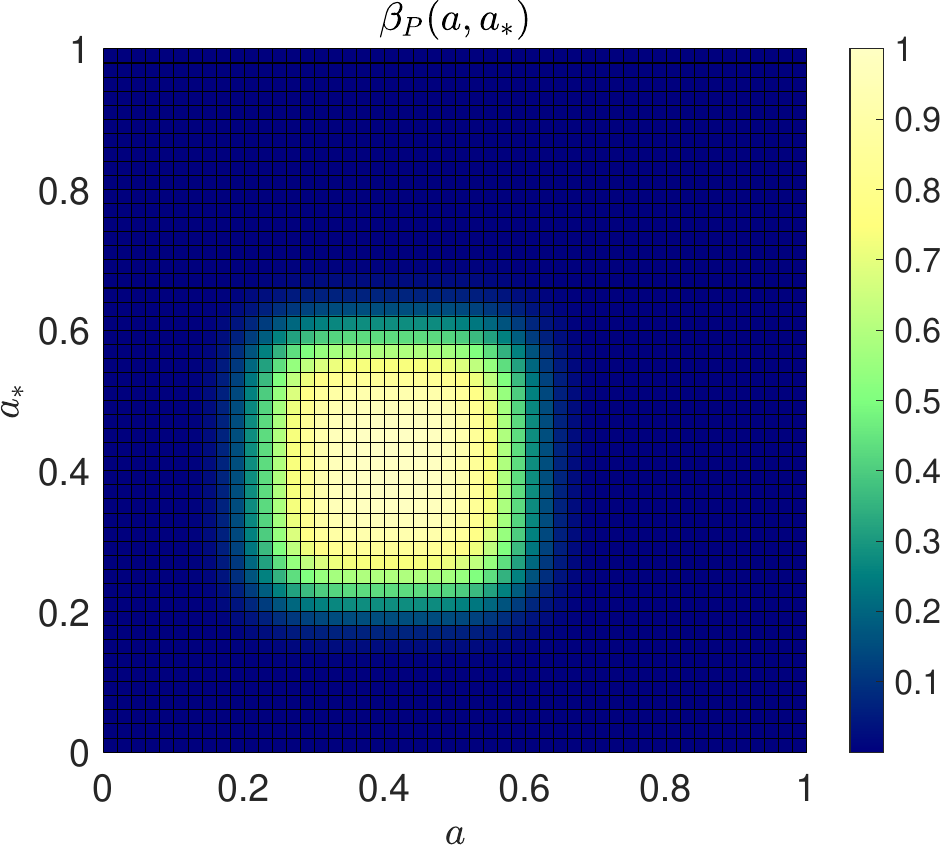}
\caption{From left to right, contour plot of the social interaction functions $\beta_F$, $\beta_E$ and $\beta_P$ taking into account the different contact rates of people with ages $a$ and $a_*$ in relation to specific social activities. The function $\beta_F$ characterizes the family contacts, $\beta_E$ the education and school contacts, and $\beta_P$ the professional contacts.}
\label{fg:matrices}
\end{figure}
{
In details, familiar contacts tend to concentrate on a three-band matrix with a peak around younger ages. This can be reproduced considering the function  
\[
\begin{split}
\tilde \beta_F(a,a_*) &= \dfrac{\lambda_{F,1}}{(1+(a-a_*)^2)^{\lambda_{F,2}}}+ \sqrt{(a^2 + a_*^2)} \exp{\left\{-\frac{1}{\sigma_F^2}\frac{a^2 + a_*^2}{(1+(a-a_*)^2)^{\lambda_{F,2}}}\right\}} . 
\end{split}
\]
Hence, we define the family interactions as
\[
\beta_F(a,a_*) =  \beta_0+\sum_{\ell = \{-\mu,0,\mu\}}  \frac{C_\ell}{2}\left[\tilde \beta_F(a+\mu,a_*)+\tilde\beta_F(a,a_*+\mu) \right],
\]
being $\mu>0$ the age shift at which family contacts occur and $C_{\pm \mu} =1/2$ and $C_0=1$. are reported in Table \ref{tab:data}
}
{
On the other hand, school and professional interactions are more age-specific and the corresponding matrices can be reproduced as follows
\[
\begin{split}
\beta_E(a,a_*) &=   \beta_0 +  \exp\left\{ - \dfrac{1}{\sigma_E^2} \left((a-\lambda_E)^2 + (a_*-\lambda_E)^2 \right) \right\} \\
\beta_P(a,a_*) &= \beta_0 + \exp\left	\{ -\dfrac{1 }{\sigma_P^2}  \left((a-\lambda_P)^4 + (a_*-\lambda_P)^4 \right) \right\},
\end{split}
\]
being $\lambda_E>0$ the average contact age at school, and $\lambda_P>0$ the average professional contact age. The coefficient $C_F,C_E,C_P>0$  measure the impact of the different contact function in the dynamics.
\,
\noindent
In Figures \ref{fg:matrices} and \ref{fg:matrice} we represent the three social interaction functions and the resulting global social interaction function $\beta_{\rm social}(a,a_*)$  assuming $\omega_F=\omega_E=\omega_P=1$. In Table \ref{tab:data} we report the choice of the parameters used in the simulations.
}
\begin{figure}\centering
\includegraphics[scale = 0.5]{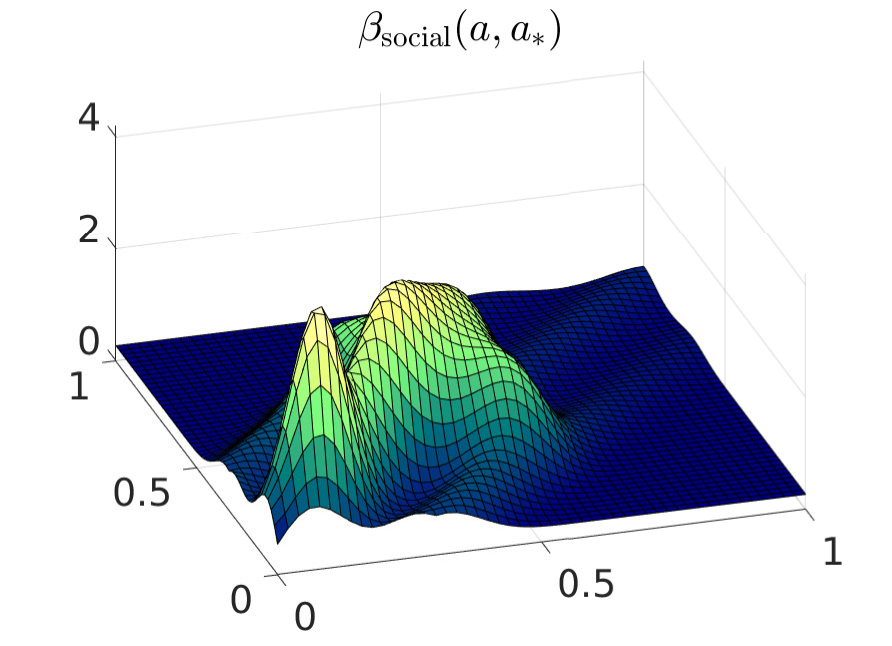}
\caption{Social interaction functions $\beta_{\rm social}=\beta_F+\beta_E+\beta_P$.}
\label{fg:matrice}
\end{figure}

\begin{table}[t]
\begin{center}
\begin{tabular}{c | c c c c c c c |c}
\hline
  Contact function   & \multicolumn{7}{c}{Parameters} \\ \hline\hline
 $\beta_F(a,a_*)$                           & $\beta_0$ & $\sigma^2_F$ & $\lambda_{F,1}$ & $\lambda_{F,2}$ & $\mu$ & $C_{\pm \mu}$ & $C_0$ & $\omega_F$\\ 
			\hline

\multirow{1}{*}{}      &  $0.04$& $0.125$ & $0.02$   & $100$  & $0.3$ & $0.5$ & 1 & 1 \\ 

\hline

	$\beta_E(a,a_*)$	          & $\beta_0$  & $\sigma_E^2$    & $\lambda_E$ & $$ & $$ & $$ & $$&$\omega_E$ \\ 
		          \hline
		        
\multirow{1}{*}{}   & $0.04$       & $0$              & 	$0.105$	 	& & & & & 0 \\

\hline

 	$\beta_P(a,a_*)$		  & $\beta_0$  & $\sigma_P^2$    & $\lambda_P$ & $$ & $$ & $$ & $$ &$\omega_P$\\ 
			  \hline
			  
\multirow{1}{*}{}   & $0.04$        & $0.00125$              & 	$0.4$	 	& & & & &0.5\\ \hline

\end{tabular}
\end{center}
\caption{The parameters defining the details of the interaction functions used in the simulations.}
\label{tab:data}
\end{table}

\begin{acknowledgements}
{This work has been written within the
activities of GNFM and GNCS groups of INdAM (National Institute of High Mathematics). G. Albi and L. Pareschi acknowledge the support of MIUR-PRIN Project 2017, No. 2017KKJP4X \emph{Innovative numerical methods for evolutionary partial differential equations and applications}. M. Zanella was partially supported by the MIUR Program (2018-2022), \emph{Dipartimenti di Eccellenza}, Department of Mathematics, University of Pavia.} 
\end{acknowledgements}

%
\section*{Conflict of interest}
The authors declare that they have no conflict of interest.



\end{document}